\documentclass[conference]{IEEEtran}
\usepackage{amsmath,amssymb}
\usepackage{graphicx}
\usepackage{cite}
\usepackage[hidelinks]{hyperref}

\newtheorem{theorem}{Theorem}
\newtheorem{proposition}{Proposition}
\newtheorem{lemma}{Lemma}
\newtheorem{corollary}{Corollary}
\newtheorem{definition}{Definition}
\newtheorem{assumption}{Assumption}
\newtheorem{remark}{Remark}

\begin{document}

\title{Beyond Provenance: The Economics and Governance of Personalized AI Memory\\}

\author{%
\IEEEauthorblockN{Jianan~Chen}
\IEEEauthorblockA{Purdue University\\
{West Lafayette, IN, USA}\\
chen3873@purdue.edu}
\and
\IEEEauthorblockN{Yufei~Qin}
\IEEEauthorblockA{Princeton University\\
Princeton, NJ, USA\\
{yq9378@princeton.edu}}
\and
\IEEEauthorblockN{Yaosen~Lin}
\IEEEauthorblockA{University of California San Diego\\
La Jolla, CA, USA\\
{yal003@ucsd.edu}}
}

\maketitle

\begin{abstract}
Personalized AI memories, defined as persistent decision-relevant states distilled by platforms from long-run human–AI interaction, are increasingly portable, provenance-certifiable, and tradable. Provenance answers the question of origin, but it does not settle who holds which rights over a memory, nor how wider deployment affects its future supply. We develop a unified model in which memory use is non-rival, while refresh, the maintenance of continued validity, is relationally co-produced by the represented individual and the platform. Even assuming perfect provenance and complete information, the model still yields a set of inherent limitations. First, there is an extraction ceiling: decision value is bounded by lived experience. Second, a rights-separation theorem shows that no budget-balanced ownership assignment can make both co-producers residual claimants, and that control rights and cash-flow rights operate on distinct margins. Third, a depletion result indicates that reach-expanding policies such as portability, interoperability, and competition raise current access but shrink the long-run memory stock. Fourth, we offer a formal comparison between memory ownership as it ought to be (subject-held) and as it is (platform-held); the welfare ranking reduces to two measurable quantities, both dominated by interior renewable-claim bundles. Finally, endogenous-quality and market-formation results show that uncompensated markets pollute their own raw material, that transparency and compensation are complements, and that the de facto platform-ownership regime is self-perpetuating. On the design side, the analysis points to separated provenance, performance, and permission credentials; compensation indexed to quality-deployment; and portability and salience, rather than nominal ownership, as the binding policy margins.
\end{abstract}

\begin{IEEEkeywords}
AI agents; personalized memory; markets for information; data governance;
property rights; consent; certification.
\end{IEEEkeywords}

\section{Introduction}\label{sec:intro}

Personalized AI memory is shifting from a product feature to a persistent asset. Technical work has increasingly converged on agent architectures that accumulate, consolidate, and retrieve long-term memory across sessions \cite{sumers2024coala,hu2025memory}, with recent surveys organizing the literature by memory forms, functional roles, and write/read dynamics \cite{hu2025memory}. One vision paper argues that the resulting artifacts should be managed as human-centric assets rather than as agent-side state \cite{pan2026memoryasset}. Marketplace infrastructure already exists: recent work binds a memory artifact to verifiable computational provenance and adds a marketplace layer for listing, transferring, and governing certified artifacts, where certification attests to authenticity, effort-backed production, and compatibility with the execution context \cite{li2026infrastructure}. Parallel efforts specify portability protocols for moving memory across heterogeneous agents \cite{portablememory2026}. Meanwhile, legal scholarship has begun to ask who owns a simulated identity built from a person's data \cite{jurcys2026twin}.

What makes this class of memory economically distinctive is not that the data pertains to a person, but what a buyer actually obtains. Purchasing a single fact saves one lookup; purchasing a representation of how a person weighs tradeoffs alters multiple future decisions at once. This concentration of decision relevance is both the source of value and the locus of exposure.

The urgency is not speculative. Regulators have already demonstrated the cost of improvised governance over asset classes derived from people. In the 23andMe bankruptcy, deletion rights and oversight for genetic data sourced from millions of individuals were negotiated by a coalition of state attorneys general only after the asset had entered the estate. Personalized memory follows the same trajectory, with two features that complicate governance: it is continuously re-created rather than collected once, and the person it describes continues to produce it.

The puzzle is that verified effort is not economic value. Provenance answers where a memory came from, but it does not answer whether the artifact faithfully represents its subject, whether it suits the buyer's task and host model, whether its extraction and sale were authorized, or whether its accuracy harms the person it describes. The point is sharper than a general call for more metadata: the strongest certification currently proposed attests that an artifact is \emph{effort-backed} \cite{li2026infrastructure}, yet verified production effort is precisely what does not imply decision value. A memory may be authentically and expensively produced from a real interaction history and still be worthless to a buyer; conversely, it may be highly valuable to a buyer while imposing uncompensated exposure on its subject.

Behind this observation lies a public debate that structures our inquiry. One side holds that memory \emph{ought} to belong to the person it represents; the legal case for treating an AI twin as its subject's property rests on the classical criteria of ownership:a defined asset, exclusive control, and demonstrable value \cite{jurcys2026twin}. The other side counters that memory \emph{is} held by the platforms that extract it, because subjects generally lack the technical and behavioral means to retain it. The debate is typically framed as a choice between these two answers. We show that they are, in fact, two poles of a single institutional space, each undermining distinct dimensions of the asset's production, and that the efficient arrangement is neither.

We model a persistent, decision-relevant state whose continued validity is co-produced by a natural person's feedback and a platform's engineering investment, and which can be redeployed at negligible copying cost. To isolate the structural implications, we grant the marketplace every condition its builders aim to secure: provenance is \emph{perfect} and information is \emph{complete}. Consequently, all failures derived below persist even under perfectly functioning infrastructure and cannot be dismissed as artifacts of imperfect verification or user confusion.

The paper rests on a single core mechanism: \textbf{use can be non-rival, while refresh is relationally supplied.} Copying a memory costs nothing and consumes no resources. Maintaining its truth, however, is costly: it requires continuing input from the person it describes, who bears exposure from every deployment and whose incentive to keep supplying that input is precisely what the rights bundle determines. The asset is a non-rival good with a rival maintenance technology, and nearly every result follows from taking that combination seriously.

The paper yields four main contributions.

\begin{itemize}
\item \emph{Conceptual.} We build on existing technical taxonomies of memory rather than proposing new ones, and supplement them with the economic dimensions that any market mechanism must accommodate: use-indexed fidelity, deployment specificity, person linkage, an entanglement index, and rights treated as a designable institution rather than as a definitional premise (Section~\ref{sec:model}).

\item \emph{Theoretical.} Five results, all obtained under perfect provenance and complete information. First, an \emph{extraction ceiling} bounds decision value by the lived interaction history, with equality only when extraction is a sufficient statistic (Proposition~\ref{prop:ceiling}); hence the supply of high-fidelity memory is ultimately constrained by experience that someone must live. Second, a \emph{rights-separation theorem} shows that no budget-balanced assignment of cash-flow rights renders both co-producers residual claimants, so neither individual nor platform ownership implements first-best refresh; moreover, control rights and cash-flow rights operate on distinct margins and cannot substitute for each other (Theorem~\ref{thm:rights}). Third, a \emph{depletion result} establishes that policies widening reach, namely portability, interoperability, and competition, raise current access but shrink next-period new memory and the long-run stock (Theorem~\ref{thm:headline}); nonrivalry of copying therefore does not imply monotone welfare gains from broader deployment. Fourth, an \emph{ought–is resolution} shows that subject ownership and platform ownership are twin corners of the rights space, each extinguishing one supply margin because the non-owning party supplies no refresh; their welfare ranking reduces to two measurable quantities: exposure intensity and the ratio of subject to platform refresh productivity, and both corners are dominated by interior bundles (Proposition~\ref{prop:corners}). The corner observed in practice is moreover self-perpetuating, as the platform's equilibrium play sustains precisely the low portability and low salience that make subject control infeasible. Fifth, a \item \emph{Computational.} We construct a synthetic memory economy in which every estimand has a known ground truth, enabling validation of the paper's estimators at an interior operating point rather than merely proposing them in the abstract (Section VIII). In practice, the computational exercise amended rather than merely illustrated the theory. The Gaussian entanglement frontier turned out to be an information-saturation limit rather than a separate identity; the second-best scope haircut exists only where the subject's refresh margin is live; and the productivity ratio underlying the coverage prescription is systematically attenuated by measurement in high-fidelity systems.

\item \emph{Institutional.} A failure-instrument matrix in which each row carries a theorem pointer, together with a design rule for which the model supplies three independent reasons: provenance, performance, and permission credentials must remain separate rather than fused into a single composite score (Section~\ref{sec:governance}).

\item \emph{Computational.} A synthetic memory economy in which every
estimand has a known ground truth, so that the paper's estimators can be
validated at an interior operating point rather than merely proposed
(Section~\ref{sec:empirics}). It amended the theory rather than illustrating
it. The Gaussian entanglement frontier turned out to be an
information-saturation limit rather than an identity, the second-best scope
haircut exists only where the subject's refresh margin is live, and the
productivity ratio behind the coverage prescription is systematically
attenuated by measurement in high-fidelity systems.
\end{itemize}

The rest of the paper proceeds as follows. Section~\ref{sec:related} surveys the relevant literatures by the module each supplies. Section~\ref{sec:model} presents the model; Section~\ref{sec:results} establishes the main results; Section~\ref{sec:endog} endogenizes fidelity and participation; Section~\ref{sec:optimal} solves for the optimal rights bundle; Section~\ref{sec:governance} translates the results into institutional instruments; Section~\ref{sec:empirics} measures, calibrates, and tests them; Section~\ref{sec:limits} discusses the limits; Section~\ref{sec:conclusion} concludes.

\section{Related Work}\label{sec:related}

We organize related work by the module each literature supplies to our model, and indicate where extrapolation from that work stops. Since our object lies at the intersection of several literatures that rarely interact, the point of this section is less to survey than to identify which piece of machinery comes from where, and which question none has yet posed.

On \emph{agent memory: architectures and taxonomies}, cognitive-architecture work distinguishes working from long-term memory and subdivides the latter into episodic, semantic, and procedural forms \cite{sumers2024coala}; recent surveys organize the field by three joint lenses: \emph{forms} (token-level, parametric, latent), \emph{functions} (factual, experiential, working), and \emph{dynamics} (formation, evolution, and retrieval) \cite{hu2025memory}. These provide our object's coordinates, and we adopt their boundaries: externally persisted, cross-session state, distinct from model weights, from a static retrieval corpus, and from transient context. Within those coordinates, our focal object occupies a single cell: it is cross-session and interaction-inferred in origin, externally persisted in separable form, semantic-preference or procedural-policy in function, and continuously consolidated and retrieved for future action.  ``Preference/policy memory'' is our economic label for that cell, not a proposed cognitive category; the personalized-agent literature supplies the existence proof that memory of this kind steers real actions taken on a user's behalf \cite{personaagent2026}. We take these architectures as given and do not propose new ones. What this body of work does not attempt, and what we do, is the market and governance analysis of the artifacts it produces.

On \emph{memory as an asset and verifiable marketplaces}, two strands supply the phenomenon we study. The first argues for moving from agent-centric to human-centric memory management and treats memory as an asset with a subject \cite{pan2026memoryasset}. It supplies the asset vision and the object, but it does not model price formation, private information, equilibrium, or welfare, and its single-owner framing sits uneasily with co-production; we therefore treat ownership as an outcome to be derived rather than a premise to be assumed. The second supplies working infrastructure: provenance binding, a marketplace layer for listing, transferring, and governing certified artifacts, where certification attests to authenticity, effort-backed production, and compatibility with the execution context \cite{li2026infrastructure}, alongside companion protocols for verified transfer across heterogeneous agents \cite{portablememory2026}. We retain their modules (lineage attestation, compatibility classes, and credential separation) and our identification strategy goes further by treating their engineering as \emph{perfectly successful}. The gap we exploit is that their certification targets production effort; much of what follows argues why verified effort and decision value diverge even when verification works perfectly. Portability work is more than adjacent: the parameter our model uses to represent exit is precisely what such protocols would move, and one of our results explains why no incumbent supplies it voluntarily.

On \emph{ownership of a simulated identity}, the normative pole of our organizing question has been argued directly: individuals should hold moral and legal ownership of the AI twins built from their data, on a human-centric property framework requiring a defined asset, exclusive control, and demonstrable value, with an accompanying policy program of private-by-default governance and portability standards \cite{jurcys2026twin}. That analysis is legal and conceptual, and offers no price formation or welfare analysis; that is where we begin. Our engagement with it is two-sided and, we believe, favorable to that literature on the point that matters most. On one hand, we find that subject ownership, even taken literally, is itself only second best: it internalizes exposure fully but monopolizes deployment and extinguishes the platform's engineering margin, in exactly the way that platform ownership extinguishes the subject's refresh margin. The normative target, accordingly, is a claim structure rather than an owner. On the other hand, that literature's \emph{portability} prescription is vindicated by our model, but for a reason its own framework does not supply: portability and salience, rather than the name on the title, turn out to be the binding margins, and no equilibrium actor provides them. Reform that reassigns ownership while leaving those two conditions in place changes neither the scale of deployment nor the flow of refresh.

In \emph{data economics}, nonrivalry and the welfare stakes of data control are the foundation we build on \cite{jonestonetti2020}, and the externalities that one person's transaction imposes on others provide the machinery behind our market-formation results \cite{acemoglu2022toomuch}. Two departures matter. First, we separate the \emph{subject} from the \emph{seller}; the person described need not be the party selling, and that separation is the wedge our rights results exploit. Second, against the thin-market presumption that the data literature usually carries, we find that identity-entangled memory markets tend to over-thicken at scale, not unconditionally, but because the positive thickness externality saturates while relational leakage does not, and because correlation is strongest precisely among similar individuals. Nonrivalry of copying is where our object starts rather than where it ends, because the maintenance technology is rival even when use is not.

From \emph{property rights and team incentives}, our central rights result builds on two classic frameworks from the property-rights and team-incentives literature: team moral hazard with budget balance \cite{holmstrom1982teams} and the property-rights approach to non-contractible investment \cite{hartmoore1990}. The memory-specific content fills the roles as follows: the two investors are a natural person and a platform; the investments are lived experience and engineering, respectively; and the asset itself is a representation of one of the investors. That last feature has no analogue in the classical setting, and it is precisely what gives exposure its negative sign in the subject's stake. As a consequence, expanding deployment drives the two margins in opposite directions.

From \emph{information, certification, and disclosure}, we draw on three strands: hidden quality \cite{akerlof1970}, the incentive of a monopoly certifier to coarsen its reports \cite{lizzeri1999}, and the disclosure-versus-expropriation trade-off in selling ideas \cite{antonyao1994,antonyao2002}. A lemons application to memory would be true without being memory-specific, and our main results concern distortions that survive symmetric information entirely. The one place this literature becomes load-bearing is certification, where the certifier's proper economic role is not to substitute for the subject's consent but to render fidelity contractible, thereby indexing compensation to measured quality.

Finally, among \emph{adjacent asset classes}, we draw three reference points: software economics for copying, versioning, and licensing; trade secrecy for value-through-exclusivity; and reputation theory supplies a living, transferable asset whose top holders rationally under-maintain \cite{tadelis1999,boardmeyertervehn2013}. The last of these is the closest existing analogue to our refresh margin, and it is also why freshness recertification outperforms a one-time badge. Prompt markets are the nearest neighbour, but serve as a boundary rather than a template: a prompt is a control input whose value need not be a function of history, whereas memory is by construction a statistic of realized interaction. None of these is a synonym for personalized memory, and none transfers wholesale. What is new is endogenous, persistent, person-linked \emph{state} that its subject continues to produce.

\section{Model}\label{sec:model}

\subsection{Problem statement and focal object}\label{sec:problem}

We do not ask what a memory file is worth, nor do we compress the research
question into ``should the individual or the platform own the memory.'' We ask:
for a persistent decision-relevant state whose \emph{continued validity} is
co-produced by a natural person's feedback and a platform's engineering
investment, and which can be redeployed at negligible copying cost: how should
deployment rights, cash-flow rights, update rights, and exit rights be
configured, and how do these configurations change the formation, the
traded scale, and the long-run stock of memory?

The model contains two independent wedges and one aggregation bridge:
(i) a \emph{refresh-incentive wedge}, because future quality is co-produced
but neither party's input is fully contractible; (ii) a \emph{deployment
wedge}, because digital states are reusable, but each deployment creates
buyer value while imposing use-specific exposure on the represented person;
and (iii) a \emph{reach bridge}, because market thickness enters only by
expanding license reach, transmitting the deployment wedge into refresh and
long-run stock. The paper's master mechanism is that \textbf{use can be
non-rival, while refresh is relationally supplied.}

The subject $i$'s latent preference/decision rule at time $t$ is
$\pi_{it}(a\mid x)$ for contexts $x\in\mathcal X$ and actions
$a\in\mathcal A$; interactions generate a history $H_{it}$; a platform's
extraction technology forms the state $s_{it}=\mathcal E_h(H_{it})$, which a
host $h$ loads through a behavioral kernel $K_h(a\mid x,s_{it})$. The focal
class is externally persisted, cross-session, \emph{identity-entangled}
preference/policy memory used to predict or act on behalf of a specific
natural person.

\paragraph{Scope.} Most of what this paper excludes is excluded by
Definition~\ref{def:artifact} rather than by separate stipulation, which is
one reason to state the object formally. \emph{Persistence} excludes
transient working memory and in-context state; \emph{Separability} excludes
parametric memory inseparable from model weights, and is therefore also the
technical precondition for any of this to be tradable at all; \emph{Person
linkage} $L_{i}(m)>0$ excludes pure public facts, which have subjects only in
a trivial sense. Two further exclusions are deliberate rather than implied.
Raw interaction archives that cannot be meaningfully separated into artifacts
are outside the unit of trade of Definition~\ref{def:license}. Memory
describing two or more subjects raises an authorization problem that one
person's consent cannot solve; Theorem~\ref{thm:formation} supplies the
machinery for its externality, but the full rights allocation is companion-paper
scope. None of these is excluded as unimportant.

\paragraph{Reader's map.} Section~\ref{sec:definitions} defines the objects;
Sections~\ref{sec:assumptions}--\ref{sec:environment} state assumptions and
the economic environment; Section~\ref{sec:refresh-subgame} characterizes the
refresh subgame used throughout. Section~\ref{sec:results} develops the main
results; Section~\ref{sec:endog} endogenizes fidelity (pollution) and participation
(formation); Section~\ref{sec:governance} translates the results into instruments; Section~\ref{sec:empirics}
gives the measurement interface.

\subsection{Formal definitions}\label{sec:definitions}

\begin{definition}[State decomposition]\label{def:state}
The decision-relevant state of person $i$ at $t$ is
$\omega_{it}=(\omega^{c}_{t},\,\omega^{g}_{k(i)},\,\omega^{p}_{it})$, where
$\omega^{c}$ is a \emph{common} component, $\omega^{g}$ a \emph{class}
component for $i$'s compatibility class $k(i)$, and $\omega^{p}$ a
\emph{personal} component following
\begin{equation}\label{eq:drift}
\omega^{p}_{it}=\rho\,\omega^{p}_{i,t-1}+\varepsilon_{it},
\qquad \varepsilon_{it}\sim\mathcal N(0,\sigma^{2}_{\varepsilon}),\;
\rho\in(0,1),
\end{equation}
with within-class cross-subject correlation
$\mathrm{Corr}(\omega^{p}_{it},\omega^{p}_{jt})=\varrho\ge 0$.
\end{definition}

\begin{remark}\label{rem:coordinates}
One decomposition yields all four coordinates that would otherwise be
introduced separately: $(\omega^{c},\omega^{g})$ govern transferable value;
$\omega^{p}$ governs personalization value and subject exposure; the weight of
$\omega^{g}$ relative to $\omega^{c}$ is deployment specificity; $\rho$ is
staleness; $\varrho$ is the relational externality.
\end{remark}

\begin{definition}[Memory artifact]\label{def:artifact}
Let $h^{t}_{i}=(x_{is},p_{is},a_{is},y_{is})_{s\le t}$ be the interaction
history generating the $\sigma$-algebra $\mathcal F^{h}_{it}$. An object
$m=(\varphi,\Pi,\ell)$ is a \emph{personalized memory artifact} of subject $i$
iff:
\begin{itemize}
\item (\emph{Derivation}) $\varphi=\mathcal E(h^{t}_{i})$ for an
$\mathcal F^{h}_{it}$-measurable extraction operator $\mathcal E$: memory
is a statistic of realized interaction, which separates it from a prompt, a
control input $p_{t}$ whose value need not be a function of history;
\item (\emph{Persistence}) $\varphi$ remains retrievable after the generating
episode closes (excluding working memory / transient context);
\item (\emph{Separability}) there is a decoder $\mathsf{Ret}$ such that
$\pi_{\theta'}(\cdot\mid x,\mathsf{Ret}(\varphi,x))$ is well defined for hosts
$\theta'\neq\theta$ (excluding weight-inseparable parametric memory; this is
the technical precondition of tradability);
\item (\emph{Efficacy}) $\exists x$ with
$\pi_{\theta}(\cdot\mid x,\mathsf{Ret}(\varphi,x))\neq
\pi_{\theta}(\cdot\mid x,\mathsf{Ret}(\varnothing,x))$;
\item (\emph{Person linkage}) $L_{i}(m):=I(\omega^{p}_{i\cdot};\varphi)>0$,
where $I$ denotes mutual information.
\end{itemize}
$\Pi(m)$ is the lineage DAG recording the generative structure
$(h^{t}_{i},\mathcal E)$; a \emph{provenance credential} $\mathsf{Cert}_{\Pi}$
is a proof system about $\Pi$; \emph{perfect provenance} is the strongest case
in which $\mathsf{Cert}_{\Pi}$ reveals the full truth of
$(h^{t}_{i},\mathcal E)$. $\ell$ is the scoped license defined below.
\end{definition}

\begin{definition}[Use-indexed fidelity]\label{def:fidelity}
For a use $u=(A_{u},\mathcal L_{u},\mu^{u}_{0})$ (action set, loss, prior),
write $V_{u}(\cdot):=\mathbb{E}[\mathcal L_{u}(a^{*}(\cdot),\omega)]$ for the
expected loss of the optimal action under the stated information, with
$V_{u}(\mu^{u}_{0}\mid\varphi)$ its posterior counterpart, and assume the
nondegeneracy condition $V_{u}(\mu^{u}_{0})>V_{u}(\omega)$ (the prior is not
already first-best). Define
\begin{equation}\label{eq:qu}
q_{u}(m):=\frac{V_{u}(\mu^{u}_{0})-V_{u}(\mu^{u}_{0}\mid\varphi)}
{V_{u}(\mu^{u}_{0})-V_{u}(\omega)}\in[0,1],
\end{equation}
the normalized decision value of $m$ in use $u$: $0$ = no better than the
prior, $1$ = as good as knowing $\omega$. Fidelity is thus a \emph{derived,
measurable, use-dependent} quantity, not a free scalar. For a buyer $b$ with
use $u_{b}$ we write $v_{bm}:=q_{u_{b}}(m)$ for $b$'s normalized value of
artifact $m$.
\end{definition}

\begin{definition}[Deployment specificity]\label{def:spec}
For a family of uses $\mathcal U'$ (across buyers, tasks, hosts),
$\sigma(m;\mathcal U'):=1-\min_{u\in\mathcal U'}q_{u}(m)\,/\,
\max_{u\in\mathcal U'}q_{u}(m)\in[0,1]$.
\end{definition}

\begin{lemma}[No unconditional global quality score]\label{lem:no-score}
If there exist buyers $b,b'$ and artifacts $j,k$ with $v_{bj}>v_{bk}$ and
$v_{b'j}<v_{b'k}$, then no buyer-independent scalar score $S(m)$ ranks $j,k$
correctly for both buyers.
\end{lemma}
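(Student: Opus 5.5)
The argument is a short contradiction, and the only real work is fixing what "ranks correctly" means. I will read it as order-preservation: a scalar score $S$ ranks $j,k$ correctly for buyer $b$ iff $\operatorname{sign}(S(j)-S(k))=\operatorname{sign}(v_{bj}-v_{bk})$. Equivalently, for strict preferences, $S(j)>S(k)\iff v_{bj}>v_{bk}$.

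\emph{Buyer-independent} means that $S$ is a function of the artifact $m$ alone, so $S(j)$ and $S(k)$ are the same two real numbers whichever buyer is consulted. That is the whole content of the hypothesis and the place where the conclusion bites.

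The steps, in order, are as follows.

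First, suppose for contradiction that such an $S$ exists and ranks $j,k$ correctly for both $b$ and $b'$.

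Second, from $v_{bj}>v_{bk}$ and correctness for $b$, obtain $S(j)>S(k)$.

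Third, from $v_{b'j}<v_{b'k}$ and correctness for $b'$, obtain $S(j)<S(k)$.

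Fourth, observe that these two strict inequalities between the same pair of reals contradict the trichotomy of the order on $\mathbb R$. Hence no such $S$ exists.

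If "correct" is read weakly, meaning only that $v_{bj}>v_{bk}$ implies $S(j)\ge S(k)$, the argument needs the strict version. I would therefore state explicitly that correctness means strict order-preservation on strict comparisons. Under the weak reading, a constant $S$ is a trivial counterexample, and this is the one point where the lemma could be misread.

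There is no genuine obstacle here; the substance lies in showing the hypothesis is not vacuous. I would add a remark connecting it to Definition~\ref{def:fidelity}. Because $v_{bm}=q_{u_b}(m)$ depends on the use $u_b=(A_u,\mathcal L_u,\mu^u_0)$, preference reversals arise naturally. For instance, take two artifacts that are informative about different coordinates of $\omega$ in Definition~\ref{def:state}: $j$ is informative about $\omega^{g}$ and $k$ about $\omega^{p}$. Then pair them with buyers whose losses load on different coordinates. In that case $q_{u_b}(j)>q_{u_b}(k)$ while $q_{u_{b'}}(j)<q_{u_{b'}}(k)$, and the hypothesis holds. This also ties the lemma to deployment specificity $\sigma(m;\mathcal U')>0$ in Definition~\ref{def:spec}.

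Finally, I would note the corollary the paper presumably wants. Any composite quality credential must be indexed by use, or equivalently by compatibility class, rather than being a single global scalar.
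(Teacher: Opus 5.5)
Your proof is correct and is essentially the paper's own argument: because $S$ is buyer-independent, any strict order $S(j)\gtrless S(k)$ contradicts one of the two buyers, and a tie $S(j)=S(k)$ ranks correctly for neither, which is exactly the strict order-preservation reading of ``ranks correctly'' that you state explicitly. Your closing remark on non-vacuousness matches the paper's note that, under Definition~\ref{def:fidelity}, crossing rankings are generic once deployment specificity is positive.
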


\begin{IEEEproof}
Immediate: any strict ranking by $S$ contradicts one buyer; equality ranks
neither. With Definition~\ref{def:fidelity}, crossing rankings are generic
whenever $\sigma>0$.
\end{IEEEproof}

\begin{corollary}[Staleness is derived]\label{cor:staleness}
Under Definition~\ref{def:state}, for memory generated at $t$ and used at
$t+k$, the personal component of $q_{u}$ decays by the factor $\rho^{2k}$
(exactly, in the Gaussian case). The depreciation rate is an implication of
preference drift, not an assumption.
\end{corollary}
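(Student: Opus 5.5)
The plan is to specialize Definition~\ref{def:fidelity} to the Gaussian--quadratic case, which is the case in which the statement claims exactness. I take the use $u$ to have quadratic loss $\mathcal L_u(a,\omega)=(a-\omega)^2$ on the relevant personal coordinate. Under this loss the optimal action is the posterior mean and $V_u(\cdot)$ is the posterior variance. The numerator of \eqref{eq:qu} is then the variance reduction $\mathrm{Var}(\omega)-\mathrm{Var}(\omega\mid\varphi)$, and the denominator is the prior variance $\mathrm{Var}(\omega)$. I restrict to the personal component $\omega^{p}$ of Definition~\ref{def:state}, since the common and class components do not drift under \eqref{eq:drift}; the claim concerns only the personal part of $q_u$.

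\textbf{Step 1: the Gaussian fidelity identity.} Since $\varphi$ is formed at $t$ by a Gaussian extraction, $(\omega^{p}_{it},\varphi)$ are jointly Gaussian. The variance reduction for predicting $\omega^{p}_{it}$ is then $\mathrm{Cov}(\omega^{p}_{it},\varphi)\,\mathrm{Var}(\varphi)^{-1}\mathrm{Cov}(\varphi,\omega^{p}_{it})$. In the scalar case this equals $\mathrm{Cov}^2/\mathrm{Var}(\varphi)$. Dividing by the stationary prior variance $\sigma^2_\omega=\sigma^2_\varepsilon/(1-\rho^2)$ gives $q_u=R^2$, the squared correlation between $\varphi$ and $\omega^{p}_{it}$.

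\textbf{Step 2: propagation through the AR(1) drift.} Iterating \eqref{eq:drift} gives $\omega^{p}_{i,t+k}=\rho^k\omega^{p}_{it}+\sum_{j=1}^{k}\rho^{k-j}\varepsilon_{i,t+j}$. The innovations after $t$ are independent of $\mathcal F^{h}_{it}$, and hence of $\varphi$. It follows that $\mathrm{Cov}(\omega^{p}_{i,t+k},\varphi)=\rho^k\,\mathrm{Cov}(\omega^{p}_{it},\varphi)$. Under stationarity the prior variance at $t+k$ is still $\sigma^2_\omega$. Substituting into Step 1 gives $q_u^{(t+k)}=\rho^{2k}q_u^{(t)}$, which is the claimed decay factor.

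\textbf{Step 3: non-Gaussian case and interpretation.} Outside the Gaussian case, the same covariance calculation still holds for linear estimators. The decay factor is therefore exact for the best linear predictor, but only approximate for $V_u$ in general. This matches the parenthetical qualifier in the statement. The depreciation rate $1-\rho^{2}$ per period is then pinned down by the drift parameter alone, which is the sense in which staleness is derived rather than assumed.

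The main obstacle is the stationarity and normalization issue in Step 2. If the prior $\mu_0^u$ at $t+k$ is not stationary, the denominator of \eqref{eq:qu} changes with $k$, and the ratio picks up an extra factor $\mathrm{Var}(\omega_{t})/\mathrm{Var}(\omega_{t+k})$. I would first state stationarity of $\omega^{p}$ explicitly as the Gaussian benchmark. A second point to handle is that multidimensional or use-specific losses with a weighting matrix $W$ preserve the factor, because $\rho$ is a scalar and scales the covariance uniformly. I would also flag that mixing in the non-drifting components $\omega^{c},\omega^{g}$ is why only the personal component of $q_u$ decays at exactly this rate.
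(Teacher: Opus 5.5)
Your proposal is correct and is essentially the argument the paper relies on: the paper gives no standalone proof and treats the corollary as immediate in the Gaussian--quadratic case (Section~\ref{sec:testbed} calls it an identity check). There $q_u$ is a variance-reduction ratio against a stationary prior, and AR(1) propagation scales the explained variance by $\rho^{2k}$, which is exactly your Steps 1--2.

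Your Step 3 is, if anything, too cautious. Under quadratic loss the numerator is $\mathrm{Var}(\mathbb{E}[\omega^{p}\mid\varphi])$ and $\mathbb{E}[\omega^{p}_{i,t+k}\mid\varphi]=\rho^{k}\,\mathbb{E}[\omega^{p}_{it}\mid\varphi]$, so the factor is exact without Gaussianity; it is non-quadratic losses that break it. Also, in the paper $\omega^{c}$ and $\omega^{g}$ do drift, just at their own rates (Remark~\ref{rem:delta-bridge}), so they are not non-drifting components as your last sentence says.
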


\begin{remark}[Bridge to the reduced form]\label{rem:delta-bridge}
In the law of motion \eqref{eq:lom} below, the personal-component contribution
to depreciation is $\delta_{p}=1-\rho^{2}$; the reduced-form $\delta$
additionally absorbs class/common drift and technological obsolescence, and is
therefore \emph{class-specific and calibratable}, not free.
\end{remark}

\begin{definition}[Use trichotomy and exposure]\label{def:exposure}
Partition uses by their effect on subject $i$:
$\mathcal U=\mathcal U^{\mathrm{self}}\cup\mathcal U^{\mathrm{other}}
\cup\mathcal U^{\mathrm{adv}}$ (acting for $i$; general lessons for others;
extraction, imitation, or discrimination against $i$). Let
$\mathrm{adv}:\mathcal U\to\mathcal U^{\mathrm{adv}}$ map each use to the
adversarial decision problem sharing its sufficient statistics (the best
inference/imitation attack implementable from what the deployment reveals).
Subject exposure from deploying $m$ in use $u$ is
\begin{equation}\label{eq:harm-micro}
H_{i}(m,u)=\lambda_{i}\cdot q_{\mathrm{adv}(u)}(m),
\end{equation}
i.e., harm is proportional to the decision value the deployment hands to an
adversarial decision problem, with private sensitivity $\lambda_{i}$. That
harm rises with fidelity is therefore not assumed: it holds exactly when the
buyer's and the adversary's decision problems share sufficient statistics.
\end{definition}

\begin{definition}[Entanglement coefficient]\label{def:kappa}
Let $u_{B}$ be the buyer use, $u_{S}$ the adversarial use, and $\mathcal G$
the set of garblings of $\varphi$. Define
\begin{equation}\label{eq:kappa}
\kappa^{\star}(m;u_{B},u_{S})
:=1-\frac{\sup_{z\in\mathcal G_{0}}\,q_{u_{B}}(z(\varphi))}
{q_{u_{B}}(\varphi)}\in[0,1],
\end{equation}
where $\mathcal G_{0}:=\{z\in\mathcal G:\ q_{u_{S}}(z(\varphi))=0\}$ is
the set of zero-exposure garblings, with the convention $\sup\varnothing:=0$ (if no zero-exposure garbling exists,
$\kappa^{\star}=1$). $\kappa^{\star}$ is the share of buyer value that
\emph{cannot} be obtained without subject exposure. $\kappa^{\star}=0$: a
zero-exposure, full-value transformation exists (anonymization, aggregation,
synthetic personas have full purchase). $\kappa^{\star}=1$: any positive buyer
value entails positive exposure. \emph{No privacy technology can improve
the frontier; only institutions remain.} This is the formal content of
``identity-entangled memory,'' a continuous and measurable index, orthogonal
to $\sigma$.
\end{definition}

\begin{remark}[Robust version]\label{rem:kappa-eps}
The exact-zero constraint is knife-edge; the robust object is
$\kappa^{\star}_{\epsilon}$ with constraint
$q_{u_{S}}(z(\varphi))\le\epsilon$, and
$\kappa^{\star}=\lim_{\epsilon\downarrow 0}\kappa^{\star}_{\epsilon}$ whenever
the value of the garbling program is continuous at $\epsilon=0$; empirical
work should report $\kappa^{\star}_{\epsilon}$ at stated $\epsilon$.
\end{remark}

\begin{lemma}[$\kappa^{\star}$ versus identity-replacement estimates]\label{lem:kappa-iota}
Let $\iota(m)=1-V_{b}(m^{-s})/V_{b}(m)$ be the identity-replacement
counterfactual (value retained after naively erasing the subject). Then
$\kappa^{\star}(m)\le\iota(m)$, with equality iff naive erasure is the optimal
zero-exposure garbling.
\end{lemma}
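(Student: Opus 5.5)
The plan is to treat naive erasure as one particular garbling and let the supremum in \eqref{eq:kappa} do the work. Write $z_{0}$ for the garbling that erases the subject, so that $z_{0}(\varphi)=m^{-s}$. I read $V_{b}(\cdot)$ as the buyer's normalized decision value, so that $V_{b}(m)=q_{u_{B}}(\varphi)$ and $V_{b}(m^{-s})=q_{u_{B}}(z_{0}(\varphi))$. If $V_b$ were stated in loss terms instead, I would first pass to value terms through Definition~\ref{def:fidelity}; the normalization in \eqref{eq:qu} is affine, and the common denominator $V_{u}(\mu_{0})-V_{u}(\omega)$ cancels in the ratio.

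Before starting, I need an interpretive premise that makes $\iota$ an identity-replacement estimate at all: naive erasure must be a zero-exposure garbling, $q_{u_{S}}(z_{0}(\varphi))=0$, and hence $z_{0}\in\mathcal G_{0}$. This is the only substantive step and, I expect, the main obstacle. The statement implicitly assumes it by calling erasure a counterfactual "without the subject." If erasure leaves residual exposure, then $z_0\notin\mathcal G_0$ and the inequality can fail. In that case I would state the lemma under the $\epsilon$-robust version of Remark~\ref{rem:kappa-eps}, taking $\epsilon\ge q_{u_S}(z_0(\varphi))$, so that $z_0\in\mathcal G_\epsilon$ and the same argument applies to $\kappa^\star_\epsilon$.

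With $z_0\in\mathcal G_0$ in hand, the inequality is immediate:
\[
\sup_{z\in\mathcal G_{0}}q_{u_{B}}(z(\varphi))\;\ge\;q_{u_{B}}(z_{0}(\varphi)).
\]
Since $q_{u_B}(\varphi)>0$ (needed for either index to be defined), dividing by it and subtracting from one gives $\kappa^{\star}(m)\le\iota(m)$. The convention $\sup\varnothing=0$ is never triggered, because $\mathcal G_{0}$ contains $z_0$ and is therefore nonempty.

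For equality, $\kappa^\star=\iota$ holds iff
\[
\sup_{\mathcal G_0}q_{u_B}(z(\varphi))=q_{u_B}(z_0(\varphi)).
\]
That is, $z_0$ attains the supremum, which is the meaning of "naive erasure is the optimal zero-exposure garbling." For the "only if" direction, note that if $z_0$ is not optimal, some $z\in\mathcal G_0$ strictly exceeds it, so the inequality is strict. One remark should go into the write-up: if the supremum is not attained, "optimal" must be read as "achieves the supremum," and then equality is exactly the statement that $z_0$ does so.
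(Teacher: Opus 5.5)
Your proposal is correct and is the paper's argument: naive erasure $z_{0}$ is a feasible element of $\mathcal G_{0}$, so the supremum in \eqref{eq:kappa} weakly dominates $q_{u_{B}}(z_{0}(\varphi))$. This gives $\kappa^{\star}\le\iota$, with equality exactly when $z_{0}$ attains the supremum. Stating the membership premise explicitly improves on the paper's one-line proof, which takes it for granted. Note that membership has two parts: zero exposure, which you flag, and being a genuine garbling of $\varphi$ at all, which you build into the definition of $z_{0}$. It is the second part that the testbed in Section~\ref{sec:id-results} shows can fail, when erasure is implemented by zeroing rows and columns of the information matrix rather than taking the Schur complement.
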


\begin{IEEEproof}
$m^{-s}$ is a feasible element of the constraint set defining
$\kappa^{\star}$; the supremum weakly improves on it.
\end{IEEEproof}

\begin{remark}\label{rem:iota-upper}
Empirical identity-replacement audits therefore \emph{overestimate}
entanglement: smarter garblings may preserve more value. Report $\iota$ as an
upper-bound estimator of the theoretical object $\kappa^{\star}$.
\end{remark}

\begin{definition}[Scoped deployment license]\label{def:license}
The unit of trade is not ``a person's memory'' but
$\ell=(m,b,u,h,[t_{0},t_{1}],k,\text{access mode},\text{audit},
\text{revocation rule})$: a grant to recipient $b$ for specified uses, hosts,
horizon, call budget, and audit/revocation terms. What is non-rival is the
physical occupation of the digital original by one deployment; authorization,
refresh input, compliance, inference cost, and subject exposure are not
thereby zero.
\end{definition}

\begin{definition}[Rights bundle / institution]\label{def:bundle}
An institution is not an owner dummy but
$g=(c,\alpha,\tau,\upsilon,\phi,\rho_{x},a_{d})$: deployment control
$c\in\{P,S,J\}$; subject revenue share $\alpha\in[0,1]$; enforceable
per-quality-deployment compensation $\tau$; update rights $\upsilon$;
functional portability $\phi\in[0,1]$; revocation/recall rules $\rho_{x}$;
audit probability $a_{d}$. ``Individual ownership'' and ``platform
ownership'' are corner combinations. The core theorems endogenize
$(c,\alpha,\tau)$; the remaining components enter feasible sets, outside
options, or extensions.
\end{definition}

\begin{definition}[Prompt--memory boundary]\label{def:boundary}
A current-round prompt $p_{t}$ that only enters the control input is not
memory under Definition~\ref{def:artifact}; if
$s_{t+1}=\mathsf{Upd}(s_{t},p_{t})$ and the state is later retrieved, the
persisted content can become a representation of memory. The same string can
play both roles; the asset object is fixed by causal role and rights
structure, not by file format.
\end{definition}

\begin{definition}[Memory stock and market thickness]\label{def:stock}
With $M_{t}$ the measure of supplying subjects and $\bar\psi_{t}$ their mean
precision contribution, the market-available precision of the common component
is $\Psi_{t}=(\sigma^{2}_{0}+1/(M_{t}\bar\psi_{t}))^{-1}$, increasing,
concave, and bounded by $1/\sigma^{2}_{0}$, where $\sigma^{2}_{0}>0$ is common
noise that aggregation cannot remove. $M_{t}$ is the thickness object entering
the reach bridge $n=N(M)$, $N'\ge0$, used in Sections~\ref{sec:results} and~\ref{sec:endog}.
\end{definition}

\subsection{Assumptions}\label{sec:assumptions}

\begin{assumption}[Perfect provenance benchmark]\label{as:provenance}
All logs, model versions, and generation chains are authentic and certified:
$\mathsf{Cert}_{\Pi}$ is perfect. \emph{(Identification strategy: any failure
below is not about fabrication.)}
\end{assumption}

\begin{assumption}[Complete-information benchmark]\label{as:completeinfo}
Subject and platform know uses and parameters; each buyer knows her own type
and fit. \emph{(Headline results do not rely on consumer error or lemons.)}
\end{assumption}

\begin{assumption}[Joint inputs; effort noncontractibility]\label{as:joint}
Refresh output is co-produced by subject effort $e$ and platform effort $x$;
efforts are noncontractible item-by-item, while realized output is ex post
verifiable \cite{holmstrom1982teams,hartmoore1990}.
\end{assumption}

\begin{assumption}[Functional forms]\label{as:forms}
Quadratic effort costs; installed capabilities $\eta_{S},\eta_{P}>0$;
complementarity $\gamma\ge0$ bounded by the stability condition of
Lemma~\ref{lem:refresh}; buyer types $\theta\sim U[0,1]$ (a normalization;
general regular inverse demand preserves signs).
\end{assumption}

\begin{assumption}[Linear exposure benchmark]\label{as:exposure}
Per unit of quality and per deployment, monetizable exposure is $\lambda\ge0$.
\emph{(Microfoundation: $\lambda$ aggregates
Definition~\ref{def:exposure}'s $\lambda_{i}\,q_{\mathrm{adv}(u)}$ over the
served use distribution of the focal class, and is therefore
use-class-specific. The misrepresentation variant
$\lambda_{P}f+\lambda_{M}(1-f)$ is treated as robustness.)}
\end{assumption}

\begin{assumption}[Scope commitment; quasilinear TU benchmark]\label{as:commitment}
The rights bundle can be committed before refresh efforts; utilities are
quasilinear. \emph{(Monetary IR does not exhaust consent; inalienable-use
constraints enter as hard constraints, Section~\ref{sec:governance}.)}
\end{assumption}

\subsection{Environment}\label{sec:environment}

\paragraph{Players.}
A minimal production unit: subject $S$; platform $P$; a continuum of buyers
with types $\theta\sim U[0,1]$; a social planner as welfare benchmark.

\paragraph{Downstream demand.}
A license is worth $v_{\theta}(Q)=\theta Q$ to type $\theta$, where $Q>0$ is
current effective quality (average focal fit normalized in; in the language of
Section~\ref{sec:definitions}, $Q$ is the class-average use-indexed fidelity
$q_{u}$ delivered to served uses, scaled by the stock). Serving the top
$n\in[0,1]$ share at a uniform price gives $p(Q,n)=Q(1-n)$ and, per unit of
$Q$:
\begin{equation}\label{eq:RCB}
\begin{split}
&R(n)=n(1-n),\qquad C(n)=\tfrac{n^{2}}{2},\\
&B(n)=R(n)+C(n)=n-\tfrac{n^{2}}{2},
\end{split}
\end{equation}
extractable license revenue, buyer surplus, and gross use value. Copying does
not consume $Q$, but the subject bears $H(Q,n)=\lambda Q n$.

\paragraph{Refresh technology.}
The effective stock evolves as
\begin{equation}\label{eq:lom}
Q_{t+1}=(1-\delta)Q_{t}+Y(e_{t},x_{t}),\qquad \delta\in(0,1],
\end{equation}
\begin{equation}\label{eq:Y}
\begin{split}
&Y(e,x)=\eta_{S}e+\eta_{P}x+\gamma ex,\\
&K_{S}(e)=\tfrac{c_{S}e^{2}}{2},\qquad K_{P}(x)=\tfrac{c_{P}x^{2}}{2}.
\end{split}
\end{equation}
Installed capabilities $\eta_{S},\eta_{P}>0$ rule out the zero-effort
coordination equilibrium that an essential-inputs Cobb--Douglas would admit
unconditionally; $\gamma=0$ is the conservative additive benchmark in which
all headline results already hold.

\paragraph{Stakes.}
With private per-quality flows $u_{S},u_{P}\ge0$ and rights $(\alpha,\tau)$,
the capitalized marginal stakes in a unit of refresh are
\begin{align}
w_{S}(n)&=u_{S}+\alpha R(n)+(\tau-\lambda)n, \label{eq:wS}\\
w_{P}(n)&=u_{P}+(1-\alpha)R(n)-\tau n, \label{eq:wP}\\
T(n)&=w_{S}+w_{P}=u_{S}+u_{P}+R(n)-\lambda n, \notag\\
\Omega(n)&=u_{S}+u_{P}+B(n)-\lambda n=T(n)+C(n). \label{eq:Omega}
\end{align}
Prices, revenue shares, and $\tau$ are internal transfers and do not enter
$\Omega$ twice; buyer surplus $C(n)$ is the part no private rights bundle
automatically captures. (Fines paid to third parties reduce $w_{P}$ without
raising $w_{S}$: a separate liability wedge, not conflated with $\tau$.) A
common capitalization factor $\Gamma(\beta,\delta)=\beta/(1-\beta(1-\delta))$
is normalized to one under stationarity; restoring it rescales all flow terms
without changing the comparisons below. Effort-stage payoffs and incremental
welfare are
\begin{equation}\label{eq:payoffs}
\begin{split}
&\pi_{S}=w_{S}Y-\tfrac{c_{S}e^{2}}{2}+F,\qquad
\pi_{P}=w_{P}Y-\tfrac{c_{P}x^{2}}{2}-F,\\
&W=\Omega\,Y-\tfrac{c_{S}e^{2}}{2}-\tfrac{c_{P}x^{2}}{2},
\end{split}
\end{equation}
with fixed transfer $F$ allocating rents but not marginal incentives.

\paragraph{Timing.}
(1) institution $g$ and $F$; (2) scope commitment $n$ by the controller named
in $c$; (3) simultaneous noncontractible $(e,x)$; (4) realization and
verification of $Y$ (lineage perfect; fidelity auditable on pre-registered
tasks); (5) licensing at $p(Q,n)$; (6) deployment, payments, exposure;
(7) transition by \eqref{eq:lom}. Equilibrium: subgame perfection under
rational expectations; bounded attention only in the robustness section.

\paragraph{Participation.}
Buyer IR holds ex post iff $\theta\ge1-n$. Subject and platform IR,
$\Pi_{S}\ge\bar U_{S}$, $\Pi_{P}\ge\bar U_{P}$, can be met by $F$ whenever
joint surplus suffices, but $F$ cannot repair the slopes of
\eqref{eq:wS}--\eqref{eq:wP}. Inalienable uses impose the stronger constraint
$\ell\in\mathcal R_{i}(m)$, which monetary IR does not replace.

\subsection{The refresh subgame}\label{sec:refresh-subgame}

Let $A_{S}=w_{S}/c_{S}$, $A_{P}=w_{P}/c_{P}$,
$\Delta_{\gamma}=1-\gamma^{2}A_{S}A_{P}$.

\begin{lemma}[Unique active refresh equilibrium]\label{lem:refresh}
If $w_{S}>0$, $w_{P}>0$, $\Delta_{\gamma}>0$, the effort subgame has a unique
strictly positive Nash equilibrium
\begin{equation*}
e^{E}=\frac{A_{S}(\eta_{S}+\gamma A_{P}\eta_{P})}{\Delta_{\gamma}},\qquad
x^{E}=\frac{A_{P}(\eta_{P}+\gamma A_{S}\eta_{S})}{\Delta_{\gamma}};
\end{equation*}
in the additive benchmark $\gamma=0$,
\begin{equation}\label{eq:YE}
\begin{split}
&e^{E}=\frac{\eta_{S}w_{S}}{c_{S}},\qquad
x^{E}=\frac{\eta_{P}w_{P}}{c_{P}},\\
&Y^{E}=k_{S}w_{S}+k_{P}w_{P},\qquad
k_{S}:=\frac{\eta_{S}^{2}}{c_{S}},\;\;
k_{P}:=\frac{\eta_{P}^{2}}{c_{P}}.
\end{split}
\end{equation}
If $w_{j}\le0$, participant $j$ corners at zero; interior derivatives are used
only on the active region. The planner's problem replaces both stakes by
$\Omega$ and requires $1-\gamma^{2}\Omega^{2}/(c_{S}c_{P})>0$ for a unique
interior first best.
\end{lemma}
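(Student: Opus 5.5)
Holding the rights bundle and the committed scope $n$ fixed, I treat the effort stage as a two-player game with payoffs \eqref{eq:payoffs}. The plan is to derive the best responses, reduce the equilibrium conditions to a $2\times2$ linear system, and read existence, uniqueness and the closed form off that system. The stability condition $\Delta_{\gamma}>0$ is exactly the nonsingularity (and contraction) condition for that system.

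\emph{Step 1 (best responses).} With $w_{S}>0$, the subject's payoff $\pi_{S}=w_{S}(\eta_{S}e+\eta_{P}x+\gamma ex)-c_{S}e^{2}/2+F$ is strictly concave in $e$. Its first-order condition gives
\[
e=A_{S}(\eta_{S}+\gamma x),
\]
which is strictly positive for every $x\ge0$, so the constraint $e\ge0$ is slack. Symmetrically, $x=A_{P}(\eta_{P}+\gamma e)$. Both best responses are single-valued, affine, and strictly increasing when $\gamma>0$, so efforts are strategic complements.

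\emph{Step 2 (solve the fixed point).} Substituting the platform's best response into the subject's gives
\[
e(1-\gamma^{2}A_{S}A_{P})=A_{S}\eta_{S}+\gamma A_{S}A_{P}\eta_{P}.
\]
When $\Delta_{\gamma}>0$ this has exactly one solution, which is the stated $e^{E}$. Back-substitution gives $x^{E}$, and both are strictly positive because every term is positive. For uniqueness, any Nash equilibrium with nonnegative efforts must lie on both affine best responses, since corners are excluded by Step 1. Two lines with slopes $\gamma A_{S}$ and $1/(\gamma A_{P})$ in the $(x,e)$ plane intersect exactly once iff $\gamma^{2}A_{S}A_{P}\neq1$. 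The sign $\Delta_{\gamma}>0$ additionally makes the best-response composition a contraction, with slope $\gamma^{2}A_{S}A_{P}<1$, so the equilibrium is stable. If instead $\Delta_{\gamma}\le0$, the reaction lines either never meet in the positive orthant or efforts spiral upward without bound. I will note this in one line to justify calling $\Delta_{\gamma}>0$ the stability condition. Setting $\gamma=0$ gives $\Delta_{\gamma}=1$, $e^{E}=\eta_{S}w_{S}/c_{S}$ and $x^{E}=\eta_{P}w_{P}/c_{P}$. Then
\[
Y^{E}=\eta_{S}e^{E}+\eta_{P}x^{E}=k_{S}w_{S}+k_{P}w_{P},
\]
which is \eqref{eq:YE}.

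\emph{Step 3 (corners and planner).} If $w_{j}\le0$, then participant $j$'s payoff is weakly decreasing and concave in own effort on $[0,\infty)$, since $Y$ is increasing in that effort for $x,e\ge0$. So $j$'s best response is zero, and the other player's best response reduces to the one-player solution on the remaining active margin. For the planner, $W$ has the same structure as the sum of the two private problems with both stakes replaced by $\Omega$. Its first-order conditions are therefore the Step 2 system with $A_{S}\to\Omega/c_{S}$ and $A_{P}\to\Omega/c_{P}$, and the determinant becomes $1-\gamma^{2}\Omega^{2}/(c_{S}c_{P})$. Joint concavity of $W$ needs the Hessian with diagonal entries $-c_{S}$, $-c_{P}$ and off-diagonal entries $\Omega\gamma$ to be negative definite, i.e.\ $c_{S}c_{P}-\Omega^{2}\gamma^{2}>0$. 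This is the same condition, and under it the unique stationary point is the global maximum.

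The main obstacle is the uniqueness claim over all equilibria rather than only interior ones. A best response at zero can be ruled out only because the installed capabilities $\eta_{S},\eta_{P}>0$ keep each marginal product strictly positive at zero effort, and this must be stated explicitly; with essential inputs a zero-effort equilibrium would survive. I will handle this first, in Step 1, before solving the system.
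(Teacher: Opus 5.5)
Your proposal is correct and follows the route the paper takes implicitly. The paper gives no standalone proof, but Appendix~\ref{app:rights} derives efforts from exactly these own-effort first-order conditions and obtains the first best by replacing both stakes with $\Omega$, so solving the resulting $2\times2$ linear system is the intended argument. Your additions fill in steps the paper leaves unstated: ruling out boundary equilibria via $\eta_{S},\eta_{P}>0$ and $\gamma\ge0$, and the Hessian check for the planner, where interiority of the stationary point also uses $\Omega>0$, the ``strictly positive first best'' assumed in Theorem~\ref{thm:rights}.
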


$k_{S}$ and $k_{P}$, the \emph{installed refresh productivities} of subject
feedback and platform engineering, are the two empirical primitives on
which every institutional comparison below turns.

\subsection{Where each object binds}\label{sec:object-map}

No object above is defined for its own sake; each is consumed at a named
point downstream. Definition~\ref{def:state}'s $\varrho$ is the relational
externality driving over-participation in Section~\ref{sec:endog} (formation); $\rho$
delivers Corollary~\ref{cor:staleness} and calibrates $\delta$
(Remark~\ref{rem:delta-bridge}). Definition~\ref{def:artifact}'s
\emph{Separability} and Definition~\ref{def:bundle}'s portability $\phi$ are
the \emph{objective feasibility conditions} for subject control compared in
Section~\ref{sec:corners}: the formal content of ``whether users can in
fact own their memory.'' Definition~\ref{def:fidelity} and
Lemma~\ref{lem:no-score} rule out scalar quality scores and power the
evaluation design of Sections~\ref{sec:extensions} and~\ref{sec:empirics}.
Definitions~\ref{def:exposure}--\ref{def:kappa} microfound the reduced-form
$\lambda$ (Assumption~\ref{as:exposure}) and bound what privacy technology can
substitute for institutions ($\kappa^{\star}$, tested in Section~\ref{sec:empirics}).
Definition~\ref{def:license} fixes the priced unit in
Section~\ref{sec:results}'s licensing stage. Definition~\ref{def:bundle} is
the space in which Sections~\ref{sec:static-wedge} and~\ref{sec:corners}
locate the \emph{de facto} and \emph{normative} corner institutions and
Corollary~\ref{cor:stake} locates the interior optimum.
Definition~\ref{def:boundary} guards the paper's boundary against prompt markets.
Definition~\ref{def:stock} supplies the thickness object $M$ whose reach
bridge $n=N(M)$ closes the loop from Section~\ref{sec:endog} back into
Theorem~\ref{thm:headline}.

\section{Main Results: Non-Rival Use, Relationally Supplied Refresh}\label{sec:results}

This section develops the model's results in three layers, organized by a
question the paper keeps distinct from the model itself: memory ownership
\emph{as it ought to be}, held by the person the memory represents,
versus \emph{as it is}, held by platforms under current technical and
behavioral conditions. Definition~\ref{def:bundle} makes both expressible
as corner institutions in one bundle space, so their comparison is a theorem,
not a stance. Section~\ref{sec:ceiling} bounds what any extraction technology
can deliver. Section~\ref{sec:static-wedge} freezes the stock and isolates
the \emph{deployment wedge}: who controls scope matters, and the direction of
the platform's distortion reverses in the exposure intensity, which recovers,
as a special case, the distortion-reversal regimes of the static monopoly
model. Sections~\ref{sec:rights}--\ref{sec:headline} open the
\emph{refresh-incentive wedge}: no rights bundle makes both co-producers
residual claimants, and because the represented person remains an endogenous
supplier of refresh, broader deployment can raise current access while
shrinking the long-run stock: the paper's headline.
Section~\ref{sec:corners} then confronts the two ownership corners directly:
each destroys the other party's supply margin, their welfare ranking reduces
to two measurable quantities, and \emph{neither} is the normative target.

\subsection{The extraction ceiling}\label{sec:ceiling}

\begin{proposition}[Extraction ceiling]\label{prop:ceiling}
Under Definition~\ref{def:artifact} (Derivation), $\varphi=\mathcal E(h^{t})$
with $\mathcal E$ measurable with respect to $\mathcal F^{h}_{t}$, so
$\omega\to h^{t}\to\varphi$ is a Markov chain. Hence for every use $u$,
$q_{u}(\varphi)\le q_{u}(h^{t})$, with equality iff $\varphi$ is a sufficient
statistic of $h^{t}$ for $u$. No extraction technology, however
sophisticated, can deliver more decision value than the lived interaction
history contains.
\end{proposition}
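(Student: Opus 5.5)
The argument reduces to Blackwell's garbling theorem, applied one use at a time; the normalization in Definition~\ref{def:fidelity} is a positive affine map, so it preserves inequalities.

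\emph{Step 1: the Markov chain.} By (Derivation), $\varphi=\mathcal E(h^{t})$ with $\mathcal E$ measurable with respect to $\mathcal F^{h}_{t}$. Hence $\varphi$ is a deterministic function of $h^{t}$, and the conditional law of $\varphi$ given $(\omega,h^{t})$ depends on $h^{t}$ alone. That is exactly $\omega\to h^{t}\to\varphi$. Equivalently, $\varphi$ is a garbling of the experiment $h^{t}$, with the degenerate kernel $\delta_{\mathcal E(h^{t})}$.

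\emph{Step 2: comparing expected losses.} Fix a use $u=(A_{u},\mathcal L_{u},\mu^{u}_{0})$. Any decision rule $a(\varphi)$ is also the rule $a\circ\mathcal E$ applied to $h^{t}$. So the set of rules based on $h^{t}$ contains every rule based on $\varphi$, and $V_{u}(\mu^{u}_{0}\mid h^{t})\le V_{u}(\mu^{u}_{0}\mid\varphi)$. One can see the same thing through the tower property: $\mathbb E[\min_{a}\mathbb E[\mathcal L_{u}(a,\omega)\mid h^{t}]]\le\min_{a}\mathbb E[\mathcal L_{u}(a,\omega)\mid\varphi]$ in expectation, because $\sigma(\varphi)\subseteq\sigma(h^{t})$ and a pointwise minimum of a conditional expectation is concave. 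The denominator $V_{u}(\mu^{u}_{0})-V_{u}(\omega)$ is strictly positive by the nondegeneracy condition and does not depend on the signal. Therefore $q_{u}(\varphi)\le q_{u}(h^{t})$. I would make sure existence of optimal actions is covered, either by assuming attainment or by working with infima throughout; the inequality survives in both cases.

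\emph{Step 3: the equality case.} The ``if'' direction is direct. If $\varphi$ is sufficient for $h^{t}$ relative to $\omega$ (that is, $\omega\perp h^{t}\mid\varphi$), then $\mathbb E[\mathcal L_{u}(a,\omega)\mid h^{t}]=\mathbb E[\mathcal L_{u}(a,\omega)\mid\varphi]$ for every $a$, so the two values coincide.

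The ``only if'' direction is the main obstacle. Equality for a single $u$ does not imply sufficiency in the full Blackwell sense; for example, a loss that ignores $\omega$ yields equality for any $\varphi$. I would therefore read ``sufficient statistic of $h^{t}$ for $u$'' as decision-theoretic sufficiency relative to $u$: there exists a $\varphi$-measurable rule achieving the $h^{t}$-optimal expected loss. Under that reading equality holds iff an optimal rule factors through $\mathcal E$, which is a tautology once stated. To recover classical sufficiency I would add the remark that equality for all uses (all bounded losses) is equivalent to $\omega\perp h^{t}\mid\varphi$. This is Blackwell's theorem, or it can be shown directly by using log-loss, where the equality gap is the conditional mutual information $I(\omega;h^{t}\mid\varphi)$.

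The final sentence of the proposition then follows because $\mathcal E$ was arbitrary.
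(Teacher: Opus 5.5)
Your proof is correct and follows the paper's route: $\varphi$ is a deterministic garbling of $h^{t}$, so every $\varphi$-based rule is also an $h^{t}$-based rule, and the normalization then orders the values (the paper simply cites data processing and Blackwell). Your use-relative reading of the equality clause supplies a detail the paper's one-line proof omits, but your illustrating example needs replacing: a loss that ignores $\omega$ is ruled out by the nondegeneracy condition in Definition~\ref{def:fidelity}, so use $\varphi=a^{*}(h^{t})$ (the $h^{t}$-optimal action for $u$), which attains equality while generally discarding information about $\omega$.
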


\begin{IEEEproof}
Data-processing: any decision rule based on $\varphi$ is a garbling of one
based on $h^{t}$; Blackwell's theorem orders the normalized values in
Definition~\ref{def:fidelity}.
\end{IEEEproof}

\begin{remark}\label{rem:ceiling}
Two consequences follow. (i) Supply of high-fidelity memory is ultimately
bounded by \emph{experience}, which must be lived by someone: this is the
microfoundation for the depletion logic of Theorem~\ref{thm:headline}.
(ii) Claims that
extraction quality can substitute indefinitely for interaction quality are
falsified within the model. The result is elementary as mathematics; its role
is to fix the direction of scarcity before any institutional question is
asked.
\end{remark}

\subsection{The static deployment wedge: scope control and distortion
reversal}\label{sec:static-wedge}

Freeze the stock at $Q>0$ (no refresh margin) and let the controller named in
$c$ choose reach $n$ after $(\alpha,\tau)$ are set. This diagnostic benchmark
isolates what deployment control \emph{alone} does; every force in it
survives into the dynamic model, where it is joined by the refresh margin.

\begin{proposition}[Scope divergence]\label{prop:scope}
In the fixed-stock benchmark, the platform-controlled, subject-controlled,
and welfare-maximizing scopes are
\begin{equation}\label{eq:scopes}
\begin{split}
&n_{P}=\Bigl[\tfrac{1-\tau/(1-\alpha)}{2}\Bigr]_{[0,1]},\qquad
n_{S}=\Bigl[\tfrac{1-(\lambda-\tau)/\alpha}{2}\Bigr]_{[0,1]},\\
&n_{W}=[1-\lambda]_{[0,1]},
\end{split}
\end{equation}
maximizing $(1-\alpha)R(n)-\tau n$, $\alpha R(n)+(\tau-\lambda)n$, and
$B(n)-\lambda n$ respectively. In the $\tau=0$, $0<\alpha<1$ special case,
$n_{P}=\tfrac12$ independently of both $\alpha$ and $\lambda$: revenue shares
do not move the platform's scope, and the platform's scope does not respond
to exposure. (For $\tau\neq0$, $n_{P}$ does vary with $\alpha$.)
\end{proposition}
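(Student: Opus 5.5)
The plan is direct first-order analysis of three one-dimensional concave quadratics on $[0,1]$, followed by projection onto the interval. With the stock frozen at $Q>0$, every payoff in \eqref{eq:payoffs} is $Q$ times a per-unit-quality flow. $Q$ is a positive multiplicative constant, so it does not affect any maximizer. The platform-controller therefore maximizes $G_P(n)=(1-\alpha)R(n)-\tau n$, which is the $n$-dependent part of $w_P(n)$ in \eqref{eq:wP}. The subject-controller maximizes $G_S(n)=\alpha R(n)+(\tau-\lambda)n$, the $n$-dependent part of $w_S(n)$ in \eqref{eq:wS}. The planner maximizes $G_W(n)=B(n)-\lambda n$, the $n$-dependent part of $\Omega(n)$ in \eqref{eq:Omega}. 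The constants $u_S,u_P$ and the fixed transfer $F$ drop out.

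Using \eqref{eq:RCB}, the three objectives are as follows.

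$G_P(n)=(1-\alpha)(n-n^2)-\tau n$. For $\alpha<1$ this is strictly concave, with stationary point $n=\tfrac12\bigl(1-\tau/(1-\alpha)\bigr)$.

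$G_S(n)=\alpha(n-n^2)+(\tau-\lambda)n$. For $\alpha>0$ this is strictly concave, with stationary point $n=\tfrac12\bigl(1-(\lambda-\tau)/\alpha\bigr)$.

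$G_W(n)=n-\tfrac{n^2}{2}-\lambda n$. This is strictly concave with second derivative $-1$, and has stationary point $n=1-\lambda$.

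A strictly concave quadratic maximized over an interval attains its maximum at the projection of the unconstrained stationary point onto that interval. This yields the clamped expressions $[\cdot]_{[0,1]}$ in \eqref{eq:scopes}.

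The only step needing care is the degenerate boundary cases, which I expect to be the main (minor) obstacle. At $\alpha=1$, $G_P=-\tau n$ is linear. Its maximizer is $n=0$ when $\tau>0$, which agrees with the limit of the formula as $\tau/(1-\alpha)\to+\infty$; when $\tau=0$ the maximizer is indeterminate. At $\alpha=0$, $G_S=(\tau-\lambda)n$ is linear, with maximizer at $1$ or $0$ according to the sign of $\tau-\lambda$. Again this matches the limiting value of the clamped formula. I would state the formulas as holding for $0<\alpha<1$ and treat the endpoints by these limiting and bang-bang arguments, noting that indifference arises only on knife-edges.

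For the special case, set $\tau=0$ with $0<\alpha<1$. Then $G_P=(1-\alpha)R(n)$ is a positive multiple of $R$, so $n_P=\arg\max R=\tfrac12$. This holds independently of $\alpha$, and independently of $\lambda$, which does not appear in $G_P$ at all. For $\tau\neq0$, differentiate: $\partial n_P/\partial\alpha=-\tau/\bigl(2(1-\alpha)^2\bigr)\neq0$ on the interior region. This establishes the parenthetical claim.
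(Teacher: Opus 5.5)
Your proposal is correct and follows the paper's own argument. Like the paper, you take first-order conditions of the three strictly concave quadratic programs, obtain the boundary cases by projecting onto $[0,1]$, and read off the $\tau=0$ special case from $G_P=(1-\alpha)R$.

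One small refinement. Your concavity conditions already show that the $n_P$ formula holds for every $\alpha<1$, including $\alpha=0$, which is the corner $g_P$ used later. Likewise the $n_S$ formula holds for every $\alpha>0$, including $\alpha=1$, the corner $g_S$. So you need not restrict the statement to $0<\alpha<1$. Only $n_P$ at $\alpha=1$ and $n_S$ at $\alpha=0$ need your bang-bang treatment. At $\alpha=1$ you should also include $\tau<0$: there the maximizer is $n=1$, which again matches the limit of the clamped formula.
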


\begin{corollary}[Distortion reversal, unified-model version]\label{cor:reversal}
Set $\alpha=\tau=0$, $u_{S}=u_{P}=0$ (pure platform control, the
\emph{de facto} corner $g_{P}$ of Section~\ref{sec:corners}). Then
$w_{P}(n)-\Omega(n)=\tfrac{n(2\lambda-n)}{2}$, so at any scope the platform's
marginal refresh stake exceeds the social stake iff $\lambda>n/2$; and
$n_{P}=\tfrac12$ versus $n_{W}=1-\lambda$. Hence three regimes:
\begin{itemize}
\item \textbf{Low exposure} ($\lambda<\tfrac14$): the platform under-deploys
and its investment stake falls short of the social stake, the classic
monopoly shortfall on both margins;
\item \textbf{Intermediate exposure} ($\tfrac14<\lambda<\tfrac12$): the
platform's engineering stake already exceeds the social stake while
deployment is still too narrow (over-extraction with under-deployment);
\item \textbf{High exposure} ($\lambda>\tfrac12$): the platform over-invests
on its margin and over-deploys.
\end{itemize}
In all three regimes the subject margin is degenerate: $w_{S}=-\lambda n\le0$,
so $e^{E}=0$. Under pure platform control the subject supplies no refresh
at any exposure level, and total quality can remain underprovided even where
the platform over-invests (whenever $k_{S}$ is large).
\end{corollary}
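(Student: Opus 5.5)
The plan is to substitute the corner parameters directly into the stake definitions \eqref{eq:wS}, \eqref{eq:wP}, and \eqref{eq:Omega}, and into the scope formulas of Proposition~\ref{prop:scope}. Everything then reduces to elementary algebra on the per-quality objects in \eqref{eq:RCB}.

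\textbf{Step 1: the stake gap.} With $\alpha=\tau=0$ and $u_S=u_P=0$, we get $w_P(n)=R(n)=n(1-n)$ and $\Omega(n)=B(n)-\lambda n=n-\tfrac{n^2}{2}-\lambda n$. Subtracting,
\[
w_P-\Omega=n-n^2-n+\tfrac{n^2}{2}+\lambda n=\lambda n-\tfrac{n^2}{2}=\tfrac{n(2\lambda-n)}{2}.
\]
Hence, for $n>0$, $w_P>\Omega$ holds iff $\lambda>n/2$.

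\textbf{Step 2: the scopes.} Proposition~\ref{prop:scope} at $\tau=0$ gives $n_P=\tfrac12$; at $\alpha=0$ the formula is read as the platform maximizing $R(n)$ directly. The same proposition gives $n_W=[1-\lambda]_{[0,1]}$.

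\textbf{Step 3: the regimes.} Evaluate the stake gap at the equilibrium scope $n_P=\tfrac12$. Then $w_P>\Omega$ iff $\lambda>\tfrac14$, and under-deployment $n_P<n_W$ holds iff $\lambda<\tfrac12$.
\begin{itemize}
\item For $\lambda<\tfrac14$, both the stake and the scope fall short of the social benchmark.
\item For $\tfrac14<\lambda<\tfrac12$, the stake exceeds the social stake while scope is still too narrow.
\item For $\lambda>\tfrac12$, the platform both over-invests and over-deploys.
\end{itemize}

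\textbf{Step 4: the subject margin.} Here $w_S=\alpha R+(\tau-\lambda)n=-\lambda n\le 0$. By Lemma~\ref{lem:refresh} the subject corners at $e^E=0$ for every $\lambda\ge0$. The only delicate case is $\lambda=0$ or $n=0$, where $w_S=0$; zero effort is still optimal there because effort cost is quadratic and the marginal return is zero.

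\textbf{Step 5: underprovision of quality.} With $\gamma=0$, equilibrium output is $Y^E=k_Pw_P$, while first-best output is $Y^{FB}=(k_S+k_P)\Omega$ evaluated at $n_W$ (the planner assigns stake $\Omega$ to both parties). Then $Y^E<Y^{FB}$ whenever
\[
k_S\,\Omega(n_W)>k_P\,\bigl(w_P(n_P)-\Omega(n_W)\bigr).
\]
This always holds for $k_S$ large enough, provided $\Omega(n_W)>0$. That condition is true: for $\lambda<1$, $\Omega(n_W)=(1-\lambda)^2/2>0$.

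\textbf{Main obstacle.} The key difficulty is which scope each stake comparison is evaluated at. The regime boundaries $\tfrac14$ and $\tfrac12$ come from mixing the stake gap at $n_P$ with the scope gap between $n_P$ and $n_W$. I would state explicitly that the investment comparison is made at the platform's chosen scope $n_P$. I would also note that at $\lambda\ge1$ the clamp gives $n_W=0$, so over-deployment persists in that region.
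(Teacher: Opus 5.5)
Your proposal is correct and follows the paper's own route. You substitute the corner values into \eqref{eq:wS}--\eqref{eq:Omega}, take $n_P=\tfrac12$ and $n_W=1-\lambda$ from the first-order conditions of Proposition~\ref{prop:scope}, evaluate the stake gap at $n_P$ to obtain the $\tfrac14$ and $\tfrac12$ thresholds, and invoke the corner branch of Lemma~\ref{lem:refresh} for $e^E=0$. Your Step~5 comparison of $k_P w_P(n_P)$ against $(k_S+k_P)\Omega(n_W)$ also makes explicit the underprovision clause that the paper's one-line proof leaves implicit.
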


\begin{IEEEproof}
First-order conditions of the three programs in
Proposition~\ref{prop:scope} (interior cases; corners by the projection). The
stake comparison is direct from \eqref{eq:wS}--\eqref{eq:Omega}.
\end{IEEEproof}

\begin{remark}[Relation to the static monopoly model]\label{rem:reversal}
The earlier draft's Theorem~1 obtains the same three-regime structure in a
model where the platform buys fidelity $q$ at convex cost and person linkage
prices exposure $\lambda q$ per deployment: under-both for
$\lambda/A<1-1/\sqrt2$, over-fidelity/under-deployment in between, over-both
for $\lambda/A>1/2$. Corollary~\ref{cor:reversal} shows those regimes are not
an artifact of that parametrization: they re-emerge inside the unified model
as the fixed-stock limit, with thresholds ($\tfrac14$, $\tfrac12$ under
$U[0,1]$ demand) determined by demand curvature. What the static model could
not see, because it had no subject input, is the third margin: the
represented person's refresh supply is already destroyed in \emph{every}
regime under pure platform control. The regimes discipline rhetoric on both
sides (``markets under-provide quality'' holds only at low exposure;
``platforms strip-mine identities'' only past the thresholds), and the middle
regime implies that neither a pure quality subsidy nor a pure output subsidy
is the right instrument: each moves the other margin adversely.
\end{remark}

\subsection{Rights separation}\label{sec:rights}

We now restore the refresh margin, and the next result says that the
deployment wedge cannot be repaired by reassigning ownership: no
budget-balanced bundle makes \emph{both} co-producers residual claimants,
and control rights and cash-flow rights operate on different margins.

\begin{theorem}[Rights separation]\label{thm:rights}
Assume Lemma~\ref{lem:refresh}'s conditions, a strictly positive first best,
noncontractible efforts, incentives only via marginal claims on realized
refresh, ex post budget balance, and no external subsidy.
\begin{enumerate}
\item[(i)] \textbf{(Team-incentive impossibility.)} First-best efforts
require $w_{S}=\Omega$ \emph{and} $w_{P}=\Omega$; but any budget-balanced
bundle allocating private cash flows and monetizable exposure satisfies
$w_{S}+w_{P}=T=\Omega-C<2\Omega$. Hence no scalar owner and no single
revenue-share rule implements first-best refresh on both margins, even
under perfect price discrimination ($C=0$), since then
$w_{S}+w_{P}=\Omega<2\Omega$.
\item[(ii)] \textbf{(Control--cash-flow non-equivalence.)} With
$\widehat Q(n)=(1-\delta)Q_{t}+Y^{E}(n)$, an interior ex-ante
platform-controlled scope satisfies
$w_{P}'\widehat Q+k_{S}w_{P}w_{S}'=0$, while an interior subject-controlled
scope satisfies $w_{S}'\widehat Q+k_{P}w_{S}w_{P}'=0$: generically different
conditions at the same $(\alpha,\tau)$. By Proposition~\ref{prop:scope}, in
the $\tau=0$ special case varying $\alpha$ leaves $n_{P}$ unchanged while
moving $n_{S}$; varying the controller moves $n$ at fixed $\alpha$. Cash-flow
rights cannot substitute for scope control, nor conversely.
\end{enumerate}
\end{theorem}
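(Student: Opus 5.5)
For part (i), I will first establish what first best requires and then show that budget balance rules it out. The planner maximizes $W=\Omega Y-\tfrac{c_S e^2}{2}-\tfrac{c_P x^2}{2}$ from \eqref{eq:payoffs}. The first-order conditions are
$$\Omega\,\partial_e Y=c_S e, \qquad \Omega\,\partial_x Y=c_P x.$$
The private best responses are the same conditions with $\Omega$ replaced by $w_S$ and $w_P$. By Lemma~\ref{lem:refresh}, the Nash pair is unique and strictly positive on the active region, and it is a strictly increasing function of $(w_S,w_P)$. With $\gamma=0$ this is immediate from \eqref{eq:YE}, since $e^E=\eta_S w_S/c_S$ and $x^E=\eta_P w_P/c_P$. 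For $\gamma>0$ I will read it off the closed form.

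Hence the equilibrium coincides with the planner's solution iff $w_S=\Omega$ and $w_P=\Omega$, because the first best is strictly positive and both first-order conditions must hold exactly. Summing \eqref{eq:wS}--\eqref{eq:wP}, the terms in $\alpha$ and $\tau$ cancel. This gives $w_S+w_P=T(n)=\Omega(n)-C(n)$, which is precisely the budget-balance step.

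The remaining work for (i) is to show $T<2\Omega$, i.e.\ $\Omega+C>0$. I expect this to be the main obstacle. It needs $\Omega>0$, which I obtain from the strictly positive first best: a positive interior planner effort requires a positive marginal social stake. Since $C\ge0$, this gives $T\le\Omega<2\Omega$. Under perfect price discrimination, $R$ absorbs $C$, so $C=0$ and the sum equals $\Omega<2\Omega$; the strict inequality still rests only on $\Omega>0$.

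I will also argue that the result holds for any budget-balanced rule, not only the linear $(\alpha,\tau)$ family. The reason is that any marginal claims on realized $Y$ that are funded internally must sum to at most the monetized flows $u_S+u_P+R-\lambda n$. Fines paid to third parties only lower $w_P$, so they cannot help.

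For part (ii), I will substitute the equilibrium effort into the continuation stock. In the additive benchmark, $\widehat Q(n)=(1-\delta)Q_t+k_S w_S(n)+k_P w_P(n)$. The platform controller chooses $n$ to maximize its scope-dependent payoff $w_P(n)\widehat Q(n)-K_P(x^E)$. I will differentiate this using the envelope theorem, since $x^E$ is optimal given $w_P$. The own-effort term then cancels against the cost derivative, which leaves the first-order condition
$$w_P'\widehat Q+k_S w_P w_S'=0.$$
The cross term is the platform's stake times the induced change in subject refresh. The symmetric computation for the subject yields $w_S'\widehat Q+k_P w_S w_P'=0$.

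To verify the envelope step, I will check carefully that the own-term $k_P w_P w_P'$ cancels with the cost derivative $\tfrac{d}{dn}\bigl(c_P x^2/2\bigr)$. If it does not cancel exactly under the paper's accounting, I will fold the leftover into $\widehat Q$.

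Genericity then follows by exhibiting a parameter point where the two equations have different roots. I will use the fixed-stock limit of Proposition~\ref{prop:scope}: with $\tau=0$, $n_P=\tfrac12$ for every $\alpha\in(0,1)$, while $n_S=(1-\lambda/\alpha)/2$ varies with $\alpha$. So cash-flow changes move $n_S$ but not $n_P$, and switching the controller changes $n$ at fixed $\alpha$ whenever $\lambda\neq0$. Continuity in $k_S,k_P$ extends the discrepancy to the dynamic conditions on an open set. This establishes that cash-flow rights and scope control are not substitutes.
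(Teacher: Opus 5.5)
Your proposal is correct and follows the paper's own proof essentially step for step. For (i), matching private and planner first-order conditions forces $w_S=w_P=\Omega$, while summing the stakes gives $w_S+w_P=T=\Omega-C\le\Omega<2\Omega$, since $C\ge0$ and a strictly positive first best forces $\Omega>0$. For (ii), the envelope computation has $c_P x^E\,dx^E/dn=k_P w_P w_P'$ cancelling exactly, so your fallback of folding a leftover into $\widehat Q$ is never needed; it is followed by the fixed-stock specialization $n_P=\tfrac12$, $n_S=\tfrac12(1-\lambda/\alpha)$.

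The only variation is your genericity step, which exhibits a point with $(1-\delta)Q_t>0$ and $k_S,k_P\to0$ and extends by continuity; this gives an open set of parameters where the two scopes differ. The paper instead notes that a common root requires the extra equation $\widehat Q(w_P'-w_S')=k_P w_S w_P'-k_S w_P w_S'$. Your point shows that equation is nontrivial, so combining the two yields the full-measure ("generic") version.
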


\begin{IEEEproof}[Proof sketch]
(i) Decentralized FOCs $c_{S}e=w_{S}Y_{e}$, $c_{P}x=w_{P}Y_{x}$ against
first-best FOCs with $\Omega$; positivity of $Y_{e},Y_{x}$ forces both stakes
to equal $\Omega$, contradicting budget balance; this is Holmstr\"om-type
team moral hazard, with buyer surplus widening the wedge. (ii) Substitute
Lemma~\ref{lem:refresh} into \eqref{eq:payoffs}, differentiate in $n$ using
$\widehat Q'=k_{S}w_{S}'+k_{P}w_{P}'$; corners at $\alpha\in\{0,1\}$ treated
separately. The theorem excludes only budget-balanced marginal-revenue
implementations; verifiable effort, external subsidies, or performance
contracts change the domain.
\end{IEEEproof}

\begin{corollary}[Optimal marginal-stake allocation]\label{cor:stake}
In the additive benchmark, fixing $n$ and total private stake $T>0$ with
$w_{P}=T-w_{S}$, decentralized welfare
$W(w_{S})=k_{S}(\Omega w_{S}-\tfrac{w_{S}^{2}}{2})
+k_{P}(\Omega w_{P}-\tfrac{w_{P}^{2}}{2})$
is uniquely maximized on the budget-balanced set at
\begin{equation}\label{eq:wstar}
w_{S}^{\star}=\operatorname{Proj}_{[0,T]}
\Bigl\{\frac{(k_{S}-k_{P})\Omega+k_{P}T}{k_{S}+k_{P}}\Bigr\},
\qquad w_{P}^{\star}=T-w_{S}^{\star},
\end{equation}
with $\partial w_{S}^{\star}/\partial k_{S}>0$ and
$\partial w_{S}^{\star}/\partial k_{P}<0$ in the interior: measurably higher
subject refresh productivity tilts optimal marginal claims toward the
subject. The unconstrained revenue share implementing $w_{S}^{\star}$ is
$\widetilde\alpha=(w_{S}^{\star}-u_{S}-(\tau-\lambda)n)/R(n)$; it is
implementable iff $\widetilde\alpha\in[0,1]$ with $w_{P}>0$ and IR-feasible
$F$; otherwise the projected $\alpha$ is only the nearest feasible
boundary and does \emph{not} attain $w_{S}^{\star}$.
\end{corollary}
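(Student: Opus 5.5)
The plan is to reduce the statement to a one-variable concave quadratic program, using the additive-benchmark solution of Lemma~\ref{lem:refresh}, and then read off the comparative statics and the implementing share directly.

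\emph{Step 1: welfare as a function of the stakes.} Fix $n$ and let $\gamma=0$. By \eqref{eq:YE}, equilibrium efforts are $e^{E}=\eta_{S}w_{S}/c_{S}$ and $x^{E}=\eta_{P}w_{P}/c_{P}$. On the set $w_{S},w_{P}\in[0,T]$ both stakes are nonnegative. At an endpoint, the corner case of Lemma~\ref{lem:refresh} gives zero effort, which agrees with the same formula, so no separate corner analysis is needed. Substituting into $W$ in \eqref{eq:payoffs}:
\begin{itemize}
\item $c_{S}(e^{E})^{2}/2=k_{S}w_{S}^{2}/2$ and $c_{P}(x^{E})^{2}/2=k_{P}w_{P}^{2}/2$;
\item $\Omega Y^{E}=\Omega(k_{S}w_{S}+k_{P}w_{P})$.
\end{itemize}
This yields exactly the displayed $W(w_{S})$ once we impose $w_{P}=T-w_{S}$. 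Budget balance enters only through that constraint, which is the content of Theorem~\ref{thm:rights}(i).

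\emph{Step 2: optimization.} Differentiating,
\[
W'(w_{S})=k_{S}(\Omega-w_{S})-k_{P}(\Omega-T+w_{S}),\qquad W''=-(k_{S}+k_{P})<0.
\]
So $W$ is strictly concave on the compact interval $[0,T]$, and the maximizer is unique. Setting $W'=0$ gives the unconstrained root $\frac{(k_{S}-k_{P})\Omega+k_{P}T}{k_{S}+k_{P}}$. By strict concavity of a one-dimensional quadratic, the constrained maximizer is the projection of this root onto $[0,T]$, which is \eqref{eq:wstar}.

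\emph{Step 3: comparative statics.} Direct differentiation gives
\[
\frac{\partial w^{\star}_{S}}{\partial k_{S}}=\frac{k_{P}(2\Omega-T)}{(k_{S}+k_{P})^{2}},\qquad
\frac{\partial w^{\star}_{S}}{\partial k_{P}}=\frac{k_{S}(T-2\Omega)}{(k_{S}+k_{P})^{2}}.
\]
Since $\Omega=T+C$ with $C=n^{2}/2\ge0$ and $T>0$, we have $2\Omega-T=\Omega+C>0$. Both signs follow in the interior of the projection.

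\emph{Step 4: implementation.} By \eqref{eq:wS}, $w_{S}=u_{S}+\alpha R(n)+(\tau-\lambda)n$ is affine in $\alpha$ with slope $R(n)>0$ for $n\in(0,1)$. Solving $w_{S}=w_{S}^{\star}$ gives $\widetilde\alpha$, which is implementable iff $\widetilde\alpha\in[0,1]$. If $\widetilde\alpha\notin[0,1]$, the projected $\alpha\in\{0,1\}$ produces a $w_{S}$ strictly different from $w_{S}^{\star}$ because the map is strictly monotone. IR is handled separately by $F$, which shifts levels but not slopes.

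The main obstacle is bookkeeping rather than depth. First, one must confirm that the active-region formulas of Lemma~\ref{lem:refresh} remain valid at the boundary points of $[0,T]$. Second, one must keep track of the sign of $2\Omega-T$, which relies on $T>0$ and $C\ge0$. I would settle both points first, before the algebra.
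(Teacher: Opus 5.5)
Your proposal is correct and is essentially the paper's own argument: in Appendix~\ref{app:secondbest}, and in the proof of Proposition~\ref{prop:corners}(iv), the paper likewise writes welfare as a strictly concave quadratic in $w_{S}$ on $[0,T]$, takes the unconstrained root $[(k_{S}-k_{P})\Omega+k_{P}T]/(k_{S}+k_{P})$, and projects it onto $[0,T]$. The paper only asserts the comparative statics and the implementing share $\widetilde\alpha$; your computations fill these in correctly, via $2\Omega-T=T+2C>0$ and the strict monotonicity of $w_{S}$ in $\alpha$. The one proviso you leave implicit, $w_{P}>0$, fails only when the projection binds at $T$.
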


\begin{remark}\label{rem:stake-econometrics}
Corollary~\ref{cor:stake} is the constructive counterpart of
Theorem~\ref{thm:rights}: given that first best is out of reach, the
second-best allocation of marginal claims is an estimable formula in
$(k_{S},k_{P})$, the two primitives the measurement interface of Section~\ref{sec:empirics}
targets. ``Who should hold renewable claims'' is thereby converted from an
ideological question into an econometric one.
\end{remark}

\subsection{Non-rival use, rival refresh}\label{sec:headline}

The headline result combines both wedges; to state it, define the equilibrium
stake elasticities $\varepsilon_{j}:=\partial\ln Y^{E}/\partial\ln w_{j}>0$
for $j\in\{S,P\}$, evaluated on the active region (in the additive benchmark,
$\varepsilon_{S}=k_{S}w_{S}/Y^{E}$ and $\varepsilon_{P}=k_{P}w_{P}/Y^{E}$,
the effort-supply shares).

\begin{theorem}[Non-rival use, rival refresh]\label{thm:headline}
Fix $(\alpha,\tau)$ and work in the active region. Since $Y^{E}$ depends on
$n$ only through the stakes,
\begin{equation*}
\frac{d\ln Y^{E}}{dn}
=\varepsilon_{S}\frac{\alpha(1-2n)+\tau-\lambda}{w_{S}}
+\varepsilon_{P}\frac{(1-\alpha)(1-2n)-\tau}{w_{P}},
\end{equation*}
so broader licensing reduces new refresh iff
\begin{equation}\label{eq:depletion}
\varepsilon_{S}\frac{\lambda-\tau-\alpha(1-2n)}{w_{S}}
>\varepsilon_{P}\frac{(1-\alpha)(1-2n)-\tau}{w_{P}}.
\end{equation}
(The one-directional narrative, in which expansion erodes the subject stake
while raising the platform stake, applies when $w_{S}'<0<w_{P}'$; the
algebra of \eqref{eq:depletion} is valid regardless.) In the additive
benchmark,
\begin{equation*}
\frac{dY^{E}}{dn}
=k_{S}[\alpha(1-2n)+\tau-\lambda]+k_{P}[(1-\alpha)(1-2n)-\tau],
\end{equation*}
and with $\alpha=\tau=0$: $dY^{E}/dn<0\iff k_{S}\lambda>k_{P}(1-2n)$; for
$n\ge\tfrac12$ and $\lambda>0$, $Y^{E\prime}(n)<0$ unambiguously. In steady
state $\bar Q(n)=Y^{E}(n)/\delta$, so
$\operatorname{sign}\bar Q'(n)=\operatorname{sign}Y^{E\prime}(n)$:
\emph{policies that raise reach (portability, interoperability,
competition) can increase current access while reducing the next period's
new memory and the long-run stock.}
\end{theorem}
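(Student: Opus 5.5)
The plan is a chain-rule computation on the equilibrium refresh map of Lemma~\ref{lem:refresh}, followed by a sign analysis and the steady-state map of \eqref{eq:lom}. In the active region ($w_{S}>0$, $w_{P}>0$, $\Delta_{\gamma}>0$), Lemma~\ref{lem:refresh} gives $e^{E},x^{E}$ as smooth functions of $(A_{S},A_{P})=(w_{S}/c_{S},w_{P}/c_{P})$. Since $\Delta_{\gamma}>0$ is an open condition, these are differentiable there. The scope $n$ enters effort choice only through the stakes \eqref{eq:wS}--\eqref{eq:wP}: $R$ is a per-quality revenue term, and $(\alpha,\tau,\lambda,u_{S},u_{P})$ are fixed. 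Hence $Y^{E}(n)=\mathcal Y(w_{S}(n),w_{P}(n))$ and
\[
\frac{d\ln Y^{E}}{dn}=\frac{\partial\ln Y^{E}}{\partial\ln w_{S}}\frac{w_{S}'}{w_{S}}+\frac{\partial\ln Y^{E}}{\partial\ln w_{P}}\frac{w_{P}'}{w_{P}}.
\]
Using $R'(n)=1-2n$, differentiate \eqref{eq:wS}--\eqref{eq:wP} to get $w_{S}'=\alpha(1-2n)+\tau-\lambda$ and $w_{P}'=(1-\alpha)(1-2n)-\tau$. Substituting gives the displayed expression. Condition \eqref{eq:depletion} is then just the statement that this derivative is negative, rearranged. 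No sign restriction on $w_{S}',w_{P}'$ is used, which justifies the parenthetical remark.

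I would verify positivity of $\varepsilon_{j}$ on the active region directly. For $\gamma=0$ it is immediate. For $\gamma>0$, differentiate the closed forms: raising $A_{S}$ raises the numerator of $e^{E}$ and lowers $\Delta_{\gamma}$, so $e^{E}$ increases. Also $x^{E}$ increases (its numerator rises and $\Delta_{\gamma}$ falls). Since $Y$ is increasing in both efforts, $Y^{E}$ increases, and symmetrically in $A_{P}$. I expect this to be the only mildly delicate step. The cleanest route is to write $\Delta_{\gamma}$ explicitly and note that every term in $\partial e^{E}/\partial A_{S}$ has the same sign under $\Delta_{\gamma}>0$.

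For the additive benchmark, \eqref{eq:YE} gives $Y^{E}=k_{S}w_{S}+k_{P}w_{P}$, which is linear in the stakes, so $dY^{E}/dn=k_{S}w_{S}'+k_{P}w_{P}'$. This is the displayed formula. Setting $\alpha=\tau=0$ yields $k_{P}(1-2n)-k_{S}\lambda$, which gives the iff condition. For $n\ge\tfrac12$ and $\lambda>0$, the first term is $\le0$ and the second is $<0$ because $k_{S}>0$.

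One caveat: with $\alpha=\tau=0$ we have $w_{S}=u_{S}-\lambda n$. So "active region" there requires $u_{S}>\lambda n$, or else the subject corners at $e=0$ and the $k_{S}$ term must be dropped. I would flag this explicitly, consistent with Lemma~\ref{lem:refresh}'s instruction that interior derivatives apply only on the active region.

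Finally, the steady state of \eqref{eq:lom} with stationary $(n,\alpha,\tau)$ is $\bar Q=(1-\delta)\bar Q+Y^{E}(n)$, so $\bar Q(n)=Y^{E}(n)/\delta$. Because $\delta>0$ is independent of $n$, $\bar Q'(n)=Y^{E\prime}(n)/\delta$ has the same sign. The policy reading follows because portability, interoperability, and competition act by raising $n$ (via the reach bridge $n=N(M)$ of Definition~\ref{def:stock}). A rise in $n$ raises buyer access $C(n)$ and the number served, but whenever \eqref{eq:depletion} holds it lowers both next-period refresh $Y^{E}$ and the long-run stock $\bar Q$.
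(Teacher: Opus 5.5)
Your proposal is correct and is essentially the paper's proof: the chain rule through the stakes, with $w_{S}'=\alpha(1-2n)+\tau-\lambda$ and $w_{P}'=(1-\alpha)(1-2n)-\tau$ applied to Lemma~\ref{lem:refresh}'s closed forms (linear in the additive case), followed by $\delta\bar Q=Y^{E}$ for the steady state. You add two points the paper leaves implicit, and both are sound: the direct check that $\varepsilon_{j}>0$ when $\gamma>0$ (indeed $\partial e^{E}/\partial A_{S}$ collapses to $(\eta_{S}+\gamma A_{P}\eta_{P})/\Delta_{\gamma}^{2}$), and the caveat that the $\alpha=\tau=0$ clause is an active-region statement only when $u_{S}>\lambda n$ (otherwise the $k_{S}$ term drops, as in the $u_{S}=0$ corners of Section~\ref{sec:corners}).
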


\begin{IEEEproof}
Chain rule on \eqref{eq:YE} (additive case) and on
Lemma~\ref{lem:refresh}'s closed forms (general case), with
$w_{S}'(n)=\alpha(1-2n)+\tau-\lambda$ and
$w_{P}'(n)=(1-\alpha)(1-2n)-\tau$ from \eqref{eq:wS}--\eqref{eq:wP}; the
steady state solves $\delta\bar Q=Y^{E}$.
\end{IEEEproof}

\begin{corollary}[Access gains do not imply continuation-welfare gains]\label{cor:access}
With $W^{E}(n)=\Omega Y^{E}-\tfrac{k_{S}w_{S}^{2}}{2}
-\tfrac{k_{P}w_{P}^{2}}{2}$ and $\Omega'(n)=1-n-\lambda$: even when
$\Omega'(n)>0$ (a marginal deployment has positive static social value),
$W^{E\prime}(n)<0$ on an open parameter region compatible with
$0<\lambda<\tfrac12$. Nonrivalry of copying is not sufficient for monotone
welfare gains from broader deployment: as long as the represented person
remains an endogenous supplier of refresh, deployment scope enters the future
supply function.
\end{corollary}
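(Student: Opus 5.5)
The plan is to prove the corollary by a direct construction: compute $W^{E\prime}(n)$ in closed form in the additive benchmark, then exhibit one interior parameter point where $\Omega'(n)>0$ and $W^{E\prime}(n)<0$ both hold strictly. Continuity in the parameters then delivers an open region.

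\emph{Step 1 (derivative).} In the active region, Lemma~\ref{lem:refresh} gives $Y^{E}=k_{S}w_{S}+k_{P}w_{P}$. Differentiating $W^{E}$ in $n$ and grouping the terms by margin gives
\begin{equation*}
W^{E\prime}(n)=\Omega'(n)\,Y^{E}+k_{S}\bigl(\Omega-w_{S}\bigr)w_{S}'+k_{P}\bigl(\Omega-w_{P}\bigr)w_{P}'.
\end{equation*}
Here $w_{S}'=\alpha(1-2n)+\tau-\lambda$ and $w_{P}'=(1-\alpha)(1-2n)-\tau$, as in the proof of Theorem~\ref{thm:headline}. The decomposition reads as follows. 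The first term is the static gain from marginal deployment. The other two terms are the continuation effects through each party's refresh stake, and each is weighted by how far that party's stake falls short of the social stake $\Omega$.

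\emph{Step 2 (choice of point).} I would specialise to the de facto corner $\alpha=\tau=0$ and evaluate at $n=\tfrac12$. There $w_{P}'=0$, which kills the platform continuation term, and $w_{S}'=-\lambda<0$. The stakes become $w_{S}=u_{S}-\lambda/2$ and $\Omega-w_{S}=u_{P}+B(\tfrac12)=u_{P}+\tfrac38>0$. Hence
\begin{equation*}
W^{E\prime}(\tfrac12)=\bigl(\tfrac12-\lambda\bigr)\bigl(k_{S}w_{S}+k_{P}w_{P}\bigr)-k_{S}\lambda\bigl(u_{P}+\tfrac38\bigr),
\end{equation*}
while $\Omega'(\tfrac12)=\tfrac12-\lambda>0$ for every $\lambda\in(0,\tfrac12)$.

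\emph{Step 3 (sign).} Fix any $\lambda\in(0,\tfrac12)$ and any $u_{P}\ge0$. To stay in the active region, choose $u_{S}$ slightly above $\lambda/2$, so that $w_{S}>0$ is small. Then take $k_{P}/k_{S}$ small enough that the term $(\tfrac12-\lambda)k_{P}w_{P}$ is dominated. The bracketed static gain then tends to zero, while the continuation loss $k_{S}\lambda(u_{P}+\tfrac38)$ stays bounded away from zero, so $W^{E\prime}(\tfrac12)<0$ strictly.

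\emph{Step 4 (openness).} Every expression above is continuous in $(n,\lambda,u_{S},u_{P},k_{S},k_{P},\alpha,\tau)$ on the active region. The three strict inequalities $w_{S},w_{P}>0$, $\Omega'>0$ and $W^{E\prime}<0$ therefore hold on an open neighbourhood of the constructed point, and this neighbourhood lies inside $0<\lambda<\tfrac12$.

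The main obstacle I expect is keeping the construction inside the active region. $w_{S}$ must remain positive so that Lemma~\ref{lem:refresh}'s interior formulas apply, yet it must be small enough that the static gain does not swamp the refresh loss. Choosing $u_{S}\downarrow\lambda/2$ resolves this tension, and the interpretive remark follows. The loss comes entirely from the subject margin, which operates because the represented person remains an endogenous refresh supplier. This is exactly the nonrivalry-is-insufficient point of the corollary.
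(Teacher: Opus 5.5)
Your proposal is correct and is essentially the argument the paper leaves implicit. The corollary has no separate proof; it rests on Theorem~\ref{thm:headline}'s observation that at $\alpha=\tau=0$ and $n\ge\tfrac12$ new refresh falls, while $\Omega'(n)=1-n-\lambda>0$ on $[\tfrac12,1-\lambda)$, an interval that is nonempty exactly when $\lambda<\tfrac12$. You evaluate that case at $n=\tfrac12$, where $w_{P}'=0$. Your decomposition $W^{E\prime}=\Omega'Y^{E}+k_{S}(\Omega-w_{S})w_{S}'+k_{P}(\Omega-w_{P})w_{P}'$ supplies two steps that shortcut skips: $Y^{E\prime}<0$ alone does not sign $W^{E\prime}$ because the static term $\Omega'Y^{E}$ is positive, so a magnitude comparison is needed; and at the de facto corner the active region requires $u_{S}>\lambda n$, which your choice $u_{S}\downarrow\lambda/2$ provides.
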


\begin{proposition}[Fixed buyouts create no refresh margin; renewable claims
do, with financing caveats]\label{prop:renewable}
On the same active branch (participation, rights, $n$ unchanged), a fixed
buyout price $F$ satisfies
$\partial e^{E}/\partial F=\partial x^{E}/\partial F=0$; an alienating sale
leaving $w_{S}^{\mathrm{sale}}\le0$ induces $e^{\mathrm{sale}}=0$. A
renewable claim $\Delta w_{S}>0$ raises subject effort by
$\Delta e=(\eta_{S}/c_{S})\Delta w_{S}$ always, but its effect on total
refresh depends on financing: if funded by new extractable surplus or outside
funds (locally $w_{P}$ constant), $\Delta Y=k_{S}\Delta w_{S}$ and
$\Delta\bar Q=(k_{S}/\delta)\Delta w_{S}$; if a budget-balanced reallocation
($\Delta w_{P}=-\Delta w_{S}$), then $\Delta Y=(k_{S}-k_{P})\Delta w_{S}$,
which is positive iff the subject is the more productive refresher, and
welfare-improving only when it moves stakes toward
Corollary~\ref{cor:stake}'s $w_{S}^{\star}$. When continuation refresh has
substantial value, the essence of ``memory as an asset'' is not minting a
transferable token but designing a relational contract that keeps paying for
refresh.
\end{proposition}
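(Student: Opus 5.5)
The argument uses only the additive-benchmark closed forms of Lemma~\ref{lem:refresh}, $e^{E}=\eta_{S}w_{S}/c_{S}$ and $x^{E}=\eta_{P}w_{P}/c_{P}$, together with the stake definitions \eqref{eq:wS}--\eqref{eq:wP} and the payoffs \eqref{eq:payoffs}. The approach is to check, claim by claim, which arguments the equilibrium efforts depend on. Since $n$ and the rights $(\alpha,\tau)$ are held fixed on the active branch, every comparative static reduces to a change in $(w_{S},w_{P})$.

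\textbf{Buyout.} The fixed transfer $F$ enters $\pi_{S}$ and $\pi_{P}$ additively and is independent of $e$ and $x$. Differentiating \eqref{eq:payoffs} therefore gives the same first-order conditions $c_{S}e=w_{S}\eta_{S}$ and $c_{P}x=w_{P}\eta_{P}$ as before. Because neither $w_{S}$ nor $w_{P}$ contains $F$, we get $\partial e^{E}/\partial F=\partial x^{E}/\partial F=0$.

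\textbf{Alienating sale.} If the sale leaves $w_{S}^{\mathrm{sale}}\le 0$, the subject's objective $w_{S}\eta_{S}e-c_{S}e^{2}/2$ (holding $x$ fixed; the cross term vanishes under additivity) is strictly decreasing on $e>0$. The corner clause of Lemma~\ref{lem:refresh} then gives $e^{\mathrm{sale}}=0$.

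\textbf{Renewable claim.} Linearity of $e^{E}$ in $w_{S}$ gives $\Delta e=(\eta_{S}/c_{S})\Delta w_{S}$ directly, provided the branch stays active. We will make this explicit: both the pre- and post-claim stakes must be positive, or else use the corner formula.

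From $Y^{E}=k_{S}w_{S}+k_{P}w_{P}$ in \eqref{eq:YE}, the effect on refresh splits into two financing cases.
\begin{itemize}
\item If the claim is externally financed, $\Delta w_{P}=0$ and $\Delta Y=k_{S}\Delta w_{S}$. The steady-state relation $\bar Q=Y^{E}/\delta$ from Theorem~\ref{thm:headline} then gives $\Delta\bar Q=(k_{S}/\delta)\Delta w_{S}$.
\item If the claim is a budget-balanced reallocation, $\Delta w_{P}=-\Delta w_{S}$ and $\Delta Y=(k_{S}-k_{P})\Delta w_{S}$. This is positive iff $k_{S}>k_{P}$.
\end{itemize}

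\textbf{Welfare under reallocation.} Invoke Corollary~\ref{cor:stake}. With $n$ fixed, $\Omega$ and $T$ are fixed, and decentralized welfare $W(w_{S})$ is strictly concave on the budget-balanced set, maximized at $w_{S}^{\star}$. Concavity gives two facts:
\begin{itemize}
\item A small reallocation $\Delta w_{S}$ raises $W$ iff $\Delta w_{S}$ has the sign of $w_{S}^{\star}-w_{S}$.
\item A finite reallocation raises $W$ iff the new stake is strictly closer to $w_{S}^{\star}$ in the sense of the quadratic, i.e.\ $|w_{S}+\Delta w_{S}-w_{S}^{\star}|<|w_{S}-w_{S}^{\star}|$ at an interior optimum.
\end{itemize}
Together these deliver the ``welfare-improving only when it moves stakes toward $w_{S}^{\star}$'' clause.

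\textbf{Main obstacle.} The delicate step is the active-branch bookkeeping rather than any calculation. Reallocation can push $w_{P}$ to zero, which kills the platform margin; there the formula $\Delta Y=(k_{S}-k_{P})\Delta w_{S}$ must be replaced by $k_{S}\Delta w_{S}-k_{P}w_{P}$. Projection effects in \eqref{eq:wstar} must also be handled. I would state the result for changes that keep both stakes positive and treat the boundary by the projection remark of Corollary~\ref{cor:stake}. The closing interpretive sentence is a reading of these formulas, not a claim requiring proof.
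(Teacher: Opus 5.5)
Your proposal is correct and follows the paper's route. The paper gives no separate proof of Proposition~\ref{prop:renewable}. It treats the result as a direct read-off from four facts, which are exactly the chain you use: $F$ moves rents but does not enter the stakes \eqref{eq:wS}--\eqref{eq:wP}; the additive closed forms and corner branch of Lemma~\ref{lem:refresh}; the steady state $\bar Q=Y^{E}/\delta$; and the strict concavity of Corollary~\ref{cor:stake}'s $W(w_{S})$. Your active-branch bookkeeping, with $\Delta Y=k_{S}\Delta w_{S}-k_{P}w_{P}$ when a reallocation drives $w_{P}$ below zero, correctly sharpens the statement's ``same active branch'' qualifier rather than departing from it.
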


\subsection{Ownership as it ought to be and as it is}\label{sec:corners}

The public debate the paper addresses has a normative pole (memory
\emph{ought} to belong to the person it represents) and a positive pole
(memory \emph{is}, under current conditions, held by the platforms that
extract it). In the space of Definition~\ref{def:bundle} these are two corner
institutions:
\begin{align*}
g_{S}&=(c{=}S,\;\alpha{=}1,\;\tau{=}0)\ \ \text{(the \emph{ought} corner)},\\
g_{P}&=(c{=}P,\;\alpha{=}0,\;\tau{=}0)\ \ \text{(the \emph{is} corner)}.
\end{align*}
The comparison below is stated in the additive benchmark with
$u_{S}=u_{P}=0$; all closed forms are exact there.

\begin{proposition}[The ought and the is]\label{prop:corners}
\begin{enumerate}
\item[(i)] \textbf{(Symmetric margin destruction.)} Under $g_{P}$,
$w_{S}=-\lambda n\le0$ and $e^{E}=0$; under $g_{S}$, $w_{P}=0$ and
$x^{E}=0$. Each corner converts a co-production technology into single-input
production: whoever does not own supplies no refresh.
\item[(ii)] \textbf{(Scope.)} Because the non-owner's margin is dead, each
owner's ex-ante payoff is
$\pi_{j}=w_{j}(n)(1-\delta)Q_{t}+\tfrac{k_{j}w_{j}(n)^{2}}{2}$, monotone in
the own stake, so the ex-ante scope choice coincides with the fixed-stock
program of Proposition~\ref{prop:scope}. The ought corner internalizes
exposure fully but monopolizes: $n_{S}=\tfrac{1-\lambda}{2}=\tfrac{n_{W}}{2}$,
so deployment is always exactly half the first best, never excessive. The
is corner is exposure-blind: $n_{P}=\tfrac12$, with the reversal regimes of
Corollary~\ref{cor:reversal}.
\item[(iii)] \textbf{(Welfare ranking is an empirical question.)} Corner
welfares are
\begin{equation}\label{eq:corner-welfare}
W_{S}=\frac{k_{S}(1-\lambda)^{4}}{16},\qquad
W_{P}=\frac{k_{P}(1-2\lambda)}{16},
\end{equation}
so the ought corner dominates the is corner iff
$k_{S}(1-\lambda)^{4}>k_{P}(1-2\lambda)$: always when $\lambda\ge\tfrac12$,
and at low exposure iff the subject is (roughly) the more productive
refresher, $k_{S}/k_{P}>(1-2\lambda)/(1-\lambda)^{4}$. The ownership debate
thereby reduces to two measurable quantities: exposure intensity $\lambda$
and the refresh-productivity ratio $k_{S}/k_{P}$.
\item[(iv)] \textbf{(Neither corner is the normative target.)} Fix either
corner's scope $n$ and the implied total stake $T(n)>0$. Whenever
Corollary~\ref{cor:stake}'s optimum is interior
($0<(k_{S}-k_{P})\Omega+k_{P}T<(k_{S}+k_{P})T$, which holds for all
$k_{S}/k_{P}$ in a neighborhood of $1$), reallocating marginal claims from
the corner to $w_{S}^{\star}\in(0,T)$ strictly raises welfare, and
re-optimizing $n$ raises it further. The corners are dominated by interior
renewable-claim bundles: the correct normative object is not ``who owns''
but the stake profile $(w_{S}^{\star},w_{P}^{\star})$.
\end{enumerate}
\end{proposition}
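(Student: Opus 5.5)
The plan is to work entirely in the additive benchmark ($\gamma=0$, $u_{S}=u_{P}=0$), where Lemma~\ref{lem:refresh} gives closed forms. Each part then reduces to substituting the corner values of $(\alpha,\tau)$ into \eqref{eq:wS}--\eqref{eq:Omega}.

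For (i), I substitute $\alpha=\tau=0$ into \eqref{eq:wS} to get $w_{S}=-\lambda n\le0$. Substituting $\alpha=1,\tau=0$ into \eqref{eq:wP} gives $w_{P}=0$. The corner clause of Lemma~\ref{lem:refresh} then sets $e^{E}=0$ and $x^{E}=0$ respectively, which is the same observation already recorded in Corollary~\ref{cor:reversal}.

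For (ii), I take the owner $j$ and use that the other input is zero, so $Y^{E}=k_{j}w_{j}$. Since $e^{E}=\eta_{j}w_{j}/c_{j}$, the effort cost equals $k_{j}w_{j}^{2}/2$. Plugging $\widehat Q=(1-\delta)Q_{t}+k_{j}w_{j}$ into the owner's ex-ante payoff gives $\pi_{j}=w_{j}(1-\delta)Q_{t}+k_{j}w_{j}^{2}/2$. This is strictly increasing in $w_{j}$ on $w_{j}\ge0$, so the owner's scope choice is $\arg\max_{n}w_{j}(n)$.
\begin{itemize}
\item Under $g_{S}$, $w_{S}=n(1-n)-\lambda n$, which is maximized at $n_{S}=(1-\lambda)/2=n_{W}/2$, in agreement with Proposition~\ref{prop:scope}.
\item Under $g_{P}$, $w_{P}=n(1-n)$, which is maximized at $n_{P}=\tfrac12$.
\end{itemize}

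For (iii), I evaluate the incremental welfare of \eqref{eq:payoffs}, $W=k_{j}w_{j}(\Omega-w_{j}/2)$, at these scopes.
\begin{itemize}
\item At $g_{S}$, write $n=(1-\lambda)/2$. Then $w_{S}=n^{2}$ and $\Omega=n(1-\lambda)-n^{2}/2=\tfrac32n^{2}$, so $W_{S}=k_{S}n^{4}=k_{S}(1-\lambda)^{4}/16$.
\item At $g_{P}$, $w_{P}=\tfrac14$ and $\Omega=\tfrac38-\tfrac{\lambda}{2}$, so $W_{P}=k_{P}(1-2\lambda)/16$.
\end{itemize}
The ranking follows by comparison. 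For $\lambda\ge\tfrac12$ we have $W_{P}\le0\le W_{S}$, with strict inequality except in degenerate cases. For $\lambda<\tfrac12$ the condition rearranges to the stated ratio test in $k_{S}/k_{P}$.

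For (iv), I fix the corner's $n$ and invoke Corollary~\ref{cor:stake}. The function $W(w_{S})$ is strictly concave on $[0,T]$ with unique maximizer $w_{S}^{\star}$. The interiority hypothesis places $w_{S}^{\star}$ strictly inside $(0,T)$, so $W(w_{S}^{\star})>\max\{W(0),W(T)\}$.

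The $g_{S}$ corner fits this framework directly: its effective profile is $(w_{S},w_{P})=(T,0)$, the right endpoint of the segment.

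The main obstacle is $g_{P}$. Its effective profile is not a point of the budget-balanced segment, because the negative stake $-\lambda n$ is truncated to zero effort. Effectively the subject has stake $0$ and the platform has stake $R(n)=T+\lambda n>T$. So I cannot simply cite $W(0)$. What I would try first is a direct comparison:
\begin{itemize}
\item Write the $g_{P}$ welfare as $k_{P}(\Omega R-R^{2}/2)$.
\item Exhibit an explicit interior bundle, using $\tau=\lambda$ to cancel exposure and $\alpha$ to split $R$, that implements $(w_{S}^{\star},T-w_{S}^{\star})$.
\item Show its welfare exceeds the $g_{P}$ value whenever $k_{S}/k_{P}$ is near $1$. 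At $k_{S}=k_{P}$ the comparison becomes one of two concave quadratics, which should be checkable by hand near $n=\tfrac12$.
\end{itemize}
If that comparison fails on part of the region, I would restrict the claim to the subregion where $W(w_{S}^{\star})$ dominates. Alternatively, I would argue via a local perturbation, raising $\alpha$ past the threshold $\alpha R=\lambda n$ with $\tau$ adjusted, and sign $dW$ there.

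Finally, the statement that re-optimizing $n$ raises welfare further is immediate. The fixed-$n$ interior bundle is feasible in the joint program over $(n,w_{S})$, so maximizing over $n$ can only weakly increase welfare, and it strictly increases it unless the first-order condition in $n$ happens to hold at the corner scope.
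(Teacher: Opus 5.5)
\textbf{Parts (i)--(iii).} Your arguments for (i)--(iii) are the paper's own: substitute into \eqref{eq:wS}--\eqref{eq:wP} and use the corner branch of Lemma~\ref{lem:refresh}; note that the owner's payoff is monotone in $w_{j}$, so scope solves $w_{j}'(n)=0$; then substitute directly into $W=k_{j}w_{j}(\Omega-w_{j}/2)$. Your algebra checks.

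\textbf{Part (iv): the paper's argument only covers $g_{S}$.} The paper's whole proof of (iv) is ``strict concavity of $W(w_{S})$.'' That is exactly your argument for $g_{S}$, whose profile $(T,0)$ is an endpoint of the segment. Your objection to using it for $g_{P}$ is correct, and the paper never addresses it. The actual $g_{P}$ welfare, which part (iii) itself uses, is $k_{P}(\Omega R-R^{2}/2)$ with $R=T+\lambda n$. At $n=\tfrac12$ the endpoint value is $W(0)=k_{P}(\Omega T-T^{2}/2)=(1-\lambda)W_{P}<W_{P}$ for every $\lambda\in(0,\tfrac12)$. Concavity therefore only delivers $W(w_{S}^{\star})>(1-\lambda)W_{P}$, not $W(w_{S}^{\star})>W_{P}$.

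\textbf{The gap: your planned comparison fails.} The direct comparison you plan for $g_{P}$ cannot be completed, because the $g_{P}$ half of (iv) is false on part of its stated domain. Write
$W(w)=\tfrac{k_{S}+k_{P}}{2}\Omega^{2}-\tfrac{k_{S}}{2}(\Omega-w)^{2}-\tfrac{k_{P}}{2}(\Omega-T+w)^{2}$.
On the interior branch this gives
$W(w_{S}^{\star})=\tfrac{k_{S}+k_{P}}{2}\Omega^{2}-\tfrac{k_{S}k_{P}}{2(k_{S}+k_{P})}(2\Omega-T)^{2}$.
So at fixed scope the interior bundle beats $g_{P}$ if and only if
$k_{S}\Omega^{2}+k_{P}(\Omega-R)^{2}>\tfrac{k_{S}k_{P}}{k_{S}+k_{P}}(2\Omega-T)^{2}$.

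Now take $k_{S}=k_{P}=k$ and $n=\tfrac12$. There $w_{S}^{\star}=T/2$ is interior for all $\lambda<\tfrac12$, so the hypothesis of (iv) holds. Yet $W(w_{S}^{\star})-W_{P}=k(1-2\lambda)(1-6\lambda)/64$, which is negative for $\lambda\in(\tfrac16,\tfrac12)$. For example, at $\lambda=0.3$ you get $W_{P}=0.025k$ but $W(w_{S}^{\star})=0.02k$.

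Re-optimizing $n$ does not rescue the claim. The best interior bundle (Theorem~\ref{thm:secondbest}(iii)) attains $\tfrac{6\sqrt3-9}{16}k(1-\lambda)^{4}\approx0.0209k<0.025k$ at $\lambda=0.3$.

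Your local-perturbation route fails for the same structural reason. Any budget-balanced move away from $g_{P}$ must first strip $\lambda n$ of stake from the platform before the subject's effort responds at all. While $R<\Omega$, that transfer only lowers welfare.

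\textbf{What can actually be proved.} Your fallback is the right conclusion. Part (iv) holds as stated for $g_{S}$. For $g_{P}$ it holds only where the displayed inequality holds, for instance $\lambda<\tfrac16$ at $k_{S}=k_{P}$ and fixed scope. The paper's own Remark~\ref{rem:compensated} shows the same fact: it finds $W_{A}<W_{P}$ for $\lambda<\lambda^{\dagger}$, i.e.\ moving $g_{P}$ to the endpoint $w_{S}=0$ lowers welfare.
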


\begin{IEEEproof}
(i) Substitute into \eqref{eq:wS}--\eqref{eq:wP} and apply
Lemma~\ref{lem:refresh}'s corner branch. (ii) Envelope on the owner's effort
choice; the FOC reduces to $w_{j}'(n)=0$, giving
Proposition~\ref{prop:scope}'s scopes. (iii)
$W^{E}=\Omega\,k_{j}w_{j}-\tfrac{k_{j}w_{j}^{2}}{2}$ at each corner's scope;
direct substitution. (iv) Strict concavity of $W(w_{S})$ in
Corollary~\ref{cor:stake}. (With the capitalization factor restored, the
steady-state flow versions are $W_{S}\propto k_{S}(3-\delta)(1-\lambda)^{4}$
and $W_{P}\propto k_{P}(3-\delta-4\lambda)$; the $\delta=1$ case reduces to
the displayed threshold, and the comparative statics are unchanged.)
\end{IEEEproof}

\begin{remark}[Compensated platform control; the earlier draft's
Proposition~2 recovered]\label{rem:compensated}
A third bundle completes the map: $g_{A}=(c{=}P,\alpha{=}0,\tau{=}\lambda)$,
platform control with full per-quality-deployment compensation, the
formalization of ``impeccable consent.'' Because compensation makes the
platform internalize $\lambda$, it chooses \emph{the same scope as the ought
corner}, $n_{A}=\tfrac{1-\lambda}{2}=\tfrac{n_{W}}{2}$, exactly the
earlier draft's Proposition~2 (``deployment exactly half the first best under
full compensation''), now derived inside the unified model. The opposite
margin, however, survives: $w_{P}=\tfrac{(1-\lambda)^{2}}{4}$, $w_{S}=0$.
Corner welfares become perfectly symmetric,
$W_{A}=\tfrac{k_{P}(1-\lambda)^{4}}{16}$ versus
$W_{S}=\tfrac{k_{S}(1-\lambda)^{4}}{16}$: once the deployment wedge is
priced, the choice between subject ownership and fully-compensating platform
ownership is \emph{purely} the question of whose refresh input is more
productive. One caution: $W_{A}<W_{P}$ iff $(1-\lambda)^{4}<1-2\lambda$,
i.e.\ for all $\lambda$ below $\lambda^{\dagger}\approx0.456$: in this
distortion-free benchmark, mandating full compensation at low exposure
\emph{lowers} welfare, because it taxes the only live refresh margin. The
efficiency case for compensation is not this static one; it is Section~\ref{sec:endog}'s
pollution channel, where compensation buys authenticity from strategic
subjects.
\end{remark}

\begin{remark}[Why the \emph{is} prevails, and what follows]\label{rem:defacto}
The is corner is not an accident of law; it is selected by feasibility.
Subject control presupposes the objective conditions the model names:
\emph{Separability} (Definition~\ref{def:artifact}), meaning the artifact
must be extractable from the host at all; functional portability $\phi$
(Definition~\ref{def:bundle}), meaning that exporting it must preserve
behavioral value; and attention $\zeta$, meaning the subject must perceive
her exposure. When $\phi\approx0$ or $\zeta\ll1$, $g_{S}$ is technologically
or behaviorally infeasible, and the market defaults to $g_{P}$ regardless of
what anyone thinks \emph{ought} to hold. Section~\ref{sec:endog} sharpens this
from a constraint into
an equilibrium: with naive subjects present, the platform profits from
opacity and zero compensation (obfuscation substitutes for compensation), so
the conditions that keep $g_{S}$ infeasible are themselves maintained by the
incumbent's optimal play: the is perpetuates itself. Two policy
consequences follow. First, by Theorem~\ref{thm:rights}(ii), \emph{renaming
the owner without moving $\phi$ and $\zeta$ changes neither scope nor
refresh}: nominal ownership reform is not the binding margin, portability and
salience are. Second, by (iii)--(iv), when those conditions can be moved, the
target is not the ought corner but the interior renewable-claim bundle,
since the ought, taken literally, would destroy the platform's refresh
margin exactly as the is destroys the subject's.
\end{remark}

\subsection{Taking stock}\label{sec:taking-stock}

The section has delivered three results, one mechanism, and a resolution of
the ought--is question.
Proposition~\ref{prop:ceiling} fixes the ultimate scarcity: decision value is
bounded by lived experience. Proposition~\ref{prop:scope} and
Corollary~\ref{cor:reversal} show that with the stock frozen, person linkage
already reverses the direction of the platform's distortion, and silently
destroys the subject's supply margin. Theorem~\ref{thm:rights} shows the
reversal cannot be repaired by reassigning ownership: co-production plus
budget balance leaves at least one margin under-incentivized, and control and
cash flow are non-substitutable instruments. Theorem~\ref{thm:headline}
delivers the dynamic consequence: because refresh is relationally supplied,
reach-expanding policy can trade current access against the long-run stock,
and Corollary~\ref{cor:access} shows the trade can be welfare-negative even
when each marginal deployment is statically valuable.
Proposition~\ref{prop:corners} then closes the framing: the ought and the is
are twin corners that each destroy one supply margin; their ranking is an
empirical question in $(\lambda,k_{S}/k_{P})$; and both are dominated by
interior bundles that keep \emph{both} co-producers on renewable claims.
Everything above holds under perfect provenance and complete information:
the failures are in the asset's person linkage and its co-produced renewal,
not in verification or beliefs. Section~\ref{sec:endog} endogenizes what was held fixed
here: the subject's option to distort her revealed behavior (pollution), the
participation margin that determines thickness (formation), and with them the
mechanism by which the is corner reproduces itself.

\section{Endogenous Quality and Market Formation}\label{sec:endog}

Section~\ref{sec:results} held two things fixed that no real memory market
holds fixed: the \emph{authenticity} of the subject's revealed behavior, and
the \emph{set} of subjects who supply at all. This section endogenizes both,
and in doing so completes the positive theory promised in
Section~\ref{sec:corners}: the \emph{is} corner is not merely inefficient;
it is \emph{self-perpetuating}, because the platform's equilibrium play
maintains the very conditions that make subject control infeasible.
Throughout, Assumptions~\ref{as:provenance}--\ref{as:completeinfo} remain in
force: everything below happens under perfect provenance, and the only
informational friction we add, the naive share $\nu$, is about
\emph{awareness of extraction}, not about fabricated lineage or hidden
quality.

\subsection{Setup: distortion, awareness, and compensation coverage}\label{sec:pollution-setup}

Before extraction, an \emph{aware} subject can distort her revealed behavior
at strictly convex cost $\kappa_{d}(d)$, $\kappa_{d}(0)=\kappa_{d}'(0)=0$,
$\kappa_{d}''>0$, reducing effective fidelity to $\tilde q=(1-d)Q$ (closed
forms use $\kappa_{d}(d)=c_{d}d^{2}/2$). Distortion is the individually
rational counterpart of Proposition~\ref{prop:ceiling}: the subject cannot
raise the ceiling of what her history contains, but she can lower what
extraction recovers from it.

Two objects are derived from Section~\ref{sec:model} rather than introduced
by assumption. The subject's \emph{compensation coverage} per unit of
quality-deployment is
\begin{equation}\label{eq:coverage}
s:=\tau+\frac{\alpha R(n)}{n},
\end{equation}
her total receipts per quality-deployment under the rights bundle $g$; her
net exposure per unit of effective quality is then $(\lambda-s)D$, where $D$
is the \emph{deployment scale}. In the single-unit model of
Section~\ref{sec:model}, $D=n$; we keep it as a separate symbol because the
aggregate market of Section~\ref{sec:formation-setup} multiplies it by
thickness. A share $\nu\in[0,1)$ of subjects is \emph{naive} (unaware that
extraction occurs); the platform chooses opacity $o\in\{0,1\}$ ($o=1$
preserves naivety; $o=0$, or a transparency mandate $\mathsf T=1$, converts
naive into aware).
Naivety is the discrete counterpart of the attention parameter $\zeta$ of
Section~\ref{sec:attention}.

\subsection{Pollution and instrument complementarity}\label{sec:pollution}

\begin{theorem}[Pollution and complementarity]\label{thm:pollution}
\begin{enumerate}
\item[(i)] \textbf{(Endogenous fidelity ceiling.)} The aware subject's
distortion satisfies
\begin{equation}\label{eq:distortion-foc}
\kappa_{d}'(d^{*})=(\lambda-s)^{+}\,QD,
\end{equation}
so $d^{*}>0$ iff exposure is uncompensated ($s<\lambda$), and in that case
effective fidelity $\tilde q^{*}=(1-d^{*})Q$ is strictly decreasing in market
scale $D$: the platform-controlled memory market pollutes its own raw
material as it grows. In the quadratic benchmark
$d^{*}=\min\{1,(\lambda-s)nQ/c_{d}\}$.
\item[(ii)] \textbf{(Compensation is quality control.)} If $s\ge\lambda$,
then $d^{*}=0$ and, since $s\ge\lambda$ implies
$w_{S}=u_{S}+\alpha R+(\tau-\lambda)n\ge u_{S}\ge0$ with strict inequality
whenever coverage is strict, refresh effort recovers along with authenticity.
Paying the subject is not (only) redistribution; it purchases authenticity.
Asset quality is a function of the institution.
\item[(iii)] \textbf{(Obfuscation equilibrium and instrument
complementarity.)} Naive subjects neither distort nor chill and are the
platform's cheapest high-quality supply, so for $\nu>0$ the platform's
one-shot optimum is $(o,s)=(1,0)$: obfuscation substitutes for compensation.
Consequently: a transparency mandate alone ($\mathsf T=1$, $s=0$) converts
naive into distorting subjects, so effective quality and buyer surplus fall;
compensation alone ($\mathsf T=0$, $s\ge\lambda$) restores the aware
subjects' authenticity but leaves the naive exposed without informed consent;
only the pair ($\mathsf T=1$ and $s\ge\lambda$), or subject control itself,
restores authenticity and voluntariness together. Transparency and
compensation are complements, not substitutes.
\item[(iv)] \textbf{(Commitment.)} A promised $s\ge\lambda$ is not credible
in the one-shot game: once $\tilde q$ is supplied, compensation is a pure
transfer and the platform reneges; anticipating this, aware subjects distort.
A rule-based cession of revenue rights (tokenized royalties or governance
commitments in the sense of \cite{sockinxiong2023}) implements
$s\ge\lambda$ and is in the platform's own interest iff
\begin{equation}\label{eq:commit}
(1-\nu)\,R(n)\,Q>c_{d}:
\end{equation}
few naive suppliers, cheap distortion technology, high stakes. Where the
inequality fails (abundant naivety), the platform strictly prefers
opacity, which is part (iii)'s equilibrium.
\end{enumerate}
\end{theorem}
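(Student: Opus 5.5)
The proof treats the four parts in order, since each feeds the next. For (i), I would set up the aware subject's distortion problem explicitly. Distortion lowers effective fidelity to $\tilde q=(1-d)Q$. Per unit of effective quality and per deployment, the subject bears exposure $\lambda$ and receives coverage $s$ from \eqref{eq:coverage}.

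Her distortion-stage payoff is therefore $-(\lambda-s)(1-d)QD-\kappa_{d}(d)$. I would take the refresh stake as predetermined at this stage; the timing puts distortion before extraction, so its interaction with effort is a separate margin. This payoff is strictly concave in $d$ because $\kappa_{d}''>0$. The first-order condition is \eqref{eq:distortion-foc} when $\lambda>s$. When $\lambda\le s$ the marginal benefit of distortion is nonpositive, and $\kappa_{d}'(0)=0$ gives the corner $d^{*}=0$; this yields the $(\cdot)^{+}$.

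Comparative statics in $D$ follow from the implicit function theorem. We have $\partial d^{*}/\partial D=(\lambda-s)Q/\kappa_{d}''(d^{*})>0$ on the interior branch, so $\tilde q^{*}$ is strictly decreasing in $D$. The quadratic closed form, with $D=n$ and projection onto $[0,1]$, is immediate.

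For (ii), $d^{*}=0$ when $s\ge\lambda$ is the corner case of (i). For the refresh claim I would multiply coverage by $n$, which gives $sn=\tau n+\alpha R(n)$. Substituting into \eqref{eq:wS} yields $w_{S}=u_{S}+(s-\lambda)n\ge u_{S}$. When $s>\lambda$ and $n>0$ this gives $w_{S}>0$. Lemma~\ref{lem:refresh} then delivers $e^{E}>0$, in contrast with the degenerate subject margin of Corollary~\ref{cor:reversal}.

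For (iii), the platform's problem is written over $(o,s)$. Naive subjects set $d=0$ regardless of $s$, because they do not perceive extraction. Any $s>0$ paid to them is a pure transfer, and opacity keeps them naive at zero cost. Hence, for a fixed mass of aware subjects, the choice $o=1$ weakly dominates $o=0$, strictly when $\nu>0$. The remaining question is whether paying $s\ge\lambda$ to the aware share is worth it.

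In the one-shot game this question is settled by (iv), since compensation is not credible. So the one-shot optimum is $s=0$, and with it $(1,0)$. The complementarity statements are then comparisons of cases:
\begin{itemize}
\item Under $\mathsf T=1,s=0$, the naive mass $\nu$ moves to $d^{*}>0$, so average $\tilde q$ falls and buyer surplus $C(n)\tilde q$ falls.
\item Under $\mathsf T=0,s\ge\lambda$, aware subjects choose $d=0$, but the naive remain uninformed.
\item Only the joint instrument (or $g_{S}$) reaches $d=0$ for all subjects with informed consent.
\end{itemize}
I would phrase complementarity as the supermodularity of the quality gain in $(\mathsf T,\mathbf 1\{s\ge\lambda\})$, read off a $2\times2$ table.

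Part (iv) is the main obstacle, because \eqref{eq:commit} has to come out of a specific comparison. First I would run backward induction: after $\tilde q$ is supplied, paying $s$ only lowers $\pi_{P}$, so the platform sets $s=0$. Aware subjects anticipate this and choose $d^{*}=\min\{1,\lambda nQ/c_{d}\}$.

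Under a binding commitment the aware subjects choose $d=0$. The platform's revenue gain on the aware share is then $(1-\nu)R(n)Qd^{*}$, and its cost is the coverage transfer. I would try the case in which distortion is at the corner $d^{*}=1$, which happens when $\lambda nQ\ge c_{d}$. I would then net the transfer against the recovered refresh stake. The aim is to reduce the condition to $(1-\nu)R(n)Q>c_{d}$, which is the threshold at which authentic aware supply is worth more than the saving from opacity.

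If the exact normalization does not reproduce \eqref{eq:commit} directly, I would state the threshold in the form my derivation gives and show that it has the same comparative statics (decreasing in $\nu$, increasing in $RQ$, decreasing in $c_{d}$). When it fails, the platform's best response reverts to the opacity equilibrium of (iii).
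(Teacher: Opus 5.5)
Your parts (i)--(iii) follow the paper's argument. In (iii) the paper computes $\Pi(o{=}1,s{=}0)-\Pi(o{=}0,s{=}0)=\nu R(n)Q\,d^{*}(0)>0$, and pinning $s=0$ through one-shot non-credibility, as you do, is a legitimate way to close that step.

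The gap is in (iv). You never derive the threshold $(1-\nu)R(n)Q>c_{d}$, and the route you propose cannot produce it. On the corner branch $d^{*}(0)=1$ the distortion no longer depends on $c_{d}$. Per-subject profit there is $(R(n)-\lambda n)Q$ under the rule and $\nu R(n)Q$ under opacity. The comparison is therefore $(1-\nu)R(n)>\lambda n$, with no $c_{d}$ in it. On that branch the stated condition is in fact the wrong one: with $\nu=0$, $n=\tfrac12$, $Q=1$, $\lambda=0.6$ and $c_{d}=0.1$ it holds, yet committing yields $-0.05$ against $0$ under opacity. Netting against a ``recovered refresh stake'' is not part of the comparison either, since the commitment decision compares flow profits $[R(n)-sn]\tilde q$ at fixed $n$. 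Falling back on ``the same comparative statics'' would not establish an iff.

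The missing idea is a cancellation on the interior branch $\lambda nQ<c_{d}$. There $d^{*}(0)=\lambda nQ/c_{d}$ is proportional to exactly the quantity $\lambda nQ$ that the rule transfers per subject. Under the rule every subject, naive or aware, is paid and supplies $Q$, so per-subject profit is $(R(n)-\lambda n)Q$. Under opacity it is $R(n)Q[\nu+(1-\nu)(1-d^{*}(0))]$. The difference is
\[
-\lambda nQ+(1-\nu)R(n)Q\,\frac{\lambda nQ}{c_{d}}
=\frac{\lambda nQ}{c_{d}}\bigl[(1-\nu)R(n)Q-c_{d}\bigr],
\]
which is positive iff $(1-\nu)R(n)Q>c_{d}$.

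The bookkeeping matters. The transfer is charged on the whole unit mass, while the recovered quality accrues only on the aware mass $1-\nu$. If you charged the transfer to the aware share alone, the $(1-\nu)$ would cancel and leave $R(n)Q>c_{d}$. You should also state the interior restriction $\lambda nQ<c_{d}$ explicitly, since that is the only branch on which the claimed iff holds.
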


\begin{IEEEproof}
(i) The aware subject minimizes
$(\lambda-s)^{+}(1-d)QD+\kappa_{d}(d)$; the FOC is
\eqref{eq:distortion-foc}, and $d^{*}>0$ iff the marginal benefit at $d=0$ is
positive, i.e.\ $s<\lambda$;
$\partial d^{*}/\partial D=(\lambda-s)Q/\kappa_{d}''>0$ by the implicit
function theorem, so $\partial\tilde q^{*}/\partial D<0$. (ii) At
$s\ge\lambda$ the objective is nonincreasing in $d$; $d^{*}=0$; the stake
bound is \eqref{eq:wS}. (iii) Platform flow profit per subject is
$[R(n)-sn]\tilde q(s;\text{awareness})$; a naive subject supplies
$\tilde q=Q$ at any $s$, so profit from the naive is maximized at $s=0$, and
$o=1$ preserves that supply at zero cost; an aware subject at $s=0$ supplies
$Q(1-d^{*}(0))$. Hence
$\Pi(o{=}1,s{=}0)-\Pi(o{=}0,s{=}0)=\nu R(n)Q\,d^{*}(0)>0$ for $\nu>0$,
$d^{*}(0)>0$: opacity strictly dominates voluntary transparency. The welfare
statements follow cell by cell: $(\mathsf T{=}1,s{=}0)$ replaces naive supply
$Q$ by $Q(1-d^{*}(0))$ (quality and buyer surplus fall by $R$- and
$C$-proportionality); $(\mathsf T{=}0,s{\ge}\lambda)$ leaves a measure $\nu$
of subjects bearing $\lambda\tilde qD$ without informed choice: monetary
compensation reaches them, but voluntariness (the constraint
$\ell\in\mathcal R_{i}(m)$, Assumption~\ref{as:commitment}) does not; both
instruments together set $d^{*}=0$ with all subjects aware. (iv) The renege
gain is $sn\tilde q>0$, so any $s>0$ is sequentially irrational without a
binding rule. Under the rule, per-subject profit is $(R(n)-\lambda n)Q$
against $R(n)Q[\nu+(1-\nu)(1-d^{*}(0))]$ under opacity; in the quadratic
benchmark the difference is
$\lambda nQ\,[(1-\nu)R(n)Q-c_{d}]/c_{d}$, positive iff \eqref{eq:commit}.
\end{IEEEproof}

\begin{remark}[Misrepresentation robustness]\label{rem:misrep}
With exposure $\lambda_{P}\tilde q+\lambda_{M}(1-\tilde q)$, distortion's
private return falls; for large $\lambda_{M}$ the subject prefers
high-fidelity-or-exit, and the policy content of (iii) extends from
compensation to \emph{genuinely feasible exit rights}.
\end{remark}

\begin{corollary}[The \emph{is} corner is self-perpetuating]\label{cor:selfperp}
In the obfuscation equilibrium of Theorem~\ref{thm:pollution}(iii), the
platform's optimal play (a) maintains $o=1$, keeping the aware share (and
hence effective attention $\zeta$) low, and (b) supplies no functional
portability, since $\phi>0$ only improves the subject's outside option and
the feasibility of $g_{S}$ without raising platform revenue. The objective
conditions for subject control identified in Section~\ref{sec:corners}
(Separability exercised through $\phi$, and awareness $\zeta$) are therefore
endogenously unmet on the equilibrium path: absent intervention that targets
$\phi$ and $\zeta$ directly, the de facto corner $g_{P}$ reproduces itself.
By Theorem~\ref{thm:rights}(ii), intervention that renames the owner while
leaving $\phi$ and $\zeta$ in place does not disturb this equilibrium.
\end{corollary}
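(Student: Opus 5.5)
The plan is to prove the three claims of Corollary~\ref{cor:selfperp} in turn. Part (a) is that opacity is maintained. Part (b) is that portability is not supplied. Part (c) is that renaming the owner is inert. All three are read off the platform's objective in the obfuscation equilibrium of Theorem~\ref{thm:pollution}(iii).

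For (a), I take the profit comparison from the proof of Theorem~\ref{thm:pollution}(iii) directly. That proof gives $\Pi(o{=}1,s{=}0)-\Pi(o{=}0,s{=}0)=\nu R(n)Q\,d^{*}(0)>0$ whenever $\nu>0$ and $d^{*}(0)>0$. By part (i), $d^{*}(0)>0$ holds exactly when $\lambda>0=s$. Hence $o=1$ on the equilibrium path. Since naivety is the discrete counterpart of $\zeta$, the aware share stays at $1-\nu$ and effective attention is not raised. The case where \eqref{eq:commit} holds needs separate handling. There the platform prefers a committed $s\ge\lambda$, so I would restrict the statement to the region where \eqref{eq:commit} fails. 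That is the region in which part (iv) says the obfuscation equilibrium obtains.

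For (b), I model $\phi$ as entering only the subject's outside option $\bar U_{S}(\phi)$, increasing in $\phi$, and the feasibility set of $g_{S}$. It does not enter the platform's revenue $[R(n)-sn]\tilde q$, because naive subjects do not exit, being unaware. Aware subjects' distortion depends on $(\lambda-s)nQ$, not on $\phi$, so $\phi$ does not change $\tilde q$ either. Under $g_{P}$, the platform must meet subject IR through $F$. Raising $\phi$ therefore weakly raises the required $F$ and weakly lowers $\Pi_{P}$. With any positive cost of supplying $\phi$, the optimum is $\phi=0$.

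The main obstacle is this step, because the paper never formally places $\phi$ into the payoffs. I would make the assumption explicit: $\partial\Pi_{P}/\partial\phi\le0$. I would then argue that this is implied by Definition~\ref{def:bundle}, under which $\phi$ is listed among the components that enter only outside options and feasible sets.

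For (c), I invoke Theorem~\ref{thm:rights}(ii) together with the feasibility argument of Remark~\ref{rem:defacto}. If $\phi\approx0$ or $\zeta\ll1$, then $g_{S}$ is infeasible, so a nominal reassignment leaves the effective controller at $P$. Under $\tau=0$, varying $\alpha$ leaves $n_{P}=\tfrac12$ unchanged by Proposition~\ref{prop:scope}. Naive subjects continue to supply $Q$ at $s=0$, so the profit ranking in (a) is unchanged. The equilibrium $(o,s,\phi)=(1,0,0)$ therefore persists.
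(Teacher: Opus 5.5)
Your proposal is correct and follows the paper's proof essentially step for step. Part (a) is the opacity comparison $\nu R(n)Q\,d^{*}(0)>0$ of Theorem~\ref{thm:pollution}(iii). Part (b) uses the fact that $\phi$ is absent from platform revenue $[R(n)-sn]\tilde q$ while it weakly raises $\bar U_{S}$, and hence the IR transfer $F$; the paper gets strictness from exit being an equilibrium threat, where you use a positive supply cost. The last sentence is Theorem~\ref{thm:rights}(ii).

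One small tightening in (c): if renaming actually moves $\alpha$, coverage becomes $s=\alpha(1-n)>0$, so $(o,s,\phi)=(1,0,0)$ does not literally persist. What keeps opacity in place is that $\Pi(o{=}1,s)-\Pi(o{=}0,s)=\nu[R(n)-sn]Q\,d^{*}(s)>0$ for every $s<\lambda$ with $\alpha<1$, and that is the claim you should state.
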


\begin{IEEEproof}
(a) is Theorem~\ref{thm:pollution}(iii). (b): platform revenue
$[R(n)-sn]\tilde q$ is independent of $\phi$, while $\phi>0$ weakly raises
the subject's outside option $\bar U_{S}$ and hence the $F$ needed for IR:
portability is weakly costly, strictly so when exit is an equilibrium threat.
The last sentence is Theorem~\ref{thm:rights}(ii).
\end{IEEEproof}

\subsection{Setup: participation and thickness}\label{sec:formation-setup}

Subjects are heterogeneous in sensitivity $\lambda_{i}\sim G$ (smooth,
positive density $g$). Subjects with $\lambda_{i}$ below a threshold list;
the listing mass is the market thickness $M=G(\hat\lambda)$ of
Definition~\ref{def:stock}, and the match value of a listing is
\begin{equation}\label{eq:theta}
\Theta(M):=\tilde\Theta(\Psi(M)),\qquad\tilde\Theta'>0,
\end{equation}
increasing in the market precision
$\Psi(M)=(\sigma_{0}^{2}+1/(M\bar\psi))^{-1}$: thicker markets aggregate the
common component better, and buyers pay for that precision. $\Theta$ inherits
Definition~\ref{def:stock}'s curvature: increasing, concave, and
\emph{bounded}, with $\Theta'(M)\to0$ as $M$ grows. Aggregation saturates
because $\sigma_{0}^{2}>0$ is noise no thickness can remove. Let $v>0$ be the
match-surplus fundamental per listing, of which the listing subject
appropriates the coverage share $s$ (Section~\ref{sec:pollution-setup}).

Each listing also leaks: by Definition~\ref{def:state}, within-class personal
components are correlated ($\varrho>0$), so a listed artifact is informative
about \emph{non-listing} peers; define the \emph{relational leakage
intensity} $\Lambda(\varrho)\ge0$, $\Lambda(0)=0$, $\Lambda'>0$: a listing
imposes expected exposure $\Lambda(\varrho)\,\bar\lambda^{\mathrm{nl}}\,
\tilde qD$ on correlated non-listers, where $\bar\lambda^{\mathrm{nl}}$ is
their mean sensitivity. Private listing decisions ignore this term entirely;
it bypasses even perfect consent, because the person harmed never signed
anything.

\subsection{Formation, the three wedges, and over-participation}\label{sec:formation}

\begin{theorem}[Formation and the direction of the participation
distortion]\label{thm:formation}
\begin{enumerate}
\item[(i)] \textbf{(Threshold equilibria and selection.)} Equilibria are
threshold rules: subjects with $\lambda_{i}\le\hat\lambda^{*}$ list, where
$\hat\lambda^{*}$ solves
\begin{equation}\label{eq:market-threshold}
\hat\lambda=s\,v\,\Theta(G(\hat\lambda)).
\end{equation}
Thickness--participation complementarity ($\Theta'>0$) can generate multiple
equilibria: a self-fulfilling thin trap coexisting with a thick market.
With private noise about the fundamental $v$, a global-games perturbation
selects a unique threshold and restores unambiguous comparative statics
(standard transplant of \cite{morrisshin1998}).
\item[(ii)] \textbf{(Three wedges; the direction is an \emph{iff}.)} The
planner's threshold satisfies
\begin{equation}\label{eq:planner-threshold}
\hat\lambda^{W}=v\,\Theta(G)+v\,\Theta'(G)\,G
-\Lambda(\varrho)\,P(\hat\lambda^{W}),
\end{equation}
where $P(\hat\lambda):=\bar\lambda^{\mathrm{nl}}(\hat\lambda)\,\tilde qD
\times$ (mass of affected peers per listing) is the marginal relational
harm. Comparing with \eqref{eq:market-threshold}, the market threshold
exceeds the planner's (over-participation) iff
\begin{equation}\label{eq:wedges}
\Lambda(\varrho)\,P\;>\;
\underbrace{(1-s)\,v\,\Theta}_{\text{appropriability wedge}}
\;+\;\underbrace{v\,\Theta'(G)\,G}_{\text{thickness externality}}.
\end{equation}
The marginal lister ignores one negative externality (relational leakage)
and two positive margins (unappropriated match surplus and her contribution
to everyone else's $\Theta$); which force wins is a measurable comparison,
not an axiom.
\item[(iii)] \textbf{(Why person-linked memory markets tend to be too
thick.)} As the market thickens, the right side of \eqref{eq:wedges}
attenuates and the left side does not: $\Theta'(M)\to0$ by
Definition~\ref{def:stock}'s saturation, while $\Lambda(\varrho)P$ is
anchored by the class correlation $\varrho$, which for preference/policy
memory is largest exactly among similar people, the defining feature of
the focal class (Definition~\ref{def:state}). Hence for any $\varrho>0$ and
coverage $s$ close to full appropriation, there is a finite thickness beyond
which \eqref{eq:wedges} holds and every additional lister is socially
excessive. Naive participation ($\nu>0$, or $\zeta<1$) shifts the market
threshold further right at unchanged welfare, amplifying over-participation
independently. In contrast to thin-market underprovision in the received
data-economics literature, identity-entangled memory markets over-thicken at
scale.
\item[(iv)] \textbf{(Non-monotone welfare.)} Welfare
$W(\hat\lambda)=\int_{0}^{\hat\lambda}[v\Theta(G(\hat\lambda))-\ell]\,
dG(\ell)-\Lambda(\varrho)\,G(\hat\lambda)\,
\bar\lambda^{\mathrm{nl}}(\hat\lambda)\tilde qD$ is non-monotone in the
threshold: it rises while low-$\lambda$ subjects enter and falls once
\eqref{eq:wedges} binds. There are parameter regions ($\Lambda(\varrho)$
large, mass of $G$ at high $\lambda$, or large naive share) where marginal
growth subsidies destroy welfare and where the voluntary formation of the
market is itself welfare-negative: relational externalities bypass
participants' revealed preference, and naive participation is not informed
revelation.
\end{enumerate}
\end{theorem}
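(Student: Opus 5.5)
The plan is to treat the participation problem as a one-dimensional threshold game and obtain each part from first-order conditions in the threshold $\hat\lambda$, composing $\Theta$ with $G$. For (i), I would first show that best responses are monotone in type. Given any conjectured thickness $M$, a subject of sensitivity $\lambda_{i}$ lists iff her appropriated match value $s\,v\,\Theta(M)$ weakly exceeds her own cost. Normalizing the per-listing exposure to $\lambda_{i}$, as the displayed threshold equation implicitly does, this cost is strictly increasing in $\lambda_{i}$, so every equilibrium is a cutoff rule. Consistency then requires $M=G(\hat\lambda)$, and $\hat\lambda^{*}$ is a fixed point of $\hat\lambda\mapsto s v\Theta(G(\hat\lambda))$. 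Existence follows from continuity and boundedness of $\Theta$ by the intermediate value theorem on the support of $G$. Because the map is increasing, it may cross the diagonal several times; exhibiting a convex-then-concave $G$ gives a low, thin-trap crossing together with a high, thick-market crossing. For uniqueness under noise I would not re-derive anything. I would verify that payoffs have strategic complementarity and dominance regions, as large or small $v$ make listing dominant, and then cite the Morris--Shin limit argument \cite{morrisshin1998}.

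For (ii), I would differentiate the welfare function written in (iv) with respect to $\hat\lambda$. Leibniz's rule gives the marginal lister's term $g(\hat\lambda)[v\Theta(G)-\hat\lambda]$. The derivative through $\Theta$ adds $v\Theta'(G)g\,G$ over the inframarginal listers. Differentiating the leakage term gives $-\Lambda(\varrho)P\,g$, where I define $P$ to absorb the derivative of $G\bar\lambda^{\mathrm{nl}}\tilde qD$. Dividing by $g>0$ yields the planner's threshold equation. Subtracting the market equation evaluated at the same $G$ shows that the market's marginal private value $s v\Theta-\hat\lambda$ exceeds the social one by exactly $\Lambda P-(1-s)v\Theta-v\Theta'G$. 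To turn this pointwise comparison into a ranking of thresholds, I would assume the planner's objective is single-peaked, meaning the social marginal value crosses zero once from above. Then the sign of that difference at $\hat\lambda^{*}$ determines whether $\hat\lambda^{*}\gtrless\hat\lambda^{W}$, and the \emph{iff} follows.

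For (iii), I would use Definition~\ref{def:stock}: $\Psi$ is bounded and concave with $\Psi'(M)\propto M^{-2}$ eventually, so $\Theta'(M)M\to0$ as $M$ grows. As $s\to1$ the appropriability wedge vanishes. Meanwhile $\Lambda(\varrho)P$ is bounded below by a positive constant whenever $\varrho>0$ and $\bar\lambda^{\mathrm{nl}}$ is bounded away from zero. A limiting argument then gives a finite thickness beyond which the over-participation inequality holds. For naivety, I would argue that naive subjects list regardless of $\lambda_{i}$, which raises $M$ while leaving the planner's criterion unchanged. Part (iv) follows from (ii): the derivative of $W$ is positive near $\hat\lambda=0$, where $\Lambda P$ is small relative to $v\Theta$ for small thresholds, and it turns negative once the wedge inequality binds. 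Welfare-negative formation would then be shown by a parameter example with large $\Lambda$ in which $W(\hat\lambda^{*})<W(0)=0$.

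The main obstacle is (ii)–(iii). $P$ depends on $\hat\lambda$ through $\bar\lambda^{\mathrm{nl}}$, and $\tilde q$ and $D$ may themselves respond to thickness through Theorem~\ref{thm:pollution}, so $\Lambda P$ being anchored is not automatic. My first step would be to hold $\tilde q D$ fixed, consistent with the fixed-$(n,s)$ setup, and to impose the single-crossing regularity explicitly.
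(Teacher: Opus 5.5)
Your route is the paper's own. For (i) you use cutoff rules and a fixed point. For (ii) you differentiate $W$, with $P$ normalized so that $g\,P=\frac{d}{d\hat\lambda}\bigl[G\bar\lambda^{\mathrm{nl}}\tilde qD\bigr]$, and subtract the market cutoff. For (iii) you combine saturation of $\Theta'(M)M$ with a uniform lower bound on $\Lambda(\varrho)P$. Your ordering step in (ii) is slightly leaner than the paper's: you evaluate the gap at $\hat\lambda^{*}$ and assume single crossing only of the planner's marginal value, whereas the paper imposes single crossing on both excess functions. It also works equilibrium by equilibrium when the market has multiple thresholds.

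The step that fails is in (iv). There you assert that $W'>0$ near $\hat\lambda=0$ because $\Lambda P$ is small relative to $v\Theta$ at small thresholds.

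\begin{itemize}
\item $P$ is the \emph{marginal} per-listing harm, not the total. From your normalization, $P(\hat\lambda)=\bar\lambda^{\mathrm{nl}}(\hat\lambda)\tilde qD+G(\hat\lambda)(\bar\lambda^{\mathrm{nl}})'(\hat\lambda)\tilde qD/g(\hat\lambda)$.
\item The first term is increasing in the threshold and the second is nonnegative. Hence $P(\hat\lambda)\ge P(0)=\bar\lambda^{\mathrm{nl}}(0)\tilde qD$, the population-mean sensitivity times $\tilde qD$.
\item That is exactly the uniform lower bound you invoke in (iii), so your (iii) and (iv) claims contradict each other.
\item On the other side, $\Theta(0)=\tilde\Theta(\Psi(0))$ with $\Psi(0)=0$ by Definition~\ref{def:stock}, and $v\Theta'(G)G$ vanishes at $G=0$.
\item With $\Theta=\Psi$, the specification the paper's appendix uses, this gives $W'(0^{+})=-g(0)\Lambda(\varrho)P(0)<0$ for every $\Lambda>0$. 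Welfare falls before it can rise.
\end{itemize}

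So the ``rises, then falls'' shape is not automatic. It needs the explicit hypothesis $v\Theta(0)>\Lambda(\varrho)P(0)$, which is how the paper's proof states it (``whenever thickness value dominates leakage at the bottom''); you must impose it. The same condition repairs two other steps:

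\begin{itemize}
\item \emph{Interior planner threshold in (ii).} If the planner's marginal value starts negative and can only cross zero from above, it never crosses, which forces $\hat\lambda^{W}=0$. Your single-peakedness assumption is compatible with an interior $\hat\lambda^{W}$ only under this hypothesis.
\item \emph{Global-games step in (i).} Listing is dominant only if $sv\Theta(0)\ge\lambda_{i}$, so the upper dominance region you rely on also needs $\Theta(0)>0$.
\end{itemize}
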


\begin{IEEEproof}[Proof sketch]
(i) Best responses are monotone in the conjectured threshold; existence by
Tarski; multiplicity when $svG'\Theta'>1$ somewhere; global-games uniqueness
by strategic complementarity plus dominance regions at $v$ extremes.
(ii) Differentiate $W$ at the threshold:
$W'(\hat\lambda)=g(\hat\lambda)[v\Theta-\hat\lambda+v\Theta'G-\Lambda P]$;
the market condition replaces the bracket by $sv\Theta-\hat\lambda$;
subtract. (iii)
$\Psi'(M)=\bar\psi/(M\bar\psi\sigma_{0}^{2}+1)^{2}\to0$; $\Lambda(\varrho)P$
bounded away from zero for $\varrho>0$ with nondegenerate
$\bar\lambda^{\mathrm{nl}}$; naive listers use $\zeta\lambda_{i}<\lambda_{i}$
in the private cutoff while $W$ evaluates true $\lambda_{i}$. (iv) Sign
change of $W'$ from (ii)--(iii); the formation statement compares
$W(\hat\lambda^{*})$ with $W(0)=0$. Appendix~\ref{app:formation} gives
the full argument, including the single-crossing condition under which the
two thresholds can be ordered.
\end{IEEEproof}

\begin{remark}[Correction relative to an earlier draft]\label{rem:correction}
An earlier statement claimed over-participation unconditionally whenever
relational leakage is positive. That is false as stated: the marginal lister
also confers the two unpriced \emph{positive} margins in \eqref{eq:wedges},
and at low thickness they can dominate: a thin memory market can genuinely
be too thin. The defensible claim is the \emph{iff} in (ii) plus the
saturation argument in (iii): the presumption of over-participation is an
asymptotic property of thick, high-$\varrho$, well-appropriated markets, and
\eqref{eq:wedges} tells the econometrician exactly which three objects to
measure. Regulators should track relational-externality density, not listing
counts (cf.\ Section~\ref{sec:governance}).
\end{remark}

\subsection{Selection into authorization}\label{sec:selection}

Formation determines \emph{who lists}; institutions determine \emph{what
listing signals}. Fix a class of artifacts with heterogeneous effective
fidelity $\tilde q$ and subjects private in $\lambda_{i}\sim G$, independent
of $\tilde q$; a marketplace pays a royalty per deployment.

\begin{proposition}[An authorization badge is not a fidelity
certificate]\label{prop:selection}
\begin{enumerate}
\item[(i)] Under a \emph{flat} royalty $r$ per deployment, subject $i$
authorizes iff $\lambda_{i}\tilde qD\le r$, so
$\Pr(\text{authorized}\mid\tilde q)=G(r/(\tilde qD))$, strictly decreasing
in $\tilde q$: the authorized pool is first-order stochastically worse in
fidelity, so the badge is a negative quality signal.
\item[(ii)] If misrepresentation harm dominates (exposure decreasing in
$\tilde q$), the direction reverses; authorization is generically
informative about fidelity in a direction set by the harm technology, never
a neutral stamp.
\item[(iii)] Under the quality-indexed royalty of
Definition~\ref{def:bundle} (compensation $\tau$ per
\emph{quality}-deployment, i.e.\ $r(\tilde q)=\tau\tilde qD$), the
authorization condition becomes $\lambda_{i}\le\tau$: participation is
independent of $\tilde q$ and selection is shut down.
\end{enumerate}
\end{proposition}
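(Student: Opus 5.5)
Each part reduces to a one-line authorization inequality followed by a distributional consequence, so I will derive the three cases directly from the subject's participation decision. Throughout I treat $D$ as fixed and exogenous, as in Section~\ref{sec:pollution-setup}. I also use the independence of $\lambda_{i}$ and $\tilde q$, which makes the conditional authorization probability a function of $\tilde q$ evaluated through the marginal $G$ alone.

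\textbf{Part (i).} Under the linear exposure benchmark (Assumption~\ref{as:exposure}, with the individual sensitivity $\lambda_{i}$ of Definition~\ref{def:exposure}), deploying the artifact costs subject $i$ the amount $\lambda_{i}\tilde qD$, and the flat royalty pays $r$. Authorization is therefore individually rational iff $\lambda_{i}\le r/(\tilde qD)$, which gives $\Pr(\text{authorized}\mid\tilde q)=G(r/(\tilde qD))$. Because $g>0$ on the support and $r/(\tilde qD)$ is strictly decreasing in $\tilde q$, this probability is strictly decreasing wherever the cutoff is interior.

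The FOSD conclusion then follows from Bayes' rule. The authorized-pool density is $f(\tilde q)\,G(r/(\tilde qD))$, normalized by its integral. The ratio of this density to the prior density $f$ is decreasing in $\tilde q$, which is the monotone likelihood ratio property. MLRP implies that the authorized pool is first-order stochastically dominated by the unconditional fidelity distribution, and in particular has a lower conditional mean of $\tilde q$. That is the negative-signal claim.

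\textbf{Part (ii).} I will replace the harm term with the misrepresentation variant $\lambda_{i}h(\tilde q)$, with $h$ decreasing, for example $h(\tilde q)=\lambda_{M}(1-\tilde q)/\lambda$ in the pure-misrepresentation limit of Remark~\ref{rem:misrep}. The authorization probability becomes $G(r/(h(\tilde q)D))$, which is increasing in $\tilde q$, and the same likelihood-ratio argument reverses the dominance.

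For the word ``generically'' I will note the following. The conditional probability is constant in $\tilde q$ only if $h$ is constant on the support of $\tilde q$, or the cutoff lies outside the support of $G$. Either condition is a knife-edge, so authorization is informative about fidelity in every other case, with its direction fixed by the sign of $h'$.

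\textbf{Part (iii).} With $r(\tilde q)=\tau\tilde qD$, the condition $\lambda_{i}\tilde qD\le\tau\tilde qD$ reduces to $\lambda_{i}\le\tau$ after dividing by $\tilde qD>0$. Since $\lambda_{i}$ is independent of $\tilde q$, the authorization event is independent of $\tilde q$, so the authorized pool has the population fidelity distribution and selection is shut down.

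\textbf{Expected obstacle.} The only delicate step is making ``strictly'' and ``first-order stochastically worse'' precise. This requires a nondegenerate fidelity distribution and an interior cutoff for at least part of the support, since otherwise $G(\cdot)$ is flat at $0$ or $1$. I plan to state these as standing regularity conditions and then invoke MLRP $\Rightarrow$ FOSD rather than proving dominance by hand.
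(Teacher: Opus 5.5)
Your proposal is correct and is essentially the paper's proof: the cutoff $r/(\tilde qD)$ falls in $\tilde q$, replacing $\lambda_{i}\tilde q$ by $\lambda_{i}h(\tilde q)$ with $h'<0$ reverses the direction, and $\tilde qD$ cancels under quality indexation. Your Bayes/MLRP step spells out the paper's one-line ``FOSD is monotonicity of $G$,'' and your interior-cutoff condition is the right caveat for ``strictly.'' One correction: if $G$ has bounded support, universal authorization at large $r$ is an open region rather than a knife-edge, so state it as a standing assumption rather than as a genericity claim.
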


\begin{IEEEproof}
(i) The cutoff $r/(\tilde qD)$ falls in $\tilde q$; FOSD is monotonicity of
$G$. (ii) Replace $\lambda_{i}\tilde q$ by $\lambda_{i}h(\tilde q)$ with
$h'<0$ and repeat. (iii) $\tilde qD$ cancels.
\end{IEEEproof}

\begin{remark}\label{rem:why-tau}
Proposition~\ref{prop:selection}(iii) is the reason
Definition~\ref{def:bundle} defines $\tau$ per quality-deployment rather
than per deployment: the indexation is not a modeling convenience but the
instrument that makes authorization \emph{uninformative about quality},
which is what allows permission credentials and performance credentials to
be issued separately (Section~\ref{sec:governance}). The economically
correct role of an independent certifier is therefore not to substitute for
consent but to make $\tilde q$ contractible, enabling exactly this
indexation. Under flat royalties, empirical fidelity comparisons between
authorized and unauthorized pools are confounded by (i);
Section~\ref{sec:empirics} turns this into a test.
\end{remark}

\subsection{The reach bridge: thickness feeds depletion}\label{sec:bridge}

\begin{corollary}[Thickness transmits into the stock]\label{cor:bridge}
With the reach bridge $n=N(M)$, $N'\ge0$ (Definition~\ref{def:stock}), the
long-run stock responds to thickness as
\begin{equation}\label{eq:bridge}
\frac{d\bar Q}{dM}=\frac{1}{\delta}\,\frac{dY^{E}}{dn}\,N'(M),
\end{equation}
with $dY^{E}/dn$ from Theorem~\ref{thm:headline}; and effective delivered
quality is $\tilde q(s)\times\bar Q$, with $\tilde q$ from
Theorem~\ref{thm:pollution}. Hence market growth acts on the memory stock
through three multiplicative channels: participation
(Theorem~\ref{thm:formation}), reach (Theorem~\ref{thm:headline}), and
authenticity (Theorem~\ref{thm:pollution}). It can therefore raise measured
access while lowering both the stock and its fidelity when coverage $s$ is
low.
Full tipping dynamics and the relational-data-externality fixed point are
developed in the aggregate extension.
\end{corollary}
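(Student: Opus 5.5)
The plan is to obtain \eqref{eq:bridge} by composing the steady-state map of Theorem~\ref{thm:headline} with the reach bridge of Definition~\ref{def:stock}, and then to read off the three-channel decomposition by noting where participation, reach and authenticity each enter. The rights bundle $(\alpha,\tau)$ is held fixed throughout, and the argument stays on the active region of Lemma~\ref{lem:refresh}, so that $Y^{E}$ is differentiable in $n$ through the stakes \eqref{eq:wS}--\eqref{eq:wP}.

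\textbf{Step 1.} By Theorem~\ref{thm:headline}, the steady state solves $\delta\bar Q=Y^{E}(n)$, so $\bar Q(n)=Y^{E}(n)/\delta$. Because $\delta\in(0,1]$ is a class-specific constant (Remark~\ref{rem:delta-bridge}) that does not depend on $n$ or $M$, we get $d\bar Q/dn=\delta^{-1}dY^{E}/dn$.

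\textbf{Step 2.} Substitute $n=N(M)$ and apply the chain rule, which gives \eqref{eq:bridge}. Because $N'\ge0$, the sign of $d\bar Q/dM$ equals the sign of $dY^{E}/dn$ wherever $N'>0$. The depletion condition \eqref{eq:depletion} therefore transfers directly to thickness. In particular, when $\alpha=\tau=0$ the stock falls in $M$ iff $k_{S}\lambda>k_{P}(1-2N(M))$.

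\textbf{Step 3.} This step handles authenticity. Theorem~\ref{thm:pollution}(i) gives effective fidelity $\tilde q^{*}=(1-d^{*})Q$, so delivered quality is $\tilde q(s)\bar Q$ once $\tilde q$ is read as the retained share $1-d^{*}$. Here $d^{*}$ is increasing in the deployment scale $D$ whenever $s<\lambda$. In the aggregate market $D$ is $n$ scaled by thickness (Section~\ref{sec:pollution-setup}), so $D$ is weakly increasing in $M$. The product rule then gives
\[
\frac{d(\tilde q\bar Q)}{dM}=\tilde q\,\frac{d\bar Q}{dM}+\bar Q\,\frac{\partial\tilde q}{\partial D}\,\frac{dD}{dM}.
\]
The second term is nonpositive, and strictly negative when $s<\lambda$. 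Participation enters because $M=G(\hat\lambda^{*})$ is the threshold equilibrium of Theorem~\ref{thm:formation}(i). Any shift in fundamentals that raises $\hat\lambda^{*}$ therefore feeds into $M$ and, through the chain above, into both the stock and its fidelity. Together these three steps give the multiplicative three-channel structure.

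\textbf{Step 4: the main obstacle.} The remaining claim is that growth \emph{can} raise measured access while lowering both the stock and its fidelity. To show this I would exhibit a region directly. Take $\alpha=\tau=0$, so $s=0<\lambda$. Choose $\lambda>0$, and pick $M$ with $N(M)\ge\tfrac12$ and $N'(M)>0$. By Theorem~\ref{thm:headline}, $Y^{E\prime}<0$ holds unambiguously there, and Step 3 makes fidelity fall. Measured access, meaning the served share $n=N(M)$, rises by construction.

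The delicate point is that $d^{*}$ is capped at $1$, and the active-region assumption can fail: $w_{S}=-\lambda n\le 0$ at this corner. I would handle this by running the argument with the platform as the only active refresher; Lemma~\ref{lem:refresh} still applies on its corner branch. This gives $Y^{E}=k_{P}R(n)$, which is decreasing for $n>\tfrac12$. I would also restrict to interior $d^{*}<1$, which holds when $c_{d}$ is large enough.
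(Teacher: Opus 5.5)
Your proposal is correct and is essentially the paper's own argument, which the paper leaves implicit: the displayed identity is the chain rule applied to the steady state $\bar Q=Y^{E}/\delta$ of Theorem~\ref{thm:headline} composed with $n=N(M)$, and the authenticity channel is Theorem~\ref{thm:pollution}(i)'s comparative static in $D$, multiplied onto the stock. Your Step~4 usefully makes explicit the existence claim that the paper compresses into ``therefore,'' and you are right to switch to the platform-only corner branch, since $w_{S}=-\lambda n\le0$ when $u_{S}=0$; one refinement is needed, because if the $Q$ inside $d^{*}$ is read as the endogenous stock $\bar Q$, the falling stock partly offsets the scale effect (with $D=n$, $d^{*}\propto n^{2}(1-n)$, which peaks at $n=\tfrac23$), so you should take $N(M)$ just above $\tfrac12$, where $\bar Q'(n)\approx0$, rather than the whole range $N(M)\ge\tfrac12$.
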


\subsection{Limited attention}\label{sec:attention}

If the subject perceives only a share $\zeta\in[0,1]$ of true exposure, her
perceived stake replaces $\lambda$ by $\zeta\lambda$; effort \emph{rises} as
attention falls ($\partial e^{E}/\partial\zeta=-\eta_{S}\lambda n/c_{S}<0$),
distortion falls, and observed consent weakly expands. But welfare must be
evaluated at the true $\lambda$: higher measured participation and higher
measured quality under low attention are not evidence of higher welfare;
they are Theorem~\ref{thm:pollution}(iii)'s obfuscation equilibrium in
continuous form. $\zeta=1$ recovers full rationality, and the
sophistication-selection conjecture (Section~\ref{sec:limits}) is the
cross-sectional implication: the market sources its best raw material from
those least able to protect themselves.

\subsection{Where this leaves the institution space}\label{sec:endog-stock}

Theorems~\ref{thm:pollution}--\ref{thm:formation} add the last two rows to
the paper's instrument logic. Compensation coverage $s$ is \emph{quality
policy} (Theorem~\ref{thm:pollution}(ii)) and, indexed per
quality-deployment, \emph{selection policy}
(Proposition~\ref{prop:selection}(iii)); transparency $\mathsf T$ is
\emph{voluntariness policy}, worthless for quality on its own
(Theorem~\ref{thm:pollution}(iii)); portability $\phi$ and attention $\zeta$
are \emph{feasibility policy} for subject control, and no private actor
supplies them in equilibrium (Corollary~\ref{cor:selfperp}); participation
volume is not a health metric, and the object to monitor is
relational-externality density \eqref{eq:wedges}. None of these instruments
substitutes for another, and none is delivered by provenance, which has
been perfect on every page of this section.

\section{The Optimal Rights Bundle}\label{sec:optimal}

Sections~\ref{sec:results}--\ref{sec:endog} were diagnostic: the ownership
corners each destroy a supply margin (Proposition~\ref{prop:corners}),
uncompensated markets pollute their raw material
(Theorem~\ref{thm:pollution}), and formation can over-thicken
(Theorem~\ref{thm:formation}). This section is constructive, solving for
the \emph{second-best rights bundle}: the choice of $(\alpha,\tau,n)$
that maximizes welfare subject to the frictions the model takes as
primitive, namely noncontractible efforts, budget balance, and the
subject's option to distort. Two regimes are analyzed: \emph{commitment}
(a planner, or a marketplace constitution, fixes scope and compensation
directly) and \emph{delegation} (the planner fixes only the cash-flow
bundle $(\alpha,\tau)$; the controller named in $c$ chooses scope).
Throughout we work in the additive benchmark with $u_{S}=u_{P}=0$; general
flows are restored in Appendix~\ref{app:flows}.

\subsection{Coverage is the single static instrument}\label{sec:collapse}

The bundle $(\alpha,\tau)$ appears to give the designer two compensation
instruments, but it does not.

\begin{proposition}[Instrument collapse and the financing slope]\label{prop:collapse}
\begin{enumerate}
\item[(i)] At any scope $n$, the stakes \eqref{eq:wS}--\eqref{eq:wP} depend
on $(\alpha,\tau)$ only through the coverage $s=\tau+\alpha R(n)/n$ of
Section~\ref{sec:pollution-setup}:
\begin{equation}\label{eq:collapse}
w_{S}=u_{S}+(s-\lambda)\,n,\qquad w_{P}=u_{P}+R(n)-s\,n.
\end{equation}
Revenue shares and per-quality royalties are perfect substitutes for
refresh incentives at fixed scope.
\item[(ii)] They are not substitutes across scopes: holding $(\alpha,\tau)$
fixed, $\partial s/\partial n=-\alpha$. Royalty-financed coverage is
reach-invariant; share-financed coverage dilutes as reach expands, and
under pure share financing ($\tau=0$) coverage crosses the authenticity
floor $\lambda$ exactly at $n=2n_{S}$, twice the subject's own preferred
scope (Proposition~\ref{prop:scope}). Expansion under share financing
therefore drives the market into Theorem~\ref{thm:pollution}'s pollution
region mechanically, with no change in the written contract.
\item[(iii)] Both instruments are quality-indexed in the sense of
Proposition~\ref{prop:selection}(iii): neither induces selection on
fidelity. The designer's static choice is one number, $s$; the
$\alpha$--$\tau$ split matters only through (ii) and through delegation
(Section~\ref{sec:delegation}).
\end{enumerate}
\end{proposition}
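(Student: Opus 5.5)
The proof is direct algebra on the stake definitions \eqref{eq:wS}--\eqref{eq:wP} together with the coverage definition \eqref{eq:coverage}; I will take the three parts in order.

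\textbf{Part (i).} Multiply \eqref{eq:coverage} by $n$ to get $sn=\tau n+\alpha R(n)$, valid for $n>0$. Substituting into \eqref{eq:wS} gives
\[
w_{S}=u_{S}+sn-\lambda n=u_{S}+(s-\lambda)n .
\]
Substituting into \eqref{eq:wP} gives
\[
w_{P}=u_{P}+R(n)-\alpha R(n)-\tau n=u_{P}+R(n)-sn .
\]
By Lemma~\ref{lem:refresh}, equilibrium efforts (and hence $Y^{E}$) depend on $(\alpha,\tau)$ only through $(w_{S},w_{P})$. So any two pairs $(\alpha,\tau)$ and $(\alpha',\tau')$ with the same $s$ at the given $n$ induce identical refresh. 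This is the perfect-substitutes claim. The case $n=0$ is degenerate, since both instruments then drop out of the stakes. I will note it separately or exclude it by taking $n>0$.

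\textbf{Part (ii).} With $R(n)=n(1-n)$, coverage becomes $s=\tau+\alpha(1-n)$, so $\partial s/\partial n=-\alpha$ holding $(\alpha,\tau)$ fixed. Royalty-financed coverage ($\alpha=0$) is therefore flat in reach. Under $\tau=0$, coverage is $s=\alpha(1-n)$, and the floor $s=\lambda$ is hit at $n=1-\lambda/\alpha$. Proposition~\ref{prop:scope} with $\tau=0$ gives $n_{S}=(1-\lambda/\alpha)/2$ in the interior case, so the crossing point is exactly $2n_{S}$. Since $s$ decreases in $n$, for $n>2n_{S}$ we have $s<\lambda$. Theorem~\ref{thm:pollution}(i) then gives $d^{*}>0$, which is the pollution region, reached with no change in the contract. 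One caveat needs care here: the identity $n=2n_{S}$ holds only when $n_{S}$ is interior, i.e.\ $\lambda<\alpha$. If $\lambda\ge\alpha$, then $s<\lambda$ already at every $n>0$, and I will state that case separately.

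\textbf{Part (iii).} Both channels pay the subject per unit of quality-deployment: the royalty $\tau$ by definition, and the share $\alpha R(n)Q/n$ per deployment because revenue scales with $Q$. Under $\tilde q$, total receipts are therefore $s\tilde qD$ against exposure $\lambda_{i}\tilde qD$. As in Proposition~\ref{prop:selection}(iii), the factor $\tilde qD$ cancels, the authorization cutoff $\lambda_{i}\le s$ does not depend on $\tilde q$, and selection on fidelity is shut down. The remaining claim, that the $\alpha$--$\tau$ split matters only through (ii) and through delegation, follows from (i) at fixed $n$.

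\textbf{Main obstacle.} The step I expect to need the most care is the interpretive one in (iii): verifying that share revenue really scales with effective quality $\tilde q$ and not with nominal $Q$. The license price is $p(Q,n)=Q(1-n)$ in the environment. I plan to resolve this by reading $Q$ as delivered effective quality, as in Corollary~\ref{cor:bridge}, so that share revenue is proportional to $\tilde q$ and the cancellation goes through.
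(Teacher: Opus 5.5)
Your proposal is correct and follows the paper's own argument. You substitute $sn=\alpha R(n)+\tau n$ into the stakes, then use $s(n)=\tau+\alpha(1-n)$ and compare the root of $s(n)=\lambda$ with Proposition~\ref{prop:scope}'s $n_{S}$. For (iii) you note that both channels scale receipts with $\tilde q$, so Proposition~\ref{prop:selection}(iii)'s cancellation applies. Your added caveats are sound refinements the paper leaves implicit: taking $n>0$; the $2n_{S}$ identity requiring interior $n_{S}$, i.e.\ $\lambda<\alpha$; and reading share revenue as scaling with effective quality $\tilde q$, which is how the paper treats it in the proof of Theorem~\ref{thm:pollution}(iii).
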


\begin{IEEEproof}
(i) $\alpha R+\tau n=n\,s$ by definition of $s$; substitute into
\eqref{eq:wS}--\eqref{eq:wP}. (ii) $s(n)=\tau+\alpha(1-n)$; solve
$s(n)=\lambda$ and compare with Proposition~\ref{prop:scope}'s $n_{S}$.
(iii) Both scale receipts with $\tilde q$; apply
Proposition~\ref{prop:selection}(iii).
\end{IEEEproof}

\subsection{The second-best bundle under commitment}\label{sec:secondbest}

The commitment program is
\begin{equation}\label{eq:program}
\max_{n\in[0,1],\,s\ge0}\;W^{E}(n,s)=\Omega(n)\,Y^{E}
-\tfrac{k_{S}w_{S}^{2}}{2}-\tfrac{k_{P}w_{P}^{2}}{2}
\end{equation}
subject to Lemma~\ref{lem:refresh} and $w_{P}\ge0$, on the no-pollution
region $s\ge\lambda$ first; Section~\ref{sec:regimes} opens the polluted
region.

\begin{theorem}[The second-best bundle]\label{thm:secondbest}
\begin{enumerate}
\item[(i)] At any scope, optimal coverage implements
Corollary~\ref{cor:stake}'s stake profile:
\begin{equation}\label{eq:sstar}
s^{*}(n)=\lambda+\frac{w_{S}^{\star}(n)}{n},
\end{equation}
so the subject's compensation exceeds bare exposure by exactly her optimal
renewable claim per unit of reach.
\item[(ii)] \textbf{(The authenticity floor is self-enforcing.)}
$s^{*}(n)\ge\lambda$ iff $w_{S}^{\star}(n)\ge0$ iff
\begin{equation}\label{eq:floor}
\frac{k_{S}}{k_{P}}\;\ge\;\frac{C(n)}{\Omega(n)}=\frac{n}{2-2\lambda-n}:
\end{equation}
whenever the subject's relative refresh productivity exceeds the
buyer-surplus share of social value, the second best pays above exposure
without any authenticity mandate; Theorem~\ref{thm:pollution}(ii)'s
floor is implied, not imposed.
\item[(iii)] \textbf{(Closed form, symmetric productivities.)} For
$k_{S}=k_{P}=k$, the concentrated program
$W^{*}(n)=k\,n^{2}(1-\lambda-n)\bigl(3(1-\lambda)-n\bigr)/4$ has the unique
interior maximizer
\begin{equation}\label{eq:nstar}
\begin{split}
&n^{*}=\tfrac{3-\sqrt3}{2}\,(1-\lambda)\approx0.634\,(1-\lambda),\\
&s^{*}=\lambda+\tfrac{\sqrt3-1}{4}\,(1-\lambda),
\end{split}
\end{equation}
strictly between the subject-corner scope $\tfrac{1-\lambda}{2}$ and the
static welfare scope $1-\lambda$.
\item[(iv)] \textbf{(The one-third depletion haircut.)} In the polar
single-margin cases (the projection binds: $w_{S}^{\star}\in\{0,T\}$), the
concentrated program yields $n^{*}=\tfrac{2}{3}(1-\lambda)$. Hence across
all productivity ratios,
\begin{equation}\label{eq:band}
n^{*}\in\Bigl[\tfrac{3-\sqrt3}{2},\,\tfrac{2}{3}\Bigr]\times(1-\lambda)
\approx[0.63,\,0.67]\times(1-\lambda),
\end{equation}
decreasing in $\lambda$: the dynamic second best deploys roughly two-thirds
of the static first-best reach, almost independently of who the productive
refresher is. The missing third is the price of keeping refresh incentives
alive: the depletion externality of Theorem~\ref{thm:headline}, now
internalized as a scope haircut.
\end{enumerate}
\end{theorem}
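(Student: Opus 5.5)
The plan is to work throughout in the additive benchmark with $u_{S}=u_{P}=0$. There Proposition~\ref{prop:collapse}(i) gives $w_{S}=(s-\lambda)n$ and $w_{P}=R(n)-sn$. Adding them, $T(n)=w_{S}+w_{P}=n(1-\lambda-n)$ is independent of $s$. We also have $\Omega(n)=n(1-\lambda-n/2)$ and $C(n)=n^{2}/2$.

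For (i), I fix $n$. Lemma~\ref{lem:refresh} gives $Y^{E}=k_{S}w_{S}+k_{P}w_{P}$. Substituting into \eqref{eq:program}, $W^{E}(n,\cdot)$ is exactly the function $W(w_{S})$ of Corollary~\ref{cor:stake} on the budget line $w_{P}=T-w_{S}$. Since $s\mapsto w_{S}$ is an affine bijection with slope $n>0$, the optimal $s$ is obtained by inverting $w_{S}=(s-\lambda)n$ at $w_{S}^{\star}(n)$. This gives \eqref{eq:sstar}; the constraint $w_{P}\ge0$ is the upper end of the projection in \eqref{eq:wstar}.

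For (ii), $s^{*}\ge\lambda$ is equivalent to $w_{S}^{\star}\ge0$. Reading this off the unprojected expression, it holds iff $(k_{S}-k_{P})\Omega+k_{P}T\ge0$, i.e.\ $k_{S}\Omega\ge k_{P}(\Omega-T)=k_{P}C$. Dividing by $k_{P}\Omega>0$ and using the forms of $C$ and $\Omega$ above gives \eqref{eq:floor}. When the inequality fails, the projection puts the subject at $w_{S}^{\star}=0$, i.e.\ exactly at the floor; I will state this explicitly, since the ``iff'' should be read for the unprojected optimum (with strict versions matching).

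For (iii), symmetry gives $w_{S}^{\star}=w_{P}^{\star}=T/2$. The concentrated objective is then $k(\Omega T-T^{2}/4)$, and factoring out $n^{2}(1-\lambda-n)$ yields the displayed $W^{*}(n)$. Setting $a=1-\lambda$ and $n=ta$, I maximize $t^{2}(1-t)(3-t)$. The derivative is $2t(3-6t+2t^{2})$, whose root in $(0,1)$ is $t=(3-\sqrt3)/2$. I then check interiority of the stake split, i.e.\ \eqref{eq:floor} at $n^{*}$ with $k_{S}=k_{P}$. Finally $s^{*}=\lambda+T/(2n)=\lambda+a(1-t)/2$.

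For the polar case of (iv), with $w_{S}^{\star}=0$ the value is $k_{P}(\Omega T-T^{2}/2)=k_{P}\,n^{2}(1-\lambda-n)(1-\lambda)/2$. This is maximized where $t^{2}(1-t)$ is, at $t=2/3$, and the case $w_{S}^{\star}=T$ is identical with $k_{S}$ in place of $k_{P}$.

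The main obstacle is the band \eqref{eq:band} for intermediate productivity ratios. I would first write the interior concentrated value in harmonic form. Minimizing $k_{S}(\Omega-w_{S})^{2}+k_{P}(\Omega-w_{P})^{2}$ on the budget line gives
$W^{*}(n)=\tfrac{k_{S}+k_{P}}{2}\bigl[\Omega^{2}-\tfrac{\varrho_{k}}{4}(2\Omega-T)^{2}\bigr]$, with $\varrho_{k}=4k_{S}k_{P}/(k_{S}+k_{P})^{2}\in(0,1]$ and $\varrho_{k}=1$ in the symmetric case. Next I would show, via a single-crossing/Topkis argument, that the maximizer is monotone in $\varrho_{k}$. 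This requires $\partial_{n}(2\Omega-T)^{2}$ to have a sign on the relevant range; there $2\Omega-T=n(1-\lambda)$, so the condition should be checkable. Then I would argue by continuity of $W^{*}$ across the boundary where the projection starts to bind: as $\varrho_{k}$ falls to that boundary, the interior maximizer meets the polar maximizer $\tfrac23(1-\lambda)$. Monotonicity then traps $n^{*}$ between $\tfrac{3-\sqrt3}{2}(1-\lambda)$ and $\tfrac23(1-\lambda)$. The fact that both endpoints scale with $1-\lambda$ gives that the band is decreasing in $\lambda$. If the monotonicity step resists, the fallback is to verify the two endpoint inequalities directly from the first-order condition of the harmonic form, evaluated at $t=(3-\sqrt3)/2$ and at $t=2/3$.
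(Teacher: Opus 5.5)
Your proofs of (i)--(iv) follow the paper. The FOC in $s$, which you get through the affine change of variable $w_{S}=(s-\lambda)n$, reproduces Corollary~\ref{cor:stake}. The symmetric and cornered programs are substituted and factored as in Appendix~\ref{app:secondbest}. Your caveat that the ``iff'' in (ii) must be read for the unprojected stake (strictly: $s^{*}>\lambda$ iff $k_{S}/k_{P}>C/\Omega$) is also how the appendix uses it.

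The band is where your route differs. The paper expands the interior branch as an explicit quartic and reduces its FOC to $n^{2}-3mn+2m^{2}\theta(\kappa)=0$ with $m=1-\lambda$. It solves for $n^{-}$ and locates the regime switch by evaluating the interiority condition at $n=\tfrac{2}{3}m$, which yields the closed form \eqref{eq:nstar-general}. You instead complete the square and use $2\Omega-T=nm$. The productivity ratio then enters only through $\varrho_{k}$ multiplying a term increasing in $n$, and decreasing differences give monotonicity of the maximizer. This is the same FOC, since $1+2\varrho_{k}=9-8\theta(\kappa)$. Your form explains without solving a quadratic why the symmetric case is the lower end of the band: more asymmetric productivities mean a smaller haircut. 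The paper's route instead delivers the explicit $n^{*}(\kappa)$ and the thresholds $\kappa\in\{1/2,2\}$.

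The upper end is where your sketch needs more than you give it. Continuity of the concentrated value does not by itself force the interior maximizer to reach the switch point exactly at $\tfrac{2}{3}m$. The envelope theorem does this. At the switch point $w^{\mathrm{unc}}$ equals the corner value, so $V_{w}=0$ there, and the unprojected branch $W_{\mathrm{int}}$ (your harmonic form) and the cornered branch $W_{\mathrm{cor}}$ have equal $n$-derivatives. An interior critical point lying on the switch point is therefore a critical point of $W_{\mathrm{cor}}$, i.e.\ $\tfrac{2}{3}m$. Directly, your FOC gives $t^{-}=(3-\sqrt{1+2\varrho_{k}})/2$, which equals $\tfrac{2}{3}$ at $\varrho_{k}=8/9$, i.e.\ $\kappa\in\{1/2,2\}$.

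You also need a comparison across branches. Since $W\le W_{\mathrm{int}}$ pointwise, an interior root lying in the interior region is the global maximizer. Otherwise $W_{\mathrm{int}}$ rises up to the switch point, and $W_{\mathrm{cor}}$ peaks at $\tfrac{2}{3}m$ beyond it.

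Your fallback has to be split the same way. The harmonic-form derivative at $t=\tfrac{2}{3}$ is proportional to $\tfrac{8}{9}-\varrho_{k}$. So outside $\kappa\in[1/2,2]$ the upper-endpoint inequality fails for the interior formula, and it is the projection, not the FOC, that caps $n^{*}$ at $\tfrac{2}{3}(1-\lambda)$.
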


\begin{IEEEproof}
(i)--(ii): $\partial W^{E}/\partial s
=n[(k_{S}-k_{P})\Omega-k_{S}w_{S}+k_{P}w_{P}]$; the root reproduces
Corollary~\ref{cor:stake}'s $w_{S}^{\star}$, and
$s^{*}-\lambda=w_{S}^{\star}/n$; the productivity condition is
$w_{S}^{\star}\ge0$ rearranged with $\Omega=T+C$. (iii): substitute
$s^{*}(n)$, factor, take the interior root of the FOC; the second candidate
exceeds $1-\lambda$ and is infeasible. (iv): with a cornered stake the
program is $\max_{n}\,k_{j}[\Omega T-\tfrac{T^{2}}{2}]$, whose FOC has the
interior root $\tfrac{2}{3}(1-\lambda)$. Appendix~\ref{app:secondbest}
shows that the concentrated program is unimodal on $(0,1-\lambda)$ for every
productivity ratio, gives the closed form
\eqref{eq:nstar-general} for $n^{*}$ at an arbitrary ratio, and establishes
that the band \eqref{eq:band} is exact and attained, so no quasiconcavity
regularity condition is needed.
\end{IEEEproof}

\begin{remark}\label{rem:two-inputs}
Theorem~\ref{thm:secondbest} converts Section~\ref{sec:corners}'s ought--is
stalemate into a design rule with two estimable inputs. The scope rule
barely needs the productivity split ($n^{*}$ moves within a four-point
band); the compensation rule needs it exactly ($s^{*}$ is linear in
$w_{S}^{\star}$, which is linear in $k_{S}-k_{P}$). An econometrician who
can estimate only $\lambda$ can already set scope nearly optimally; setting
compensation requires Section~\ref{sec:empirics}'s decomposition of refresh
into subject and platform contributions.
\end{remark}

\subsection{When is pollution tolerated?}\label{sec:regimes}

Dropping the constraint $s\ge\lambda$ and letting effective quality carry
Theorem~\ref{thm:pollution}'s multiplier $\tilde q(s)=1-d^{*}(s)$
(quadratic benchmark, $d^{*}=(\lambda-s)^{+}nQ/c_{d}$), the objective
acquires a kink at $s=\lambda$. Write $a_{1}:=1-\lambda-n>0$ and
$a_{2}:=2-2\lambda-n>0$ on the active region.

\begin{proposition}[Three coverage regimes]\label{prop:regimes}
The second-best coverage satisfies:
\begin{enumerate}
\item[(a)] \textbf{Strict renewable claim}, $s^{*}>\lambda$, iff
$k_{S}/k_{P}>n/a_{2}$ (Theorem~\ref{thm:secondbest}(ii));
\item[(b)] \textbf{Floor exactly}, $s^{*}=\lambda$: the subject is fully
covered but holds no net claim; no pollution and zero distortion, which is
the corner regime between (a) and (c);
\item[(c)] \textbf{Tolerated pollution}, $s^{*}<\lambda$, iff
\begin{equation}\label{eq:tolerate}
c_{d}\bigl[(k_{P}-k_{S})\,a_{2}-2k_{P}\,a_{1}\bigr]\;>\;
Q\,k_{P}\,n\,a_{1}a_{2},
\end{equation}
i.e., only when the platform is the strongly dominant refresher
\emph{and} the distortion technology is stiff ($c_{d}$ large relative to
the stakes $QR$): the subject is simultaneously unproductive and
defenseless, so buying her authenticity is not worth the platform stake it
costs.
\end{enumerate}
\end{proposition}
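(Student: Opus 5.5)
The plan is to fix the scope $n$ on the active region and study the concentrated objective $s\mapsto W^{E}(n,s)$. This objective is piecewise smooth with a single kink at $s=\lambda$. On each side of the kink I will show it is concave in $s$, and then read off the three regimes from the signs of the two one-sided derivatives at $s=\lambda$. The argument is the standard one for a concave function with a kink. If the right derivative at $\lambda$ is positive, the optimum lies to the right, giving regime (a). If the left derivative is negative, the optimum lies to the left, giving (c). If the left derivative is nonnegative and the right derivative is nonpositive, the kink itself is optimal, giving (b). The two conditions for (a) and (c) should be mutually exclusive, which I will verify.

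For the right side, $s\ge\lambda$, Theorem~\ref{thm:pollution}(ii) gives $d^{*}=0$ and $\tilde q=1$, so the program is exactly the one in Theorem~\ref{thm:secondbest}. Its derivative $n[(k_{S}-k_{P})\Omega-k_{S}w_{S}+k_{P}w_{P}]$ is evaluated at $s=\lambda$, where Proposition~\ref{prop:collapse}(i) gives $w_{S}=0$ and $w_{P}=R(n)-\lambda n=n\,a_{1}$. The derivative is positive iff $(k_{S}-k_{P})\Omega+k_{P}n a_{1}>0$. Using $\Omega=n a_{2}/2$, this rearranges to $k_{S}/k_{P}>(\Omega-w_{P})/\Omega=C/\Omega=n/a_{2}$, which is (a). This recovers Theorem~\ref{thm:secondbest}(ii), so the step is essentially a citation.

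For the left side, $s<\lambda$, effective quality carries the multiplier $\tilde q(s)=1-(\lambda-s)nQ/c_{d}$. I will write the objective as $\tilde q(s)\,\Omega\,Y^{E}(s)-\tfrac{k_{S}w_{S}^{2}}{2}-\tfrac{k_{P}w_{P}^{2}}{2}$ with the stakes from \eqref{eq:collapse}. Differentiating at $s=\lambda^{-}$ gives two terms. The authenticity term is $\tilde q'(\lambda)\,\Omega Y^{E}=(nQ/c_{d})\,\Omega\,k_{P}n a_{1}$, which is positive and argues for paying coverage. The stake-reallocation term is the same bracket as on the right, $n[(k_{S}-k_{P})\Omega+k_{P}w_{P}]$. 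Setting the total negative, multiplying through by $c_{d}$, and substituting $\Omega=na_{2}/2$ and $w_{P}=na_{1}$ should yield exactly \eqref{eq:tolerate}: $c_{d}[(k_{P}-k_{S})a_{2}-2k_{P}a_{1}]>Qk_{P}na_{1}a_{2}$. As a consistency check, the left-hand bracket can be positive only if $(k_{P}-k_{S})a_{2}>2k_{P}a_{1}$. That forces $k_{S}/k_{P}<n/a_{2}$, so (c) implies not-(a). Since $a_{1},a_{2},Q,n>0$, the inequality additionally requires $c_{d}$ large, which matches the stated interpretation.

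The main obstacle is the treatment of the subject's margin on the polluted side. There $w_{S}=(s-\lambda)n<0$, and Lemma~\ref{lem:refresh} corners subject effort at zero. In that case the $k_{S}$ terms would vanish from the left derivative, and the displayed condition \eqref{eq:tolerate} would not contain $k_{S}$. To match \eqref{eq:tolerate} as stated, I will first try the convention implicit in the paper's formula: the interior quadratic form of $W^{E}$ is extended continuously through the kink, and $\tilde q$ multiplies the value term only. I will then check whether the one-sided derivative of the cornered branch gives the same sign condition at $s=\lambda^{-}$. At the kink itself $w_{S}=0$, so the cornered and extended branches agree in level. If they also agree in slope to the order that matters, the result is robust to the convention.

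Concavity on each piece still has to be confirmed. On the right it is inherited from Corollary~\ref{cor:stake}. On the left, $\tilde q$ is linear in $s$ and $Y^{E}$ is linear in the stakes, so the product is quadratic in $s$. Its leading coefficient is of order $-(nQ/c_{d})n^{2}(k_{P}-k_{S})$ plus the negative cost curvature, which gives concavity whenever $k_{P}\ge k_{S}$; this is precisely the case in which (c) can hold.
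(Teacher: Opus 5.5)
Your argument is the paper's proof. Appendix~\ref{app:regimes} takes the same two one-sided derivatives at $s=\lambda$. From above it gets $n\bigl[(k_S-k_P)\Omega+k_PT\bigr]$; from below it gets the same bracket plus the pollution-relief term $(nQ/c_d)\Omega k_PT$. It then substitutes $\Omega=na_2/2$ and $T=na_1$ and multiplies by $2c_d/n$. It also uses, without comment, the extended convention you plan to try first: it writes $\partial Y^E/\partial s=n(k_S-k_P)$ on the polluted side.

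Your concavity discussion goes beyond the paper, which argues only locally. It does not, however, cover the left branch when $k_S>k_P$, so the ``only if'' of (c) needs one more line there. Where $Y^E\ge0$, $dW^-/ds>0$, because every term is nonnegative and $-nk_Sw_S>0$. Where $Y^E<0$, $W^-\le0<k_PT(\Omega-T/2)=W(\lambda)$.

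The robustness check you plan fails, and the failure is first order. Take $e^E=0$ for $w_S<0$, as the corner branch of Lemma~\ref{lem:refresh} prescribes. The left branch is then $\tilde q(s)\Omega k_Pw_P-k_Pw_P^2/2$, and
\[
\left.\frac{\partial W}{\partial s}\right|_{s=\lambda^-}=nk_P\Bigl[\frac{Q}{c_d}\,\Omega T-C\Bigr].
\]
This is your extended slope minus $nk_S\Omega$. The extended branch charges any cut below $\lambda$ for negative subject refresh at first order, and under the corner that loss never occurs. So the two branches agree in level but not in slope. Three consequences follow:
\begin{itemize}
\item The kink test for (c) becomes $c_d>Qa_1a_2$, which is \eqref{eq:tolerate} with $k_S$ set to zero.
\item Left-minus-right slope becomes $n\Omega\bigl[(Q/c_d)k_PT-k_S\bigr]$, which is negative whenever $k_Sc_d>Qk_PT$. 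The kink is then convex and both local tests can fire at once.
\item (a) and (c) are no longer mutually exclusive, and the optimum must be found by comparing branch maxima.
\end{itemize}

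This changes the answer, not just the proof. Take $k_S=k_P=k$, $\lambda=0.3$, $n=0.5$ and $c_d/Q$ large. The cornered left branch peaks near $s=0.05$ with value close to $k\Omega^2/2\approx0.0253k$. That beats the right branch's maximum $k(\Omega T-T^2/4)=0.02k$. So $s^*<\lambda$, even though $k_S/k_P>n/a_2$ and \eqref{eq:tolerate} fails. The reading ``only when the platform is the strongly dominant refresher'' is therefore also an artifact of the convention.

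The proposition as stated holds only if the subject's stake may act symmetrically below zero, i.e.\ negative refresh effort. The paper's own proof relies on the same convention. You should state it as an explicit modelling assumption rather than expect the result to be robust to it.
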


\begin{IEEEproof}
One-sided derivatives of the kinked objective at $s=\lambda$: from above,
$\operatorname{sign}=\operatorname{sign}(w_{S}^{\star})$; from below, the
displayed expression, computed in closed form in
Appendix~\ref{app:regimes}.
\end{IEEEproof}

\begin{remark}[Normative caveat]\label{rem:normative}
Regime (c) is a transferable-utility statement and nothing more: it prices
the subject's distortion, not her autonomy. Under
Assumption~\ref{as:commitment} the inalienable-use constraint
$\ell\in\mathcal R_{i}(m)$ can exclude (c) outright, and under the
misrepresentation variant (Remark~\ref{rem:misrep}) the regime shrinks
further, since low fidelity then harms the subject directly. We report (c)
because suppressing it would hide what pure efficiency logic recommends,
and that is precisely the kind of claim a welfare analysis should expose
to normative scrutiny rather than bury.
\end{remark}

\subsection{Delegated scope: who should control deployment}\label{sec:delegation}

Suppose the planner sets only $(\alpha,\tau)$ and the controller named in
$c$ chooses scope ex ante (Theorem~\ref{thm:rights}(ii)'s conditions). Can
the commitment optimum $(n^{*},s^{*})$ be reproduced? By
Proposition~\ref{prop:collapse}, delegation gives the planner two
instruments for two targets (the controller's scope FOC and the coverage
identity), so the question is feasibility, not counting.

\begin{theorem}[Delegated implementation is asymmetric]\label{thm:delegation}
\begin{enumerate}
\item[(i)] \textbf{(Platform control.)} The unique candidate pair is
\begin{equation}\label{eq:deleg-P}
\alpha^{*}=\frac{s^{*}-1+2n^{*}}{n^{*}},\qquad
\tau^{*}=(1-\alpha^{*})(1-2n^{*}),
\end{equation}
and it is feasible ($\tau^{*}\ge0$, $\alpha^{*}\in[0,1]$) only if
$n^{*}\le\tfrac12$. In the symmetric benchmark this holds iff
\begin{equation}\label{eq:lambdac}
\lambda\;\ge\;\lambda_{c}=\frac{3-\sqrt3}{6}\approx0.211.
\end{equation}
For low-exposure classes the platform under-deploys and per-quality
royalties can only shrink its reach further; implementing $n^{*}>\tfrac12$
requires $\tau^{*}<0$, that is, per-deployment subsidies (cf.
Proposition~\ref{prop:renewable}'s financing caveat).
\item[(ii)] \textbf{(Subject control.)} The unique candidate pair is
\begin{equation}\label{eq:deleg-S}
\alpha^{*}=\frac{w_{S}^{\star}}{(n^{*})^{2}},\qquad
\tau^{*}=\lambda-\alpha^{*}(1-2n^{*}),
\end{equation}
and in the symmetric benchmark it is feasible for every $\lambda\in(0,1)$,
with the remarkably simple form $\alpha^{*}=\tfrac{1}{2\sqrt3}\approx0.289$
(constant in $\lambda$) and $\tau^{*}>0$ increasing in $\lambda$.
\item[(iii)] \textbf{(Reading.)} The feasibility map answers
Section~\ref{sec:corners}'s control question constructively: the
\emph{ought} controller is delegation-feasible for every exposure class,
while the \emph{is} controller is delegation-feasible only where exposure
is high ($\lambda\ge\lambda_{c}$), which are exactly the classes where, by
Corollary~\ref{cor:reversal}, platform scope is least distorted to begin
with. Where the de facto regime cannot be steered to the second best with
nonnegative royalties, the options are direct scope regulation
(Section~\ref{sec:secondbest}), deployment subsidies, or reassigning
control; only the last also repairs the feasibility conditions of
Corollary~\ref{cor:selfperp}.
\end{enumerate}
\end{theorem}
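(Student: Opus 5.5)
The plan is to work in the symmetric additive benchmark of Theorem~\ref{thm:secondbest} and treat delegation as a two-equation inversion problem. The two targets are (a) the controller's scope first-order condition holding at $n^{*}$, and (b) the coverage identity $s(n^{*})=s^{*}$ from Proposition~\ref{prop:collapse}(i), with $s(n)=\tau+\alpha(1-n)$ since $R(n)/n=1-n$. The two unknowns are $(\alpha,\tau)$.

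For the controller's condition I will use the owner's stationarity $w_{j}'(n^{*})=0$, $j\in\{P,S\}$, which is the fixed-stock program of Proposition~\ref{prop:scope} and the reduction used in Proposition~\ref{prop:corners}(ii). This is the step I expect to be the main obstacle. The exact ex-ante condition of Theorem~\ref{thm:rights}(ii) carries the cross term $k_{S}w_{P}w_{S}'$ (respectively $k_{P}w_{S}w_{P}'$), which does not vanish when both margins are live. I would first try to show that the cross term is dominated when $\widehat Q$ is stock-dominated ($(1-\delta)Q_{t}$ large relative to $Y^{E}$). Failing that, I would state the result conditional on the own-stake scope rule, which is the rule that generates the displayed candidates. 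Under either route, uniqueness of the candidate follows because both equations are linear in $(\alpha,\tau)$ with a nonsingular system whenever $n^{*}>0$.

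\emph{Platform control.} The condition $w_{P}'=(1-\alpha)(1-2n^{*})-\tau=0$ gives $\tau^{*}=(1-\alpha^{*})(1-2n^{*})$. Substituting into the coverage identity gives
\begin{equation*}
s^{*}=(1-\alpha)(1-2n^{*})+\alpha(1-n^{*})=1-2n^{*}+\alpha n^{*},
\end{equation*}
which yields \eqref{eq:deleg-P}. Feasibility then follows in three steps.
\begin{enumerate}
\item[(1)] If $\alpha^{*}<1$, then $\tau^{*}\ge0$ forces $n^{*}\le\tfrac12$. If $\alpha^{*}=1$, coverage must equal $1-n^{*}$, which I will rule out by comparison with the closed form.
\item[(2)] In the symmetric case I insert $n^{*}=\tfrac{3-\sqrt3}{2}(1-\lambda)$ from \eqref{eq:nstar}. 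The condition $n^{*}\le\tfrac12$ is equivalent to $1-\lambda\le 1/(3-\sqrt3)$, which is exactly $\lambda\ge\lambda_{c}=(3-\sqrt3)/6$.
\item[(3)] I check $\alpha^{*}\in[0,1]$, that is $1-2n^{*}\le s^{*}\le 1-n^{*}$. Using $s^{*}=\lambda+\tfrac{\sqrt3-1}{4}(1-\lambda)$, both inequalities are linear in $\lambda$, so it suffices to verify them at the endpoints of the admissible interval.
\end{enumerate}
The reading that royalties can only shrink reach further follows directly: $n_{P}=\tfrac{1-\tau/(1-\alpha)}{2}\le\tfrac12$ for $\tau\ge0$, so reaching $n^{*}>\tfrac12$ requires $\tau<0$.

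\emph{Subject control.} The condition $w_{S}'=\alpha(1-2n^{*})+\tau-\lambda=0$ gives $\tau^{*}=\lambda-\alpha^{*}(1-2n^{*})$. The coverage identity then becomes $s^{*}=\lambda+\alpha n^{*}$. By \eqref{eq:sstar}, $s^{*}-\lambda=w_{S}^{\star}/n^{*}$, so $\alpha^{*}=w_{S}^{\star}/(n^{*})^{2}$. In the symmetric case this gives
\begin{equation*}
\alpha^{*}=\frac{(\sqrt3-1)/4}{(3-\sqrt3)/2}=\frac{\sqrt3-1}{2\sqrt3(\sqrt3-1)}=\frac{1}{2\sqrt3},
\end{equation*}
which is constant in $\lambda$. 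Then $\tau^{*}=\lambda-\tfrac{1}{2\sqrt3}\bigl(1-(3-\sqrt3)(1-\lambda)\bigr)$ is affine in $\lambda$. Its slope is $1-\tfrac{3-\sqrt3}{2\sqrt3}>0$, and at $\lambda=0$ it equals $\tfrac{2-\sqrt3}{2\sqrt3}>0$. Hence $\tau^{*}>0$ and $\tau^{*}$ is increasing on $(0,1)$.

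Part (iii) needs no new argument. It juxtaposes the two feasibility maps with the regimes of Corollary~\ref{cor:reversal} and with Corollary~\ref{cor:selfperp}.
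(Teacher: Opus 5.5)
Your proposal is correct and is essentially the paper's proof: you substitute the fixed-stock controller conditions $w_{j}'(n^{*})=0$ of Proposition~\ref{prop:scope} into the coverage identity $s=\tau+\alpha(1-n)$, solve the resulting linear system for $(\alpha,\tau)$, and evaluate at $n^{*}=\tfrac{3-\sqrt3}{2}(1-\lambda)$ and $w_{S}^{\star}=T/2$. The paper likewise adopts this own-stake scope rule without comment rather than the exact condition of Theorem~\ref{thm:rights}(ii), and you are right that its cross term does not vanish at the candidate, since the non-controller's stake has slope $T'(n^{*})=(\sqrt3-2)(1-\lambda)\neq0$ there; your conditional fallback is therefore what the displayed formulas actually require, and in step (1) you should exclude $\alpha^{*}=1$ outside the symmetric case by requiring $w_{P}>0$ rather than by the closed form, because $\alpha^{*}=1$ occurs exactly when $w_{S}^{\star}=T$.
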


\begin{IEEEproof}
Substitute the controller FOCs of Proposition~\ref{prop:scope} into the
coverage identity: under platform control $s=(1-2n)+\alpha n$ along the FOC
locus, under subject control $s=\lambda+\alpha n$; solve each pair for
$(\alpha,\tau)$. Feasibility bounds by direct evaluation; the symmetric
constants follow from $n^{*}=\tfrac{3-\sqrt3}{2}(1-\lambda)$ and
$w_{S}^{\star}=T/2$, whence
$\alpha^{*}=(1-\lambda-n^{*})/(2n^{*})=\tfrac{1}{2\sqrt3}$.
\end{IEEEproof}

\subsection{Taking stock: the constitution of a memory market}\label{sec:constitution}

Assembling Sections~\ref{sec:results}--\ref{sec:optimal}, the second-best
constitution of a personalized-memory market has five components, none of
which is ``an owner.'' \emph{Coverage} $s^{*}=\lambda+w_{S}^{\star}/n$:
compensation above exposure by the subject's optimal renewable claim,
royalty-financed rather than share-financed where reach is expected to grow
(Proposition~\ref{prop:collapse}(ii)), committed through rules rather than
promises (Theorem~\ref{thm:pollution}(iv)). \emph{Scope}
$n^{*}\approx\tfrac23(1-\lambda)$: the static welfare rule with a
one-third depletion haircut (Theorem~\ref{thm:secondbest}(iv)), either set
directly or delegated subject to Theorem~\ref{thm:delegation}'s feasibility
map. \emph{Transparency} $\mathsf T=1$, which is complementary to coverage,
not a substitute (Theorem~\ref{thm:pollution}(iii)). \emph{Feasibility
infrastructure}, portability $\phi$ and salience $\zeta$, supplied by
policy because no equilibrium actor supplies them
(Corollary~\ref{cor:selfperp}). And \emph{formation monitoring} keyed to
relational-externality density rather than listing counts
(Theorem~\ref{thm:formation}). The ought--is question dissolves on contact
with this list: what the subject should hold is not the asset but a claim
structure; what the platform should hold is not the asset but a residual
stake plus scope constrained by its own feasibility region; and what
neither should hold is the other's margin.

\subsection{Information and market-design extensions (compressed)}\label{sec:extensions}

Everything above holds under complete information
(Assumption~\ref{as:completeinfo}). Relaxing it adds a familiar layer,
which we compress because it is the least memory-specific part of the
problem and is developed in Appendix~\ref{app:info}: private fidelity generates an
Akerlof pool \cite{akerlof1970} whose repair is performance certification,
distinct from provenance certification and subject to certifier market
power in the sense of Lizzeri \cite{lizzeri1999}; pre-purchase evaluation
leaks the asset, pushing high-$\kappa^{\star}$ classes toward query-sandbox
access and audited replay \cite{antonyao2002}; and compatibility-indexed
evaluation is forced by Lemma~\ref{lem:no-score} regardless of information
structure. The certifier's economically correct role was already pinned
down with complete information (Remark~\ref{rem:why-tau}): make $\tilde q$
contractible so that coverage can be quality-indexed; certification
serves authorization and never substitutes for it. Multi-subject relational
memory remains companion-paper scope, with
Theorem~\ref{thm:formation}'s machinery as the vehicle.

\section{Governance and Design Implications}\label{sec:governance}

\subsection{Reading the model as a design document}\label{sec:design-doc}

Every claim in this section is a corollary of
Sections~\ref{sec:results}--\ref{sec:optimal}, cited by result; none is a
prior. This discipline matters because governance debates about AI memory
currently run on intuitions (``data wants to be shared,'' ``users should
own their data,'' ``markets will under-provide quality''), each of which
the model shows to be true only on a bounded parameter region, and each of
which, applied outside its region, prescribes the wrong instrument. Three
kinds of actors implement what follows: the \emph{marketplace constitution}
(rules a platform or exchange can adopt unilaterally and commit to), the
\emph{regulator} (mandates that no equilibrium actor supplies privately),
and \emph{infrastructure} (credentials and benchmarks that make the other
two contractible). The failure--instrument matrix
(Table~\ref{tab:matrix}) assigns cures to diseases;
Section~\ref{sec:implications} states the general lessons;
Section~\ref{sec:actors} assigns the work.

\begin{table*}[t]
\caption{The failure--instrument matrix. Each row is a derived assignment:
the failure in column two is not repaired by the instruments in column
three acting alone, and is matched to the instrument in column four by the
result in column five.}
\label{tab:matrix}
\centering
\begin{tabular}{c p{3.6cm} p{3.4cm} p{5.6cm} p{2.6cm}}
\hline
\# & Failure & Not fixable by (alone) & Matched instrument & Source \\
\hline
1 & Fabricated lineage & performance scores; consent badges & provenance
credential (existing TEE infrastructure) & A.\ref{as:provenance}
(presupposed) \\
2 & Low fidelity; poor buyer--task--host fit & provenance; any global
score & performance certification; compatibility-class tests; sandbox and
audited replay & Lem.~\ref{lem:no-score}; \S\ref{sec:extensions} \\
3 & Unauthorized extraction, sale, or use & provenance; performance &
scoped license $\ell$: purpose, recipient, horizon, call budget,
revocation & Def.~\ref{def:license} \\
4 & Unpriced subject exposure & price and reputation & coverage
$s^{*}=\lambda+w_{S}^{\star}/n$, indexed per quality-deployment &
Thm.~\ref{thm:secondbest}(i); Prop.~\ref{prop:selection}(iii) \\
5 & Refresh underinvestment on either margin & reassigning ``the owner'' &
interior renewable claims $(w_{S}^{\star},w_{P}^{\star})$; control and
cash flow set separately & Thm.~\ref{thm:rights};
Cor.~\ref{cor:stake}; Prop.~\ref{prop:corners} \\
6 & Long-run depletion from reach expansion & nonrivalry intuition; access
metrics & scope haircut $n^{*}\approx\tfrac23(1-\lambda)$; sequence
portability with coverage & Thm.~\ref{thm:headline};
Thm.~\ref{thm:secondbest}(iv) \\
7 & Behavioral pollution; obfuscation equilibrium & transparency alone;
compensation alone & transparency \emph{and} coverage floor jointly;
rule-based commitment & Thm.~\ref{thm:pollution} \\
8 & Coverage dilution as the market grows & share-financed compensation &
royalty-financed coverage ($\tau$), reach-invariant &
Prop.~\ref{prop:collapse}(ii) \\
9 & Over-participation and relational harm & focal-subject consent &
relational-density monitoring \eqref{eq:wedges}; affected-party and
collective governance & Thm.~\ref{thm:formation}; Cor.~\ref{cor:bridge} \\
10 & Self-perpetuating de facto control & renaming the owner & portability
$\phi$ and salience $\zeta$ mandates; delegation per the feasibility map &
Cor.~\ref{cor:selfperp}; Thm.~\ref{thm:delegation} \\
\hline
\end{tabular}
\end{table*}

Two structural properties of the matrix deserve emphasis. \emph{No column
dominates}: every instrument in the fourth column fails against at least
one failure other than its own, which is why the matrix cannot be
collapsed into a single score or a single reform
(Section~\ref{sec:nofusion}). \emph{The failure column is ordered by
visibility}: failures 1--4 are legible in a single transaction; failures
5--10 are equilibrium objects, visible only in aggregates (refresh rates,
stock trajectories, pool composition, participation densities). That is
why Section~\ref{sec:empirics}'s measurement interface is part of
governance, not an academic afterthought.

\subsection{Ten derived implications}\label{sec:implications}

\begin{enumerate}
\item \textbf{There is no universal owner}
(Thm.~\ref{thm:rights}; Prop.~\ref{prop:corners}). Two noncontractible
investors both need marginal incentives; budget balance forbids giving both
the full margin; and each ownership corner destroys the other party's
supply. ``Who owns the memory'' is the wrong regulatory question.
\item \textbf{Renewable claims are conditional, and the condition is
estimable} (Cor.~\ref{cor:stake}; Thm.~\ref{thm:secondbest}(ii)). The
subject should hold a positive continuation claim iff
$k_{S}/k_{P}\ge C/\Omega$; this requires no operational control and is an
econometric threshold, not an ideological one.
\item \textbf{Compensation is quality policy}
(Thm.~\ref{thm:pollution}(ii); Thm.~\ref{thm:secondbest}(ii)). Paying
subjects buys authenticity (an efficiency argument independent of
fairness), and in the second best the authenticity floor is
self-enforcing wherever the subject is a productive refresher.
\item \textbf{Transparency and compensation are complements}
(Thm.~\ref{thm:pollution}(iii)). Mandating either alone converts naive
supply into distorted supply or leaves the naive exposed; only jointly do
authenticity and voluntariness recover.
\item \textbf{How compensation is financed matters as much as its level}
(Prop.~\ref{prop:collapse}(ii)). Share-financed coverage dilutes
mechanically as reach expands and crosses the authenticity floor at twice
the subject-preferred scope; growing markets should be royalty-financed.
\item \textbf{Access metrics are not welfare metrics}
(Thm.~\ref{thm:headline}; Cor.~\ref{cor:access}). Reach-expanding policy
can raise measured access while shrinking the long-run stock; evaluation of
portability and interoperability mandates must include the refresh margin.
\item \textbf{Participation volume is not a health metric}
(Thm.~\ref{thm:formation}). The marginal lister carries one negative and
two positive externalities; the regulator's object is the three-wedge
comparison \eqref{eq:wedges} (relational density against appropriability
and thickness), stratified by memory class, not listing counts.
\item \textbf{Ownership reform without feasibility infrastructure is
inert} (Thm.~\ref{thm:rights}(ii); Cor.~\ref{cor:selfperp};
Thm.~\ref{thm:delegation}). Renaming the owner while portability $\phi$
and salience $\zeta$ stay low changes neither scope nor refresh, and the
incumbent maintains those conditions in equilibrium; $\phi$ and $\zeta$
are the binding policy margins.
\item \textbf{The scope rule is robust; the compensation rule is
information-hungry} (Thm.~\ref{thm:secondbest}(iii)--(iv)). $n^{*}$ sits
in a four-point band around two-thirds of the static optimum regardless of
the productivity split, so scope can be set with an estimate of $\lambda$
alone; coverage requires the full decomposition $(k_{S},k_{P})$.
\item \textbf{Monetary IR does not exhaust consent}
(A.\ref{as:commitment}; Rem.~\ref{rem:normative}). For inalienable uses
the constraint is $\ell\in\mathcal R_{i}(m)$, not a higher royalty; and
where pure efficiency logic recommends tolerated pollution
(Prop.~\ref{prop:regimes}(c)), that recommendation is exactly what
normative scrutiny should target rather than inherit.
\end{enumerate}

\subsection{Who implements what}\label{sec:actors}

The five components of Section~\ref{sec:constitution}'s constitution divide
cleanly across the three actors. The \emph{marketplace} can adopt
unilaterally: quality-indexed coverage at $s^{*}$ with rule-based
commitment (Theorem~\ref{thm:pollution}(iv) shows commitment is in its own
interest when $(1-\nu)RQ>c_{d}$); scoped licenses and audited replay;
compatibility-class evaluation; royalty financing. The \emph{regulator}
must supply what no equilibrium actor will: the transparency mandate
$\mathsf T$ (Theorem~\ref{thm:pollution}(iii)); portability and salience
infrastructure (Corollary~\ref{cor:selfperp}); relational-density
monitoring and, for high-$\varrho$ classes, participation review
(Theorem~\ref{thm:formation}); scope defaults or deployment subsidies where
delegation is infeasible (Theorem~\ref{thm:delegation}(i)); and the
inalienable-use boundary itself. \emph{Infrastructure} (TEE provenance,
class-indexed benchmarks, audit logs) is what makes the first two
contractible; its role is enabling, not substitutive: certification serves
authorization (Remark~\ref{rem:why-tau}). The assignment explains a pattern
the matrix alone does not: the instruments that fix \emph{visible} failures
(rows 1--4) are mostly marketplace-implementable, while the instruments
that fix \emph{equilibrium} failures (rows 5--10) almost all require the
regulator or a standards body. Markets for person-linked memory
self-correct their transactions, not their dynamics.

\subsection{The no-fusion rule}\label{sec:nofusion}

The matrix compresses into one design rule: \textbf{a marketplace should
not issue a fused ``memory trust score.''} The model now supplies three
independent reasons for it. First, \emph{crossing rankings}: no
buyer-independent scalar orders artifacts correctly for all buyers
(Lemma~\ref{lem:no-score}), so any global score misprices fit. Second,
\emph{direction-flipping}: the information content of an authorization
badge depends on the harm technology's sign and the royalty structure
(Proposition~\ref{prop:selection}); a fused score would hard-code one
sign and mislead whenever the other obtains. Third,
\emph{regime-dependence}: whether high measured coverage signals health or
a binding floor, and whether thick participation signals success or
relational over-supply, depends on parameters
(Proposition~\ref{prop:regimes}; Theorem~\ref{thm:formation}) that a
scalar cannot carry. Provenance credentials, performance and fit reports,
and permission credentials answer different questions, fail differently,
and, by the results above, actively mislead when merged. The minimal
credential dimensionality claim is formalized as a conjecture in
Section~\ref{sec:limits}.

\section{Measurement, Calibration, and Evidence}\label{sec:empirics}

\subsection{What the model exposes}\label{sec:estimands}

The model was written so that its primitives are measurable rather than
postulated, and Section~\ref{sec:governance} showed that most of its
governance content is a comparison between measured quantities: the
productivity ratio $k_{S}/k_{P}$ against $C/\Omega$
(Theorem~\ref{thm:secondbest}(ii)), the three wedges of \eqref{eq:wedges},
the sign of $dY^{E}/dn$ in \eqref{eq:depletion}. A paper that ends at the
theorems therefore leaves its own design rules unusable. This section closes
that gap in three layers: a numerical companion that maps the parameter
regions (Section~\ref{sec:companion}); a synthetic memory economy in which
every estimand has a known ground truth, so that estimators can be
\emph{validated} and the theorems can actually fail
(Sections~\ref{sec:testbed}--\ref{sec:eq-results}); and pre-registered
designs for platforms holding real interaction data
(Section~\ref{sec:prereg}). The middle layer is the one no observational
study can supply: on real data one never sees $q_{u}$'s ground truth, so an
estimate of $k_{S}$ cannot be checked against anything.

\subsection{Layer 1: the parameter map}\label{sec:companion}

\begin{figure}[t]
\centering
\includegraphics[width=\columnwidth]{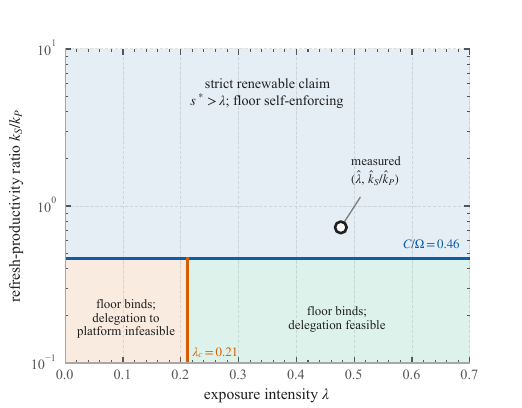}
\caption{The institutional parameter map. The horizontal boundary is
Theorem~\ref{thm:secondbest}(ii)'s self-enforcing coverage floor evaluated at
the second-best scope, $k_{S}/k_{P}=C/\Omega=(3-\sqrt3)/(1+\sqrt3)\approx
0.46$, independent of $\lambda$, because $n^{*}$ scales with $1-\lambda$.
The vertical boundary is Theorem~\ref{thm:delegation}(i)'s delegation
threshold $\lambda_{c}=(3-\sqrt3)/6\approx0.21$. The marker is the point
recovered by the testbed of Section~\ref{sec:testbed}.}
\label{fig:regimes}
\end{figure}

Figure~\ref{fig:regimes} plots the two boundaries that decide which
instruments a memory class needs. Above the horizontal line the subject is a
productive enough refresher that the second best pays her above bare exposure
without any authenticity mandate; below it, the floor binds and
Proposition~\ref{prop:regimes}'s regimes apply. Left of the vertical line the
de facto platform-controlled regime cannot be steered to the second best with
nonnegative royalties, so scope regulation or deployment subsidies are
required. Two measured numbers place a memory class on this map, and
nothing else about the class matters for the choice of instrument, which is
the practical content of Sections~\ref{sec:optimal}--\ref{sec:governance}.

\subsection{Layer 2: a synthetic memory economy}\label{sec:testbed}

We build an environment that instantiates Definitions~\ref{def:state},
\ref{def:artifact}, \ref{def:fidelity}, \ref{def:exposure} and
\ref{def:kappa} exactly. Subject $i$'s state is
$\omega_{i}=(\omega^{c},\omega^{g}_{k(i)},\omega^{p}_{i})$ over disjoint
coordinate blocks, each block an AR(1) with its own persistence
($\rho_{c}=0.995$, $\rho_{g}=0.97$, $\rho=0.85$): every component drifts,
and at different rates. Interaction generates
$y=\langle\omega,x\rangle+\nu$ with $x\sim N(0,I)$; \emph{subject effort} $e$
raises the informativeness of each interaction,
$\sigma^{2}_{\nu}(e)=\sigma^{2}_{\nu 0}/(1+a_{S}e)$, and
\emph{platform effort} $x_{P}$ raises the share of the history's information
that extraction retains, $\psi(x_{P})=1-e^{-a_{P}x_{P}}$. The memory artifact
is the resulting Gaussian belief; its stationary information stock has the
closed form $J^{*}_{jj}=\psi F/(1-\rho_{j}^{2})$.

Three properties make this environment the right instrument. First, fidelity
is \emph{exact}: with quadratic loss, $V_{u}(\omega)=0$, so
Definition~\ref{def:fidelity} reduces to a variance-reduction ratio,
$q_{u}=1-w^{\top}\Sigma_{\varphi}w/w^{\top}\Sigma_{0}w$, computed rather than
estimated. Second, the extraction ceiling is a \emph{test} rather than an
assumption: $\psi$ is a free parameter that can be pushed to the
sufficient-statistic limit. Third, demand is \emph{derived}: buyers differ in
a use's personal-loading share $\chi_{b}$ and in intensity $A_{b}$, and
$R(n),C(n),B(n)$ are read off the realized value distribution. The resulting
demand curve is not the uniform benchmark's (regressing derived $R$ on
$n(1-n)$ gives a negative $R^{2}$, and derived $R$ peaks at $n\approx0.64$
rather than $\tfrac12$), so agreement between the structural optimum and
the closed forms of Sections~\ref{sec:results}--\ref{sec:optimal} is
informative rather than tautological. Likewise $\lambda$ is derived, as
$\lambda_{0}$ times the adversarial-to-buyer decision-value ratio the
environment produces.

Every result below is reported as a family rather than a point. The
parameters that were set by hand (state persistence $\rho$, the dimension
$d_{p}$ of the personal block, the fidelity regime, buyer-intensity
dispersion, the cost ratio $c_{S}/c_{P}$, and the seed) are perturbed, and
each figure plots the resulting family. This separates structural conclusions
from calibration artifacts, and in two cases the perturbation itself
identifies the mechanism behind a result.

\subsection{What can fail, and what cannot}\label{sec:what-can-fail}

A synthetic testbed can produce three kinds of statement, and conflating them
is the standard way computational sections overclaim. We label each result in
Table~\ref{tab:results} accordingly.

An \emph{identity check} cannot fail: it is true by construction in the
Gaussian environment, and what it verifies is the implementation, not the
theory. Proposition~\ref{prop:ceiling} restricted to garblings of the
subject's own history is of this kind (extraction there is a scalar
contraction of the information matrix, so the ordering follows from positive
semi-definite monotonicity), as is the $\rho^{2k}$ decay of
Corollary~\ref{cor:staleness}. We report these because an implementation that
failed them would be wrong, not because passing them is evidence.

A \emph{test} has a direction in which it fails, and some of ours do. The
informative form of the ceiling is therefore not the in-class check but its
complement: an extractor that is \emph{not} a garbling of the subject's
history (one that imputes from a population prior, say) breaches the
ceiling in $68\%$ of checks, by up to $0.045$ in fidelity units. That is the
content of Proposition~\ref{prop:ceiling}: it is a statement about what
memory extraction cannot do, and it fails precisely for artifacts that are
not, in the definitional sense, memory. The entanglement conjecture, the
audit bound, the depletion condition and the scope haircut are all tests in
this sense, and one of them fails outright below.

An \emph{illustration} evaluates a closed form or an imposed rule rather than
an equilibrium. Theorem~\ref{thm:pollution}'s distortion $d^{*}$ and
Proposition~\ref{prop:selection}'s authorization rule are computed, not
solved for; ``$d^{*}=0$ iff $s\ge\lambda$'' is then an arithmetic property of
the expression, not a finding. We keep them as consistency checks and label
them as such. What the environment does supply there is the pass-through:
each unit of distortion destroys $0.16$ units of realized quality, which the
closed form does not say.

\subsection{Identification results}\label{sec:id-results}

\begin{table}[t]
\caption{Layer-2 results, reported over the perturbation sweeps rather than
at a single calibration. ``Violations'' counts strict violations of the
stated inequality across the whole sweep. Rows marked \textsc{illustration}
evaluate a closed form or an imposed rule rather than an equilibrium and
cannot fail; see Section~\ref{sec:what-can-fail}. All figures are read from
the committed \texttt{results/*.json}.}
\label{tab:results}
\centering
\begin{tabular}{p{2.35cm} p{2.55cm} p{2.55cm}}
\hline
Target & Test & Result \\
\hline
Prop.~\ref{prop:ceiling}\newline extraction ceiling
& $q_{u}(\varphi)\le q_{u}(h^{t})$ over 11{,}200 (use, budget) pairs,
  in class and out
& 0 violations in class; gap $\to3.1\times10^{-7}$ as $\psi\to1$;
  breached in $68\%$ of out-of-class checks \\[2pt]
Cor.~\ref{cor:staleness}\newline staleness
& personal-component decay vs.\ $\rho^{2k}$
& max error $1.6\times10^{-15}$ over $\rho\in[0.75,0.95]$ \\[2pt]
Rem.~\ref{rem:delta-bridge}\newline $\delta$ is derived
& $\hat\delta$ across use classes $\chi\in[0,1]$
& $8.8\times$ range at $\rho=0.85$; $13.8\times$ at $\rho=0.75$ \\[2pt]
Lem.~\ref{lem:refresh}\newline refresh technology
& fit $Y=\eta_{S}e+\eta_{P}x+\gamma ex$
& $R^{2}=0.82$; $\hat k_{S}/\hat k_{P}=0.73$; $\gamma<0$ at
  $\bar q=0.81$ but $>0$ under saturation \\[2pt]
Conj.\ (Gaussian frontier)
& $\kappa^{\star}$ vs.\ $\chi$, aligned adversary, four fidelity regimes
& \emph{fails} at low fidelity (error $0.38$); error
  $\approx0.36\,(1-\bar q)$; $<2\times10^{-3}$ once $\bar q>0.99$ \\[2pt]
Lem.~\ref{lem:kappa-iota}\newline $\kappa^{\star}\le\iota$
& optimal garbling vs.\ feasible identity replacement
& 0 violations; audit overstates by $65\%$--$91\%$, rising in $d_{p}$ \\[2pt]
Thm.~\ref{thm:headline}\newline depletion \eqref{eq:depletion}
& sign of $d\bar Q/dn$ at fixed $(\alpha,\tau)$, derived demand
& $86.4\%$ mean agreement over 36 calibrations, $83.9\%$ ungated;
  $51\%$ worst cell \\[2pt]
Thm.~\ref{thm:secondbest}(ii)\newline coverage floor
& $\operatorname{sign}(s^{*}-\lambda)$ vs.\ $k_{S}/k_{P}\gtrless C/\Omega$
& 10 of 12 cells at the structural $k$-ratio; 8 of 12 at the
  \emph{estimated} one \\[2pt]
Thm.~\ref{thm:pollution}(i)\newline pollution
& \textsc{illustration}: $d^{*}$ evaluated from the closed form, not
  from an equilibrium
& consistency check only; measured quality pass-through $0.16$ per unit
  of distortion \\[2pt]
Prop.~\ref{prop:selection}\newline selection
& \textsc{illustration}: $\mathrm{corr}(\tilde q,\text{authorized})$
  under an imposed authorization rule, flat vs.\ indexed
& flat: $-0.45$ to $-0.24$; indexed: $|r|\le0.03$ throughout \\
\hline
\end{tabular}
\end{table}

\begin{figure*}[t]
\centering
\includegraphics[width=\textwidth]{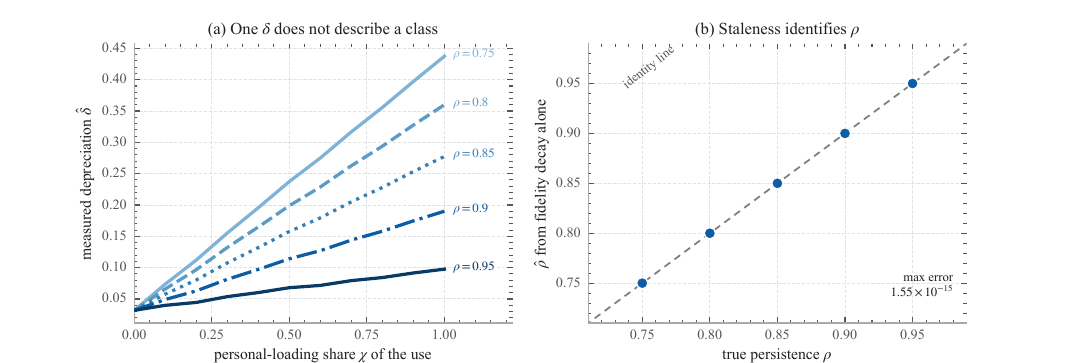}
\caption{Staleness. (a) Measured one-period depreciation against the use's
personal-loading share $\chi$, for five values of the state persistence
$\rho$: a single $\delta$ describes a memory class only if every buyer loads
identically on the personal component, and the spread widens as drift rises.
(b) Persistence recovered from fidelity decay alone, against its true value;
the maximum error over the five calibrations is $1.4\times10^{-15}$.}
\label{fig:staleness}
\end{figure*}

\begin{figure*}[t]
\centering
\includegraphics[width=\textwidth]{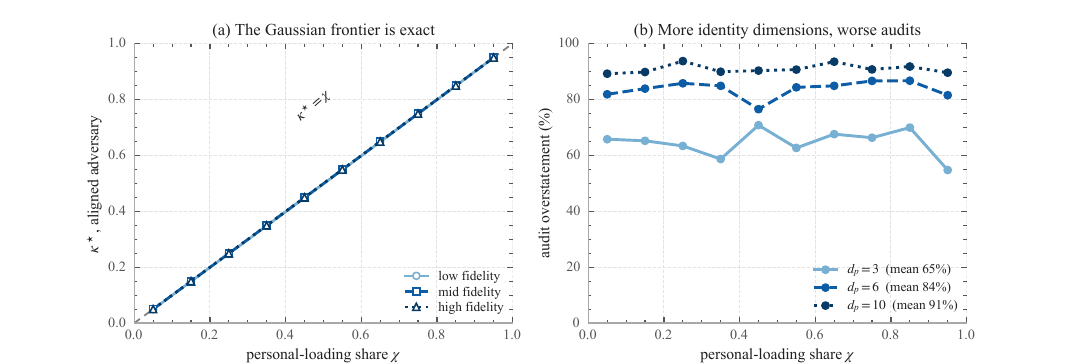}
\caption{Entanglement. (a) Against an adversary aligned with the buyer's
personal loading, $\kappa^{\star}=\chi$ on the diagonal in every fidelity
regime: the three curves coincide, which is the result. (b) The gap
between an identity-replacement audit and the true entanglement, as a
percentage of the audit, against $\chi$ and across the dimension $d_{p}$ of
the personal block. The overstatement rises with $d_{p}$ because a targeted
garbling has more directions to preserve while naive erasure destroys them
all.}
\label{fig:entanglement}
\end{figure*}

\begin{figure*}[t]
\centering
\includegraphics[width=\textwidth]{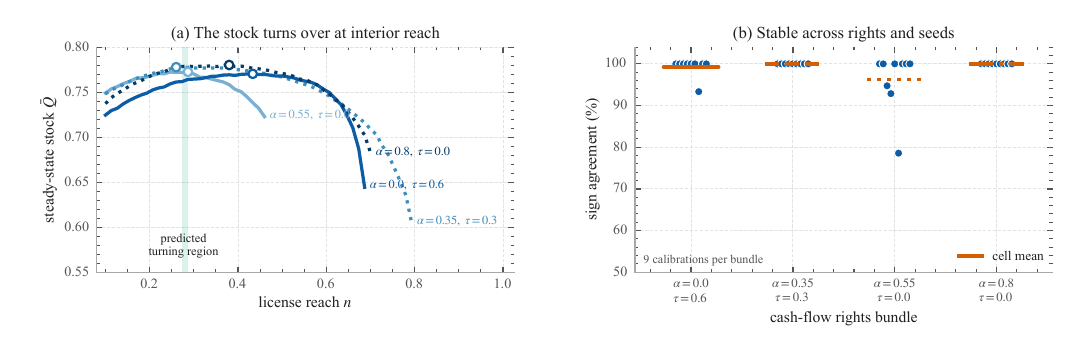}
\caption{Depletion. (a) Steady-state stock against license reach at four
fixed cash-flow bundles, one representative calibration: the stock rises,
turns over at an interior reach, and falls; that is
Theorem~\ref{thm:headline}'s mechanism, with the region where the predicted
derivative changes sign shaded. (b) Sign agreement between the predicted and
realized derivative, over the interior region where the prediction is
non-vacuous, for each bundle across nine calibrations (three dispersions
$\times$ three seeds).}
\label{fig:depletion}
\end{figure*}

\begin{figure*}[t]
\centering
\includegraphics[width=\textwidth]{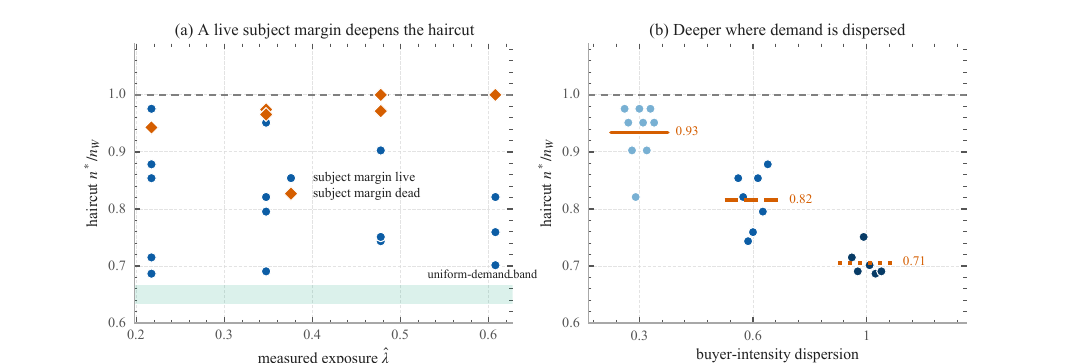}
\caption{The scope haircut. (a) $n^{*}/n_{W}$ against measured exposure,
separating calibrations in which the subject's refresh margin is live at the
optimum from those in which it is not; the uniform-demand band of
Theorem~\ref{thm:secondbest}(iv) is shaded. (b) The same haircuts grouped by
buyer-intensity dispersion, live margins only: the haircut deepens as demand
becomes more disperse, and the uniform band is approached only at the
dispersed end.}
\label{fig:haircut}
\end{figure*}

\begin{figure*}[t]
\centering
\includegraphics[width=\textwidth]{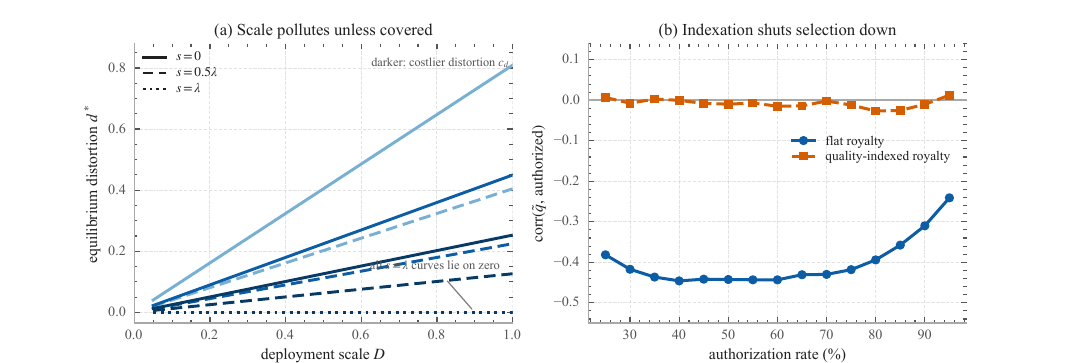}
\caption{Pollution and selection. (a) Equilibrium distortion against
deployment scale, for three coverage levels and three distortion costs:
distortion grows with scale whenever coverage falls short of exposure and is
identically zero once it does not. (b) The correlation between fidelity and
authorization, against the authorization rate, under a flat and under a
quality-indexed royalty.}
\label{fig:pollution}
\end{figure*}

Table~\ref{tab:results} collects the results;
Figures~\ref{fig:staleness}--\ref{fig:pollution} plot each with its
perturbation family.

Proposition~\ref{prop:ceiling} survives 11{,}200 checks with no violation and
binds with equality in the sufficient-statistic limit, as stated. Corollary~
\ref{cor:staleness} holds to machine precision, and the persistence parameter
is recovered exactly from fidelity decay alone ($\hat\rho=0.85000$ against
a truth of $0.85$), which is the identification argument behind the
class-specific staleness estimates Section~\ref{sec:governance} asks
regulators to collect, and Figure~\ref{fig:staleness}(b) shows the recovery
is exact across the whole range of persistence we tried, not at one
calibration. Figure~\ref{fig:staleness}(a) makes
Remark~\ref{rem:delta-bridge} quantitative: at $\rho=0.85$ measured
one-period depreciation runs from $0.032$ for a use loading entirely on the
common component to $0.278$ for a purely personal one, and the spread widens
as drift rises: a $13.8\times$ range at $\rho=0.75$. A single $\delta$ for
``memory'' is therefore not a summary of anything; depreciation must be
estimated per use class, and the same artifact is fresh for one buyer and
stale for another.

Two findings amend the paper. First, the Gaussian entanglement conjecture is
\emph{not} an identity: it is a high-fidelity limit, and it fails cleanly
when the memory is far from the information ceiling
(Figure~\ref{fig:entanglement}(a)). Against an adversary aligned to the
buyer's personal loading, $\kappa^{\star}$ approaches $\chi$ from below as
the artifact becomes informative, with a maximum error of $0.38$ at
$\bar q=0.34$, $0.16$ at $\bar q=0.67$, $9.0\times10^{-3}$ at $\bar q=0.98$
and $6.0\times10^{-4}$ at $\bar q=1.00$. The convergence is first order in
the fidelity gap: a log--log fit of $|\kappa^{\star}-\chi|$ on $1-\bar q$
returns a slope of $1.02$, so the defensible statement is
\begin{equation}\label{eq:conj1-limit}
\kappa^{\star}_{\text{aligned}}=\chi+O(1-\bar q),
\qquad |\kappa^{\star}-\chi|\approx0.36\,(1-\bar q).
\end{equation}
This replaces a claim we made in an earlier draft, and the way it was
produced is worth stating plainly, because it is the kind of error a
synthetic testbed invites. The first version of this experiment separated
``fidelity regimes'' by the extraction budget alone while holding the
information environment at its default, where the stationary stock sits above
$q=0.99$ regardless: all three nominal regimes returned
$\bar q\in[0.990,0.9997]$, the error was $2.2\times10^{-3}$ everywhere, and
the conjecture looked exact in every regime; it was never tested outside
saturation. Separating the regimes by the information environment instead
(the number of interactions and the observation noise, which is what
actually moves the stock) reproduces the failure at low fidelity in two
independently implemented pipelines. The conjecture should be stated as a
proposition qualified by \emph{both} the adversary's alignment and an
information-saturation limit, with \eqref{eq:conj1-limit} as its rate. The
practical reading is more useful than the identity would have been: the
Gaussian frontier is a good approximation exactly for the mature, heavily
refreshed artifacts that are worth trading, and a poor one for thin memory,
where an aligned adversary extracts materially less than the personal
loading share suggests.

Second, and more consequentially, $\kappa^{\star}$ is sharply
\emph{adversary-indexed}. Lemma~\ref{lem:kappa-iota} holds (no violations
anywhere in the sweep), but its bound is slack by an order of magnitude for
any adversary that targets a specific attribute rather than the whole
identity: identity-replacement audits overstate entanglement by $65\%$ to
$91\%$, because erasing the personal coordinates destroys the entire personal
block while the optimal zero-exposure garbling need only remove one direction
of it. The comparison is only meaningful against a \emph{feasible} erasure.
Withholding the personal coordinates leaves the Schur complement of the
personal block, not the raw non-personal sub-matrix; zeroing rows and columns
instead retains more information than any garbling can deliver, and produces
spurious violations of Lemma~\ref{lem:kappa-iota}: $0.117$ in these
regimes, against $0$ once the erasure is made feasible. (For diagonal
information matrices the two coincide, which is why nothing computed from the
closed-form stationary stock is affected.) The perturbation identifies the
mechanism rather than merely sizing the gap
(Figure~\ref{fig:entanglement}(b)): the overstatement rises monotonically in
the dimension of the personal block ($65\%$ at $d_{p}=3$, $84\%$ at
$d_{p}=6$, $91\%$ at $d_{p}=10$), which is what the garbling program
predicts, since each additional identity dimension is one more direction a
targeted release can keep. Richer personal state makes naive erasure worse,
not safer. Remark~\ref{rem:iota-upper}'s caution that audits are an upper
bound is correct but far too mild. The operational implication is that
$\kappa^{\star}$ must be reported against a \emph{named adversary class};
an unqualified $\kappa^{\star}$ near $1$, read off an identity-replacement
audit, will justify institutional pessimism (``no privacy technology can
improve the frontier'') in classes where a targeted garbling would in fact
retain most of the buyer value.

A third finding concerns the refresh technology itself, and it is a finding
about the \emph{estimator} rather than about the technology. At the
calibration where the refresh margin is interior ($\bar q=0.81$) the fitted
interaction term $\gamma$ is \emph{negative}, about $12\%$ of the linear
terms' magnitude, because the true technology is concave in the
multiplicative information index $\psi(x)F(e)$, which a
linear-plus-interaction specification reads as mild substitutability.
Assumption~\ref{as:forms}'s restriction $\gamma\ge0$ is therefore not
innocuous as a description, though it is harmless as a benchmark: every
headline result is stated for the additive case $\gamma=0$, which the
measurement supports as the right neutral specification. The assumption
should be relaxed to allow either sign, and the reason is sharper than a
single sign. Re-run at the information-rich calibration, where steady-state
fidelity exceeds $0.99$, the same regression attenuates $\hat\eta_{S}$ by a
factor of $48$ \emph{and $\gamma$ flips positive}. The sign of the
interaction term is not a property of the refresh technology at all; it is a
property of where on the saturation curve the economy is sitting when the
analyst measures it.

The same attenuation contaminates the productivity ratio the paper's
institutional prescriptions turn on. The environment implements a structural
ratio $(a_{S}^{2}/c_{S})/(a_{P}^{2}/c_{P})$; Lemma~\ref{lem:refresh}'s
regression recovers $\hat\eta^{2}/c$, and these are not the same number.
Across the cost sweep the estimated ratio is uniformly $3.80\times$ below the
structural one ($0.24$ against $0.93$, $0.73$ against $2.78$, $2.44$
against $9.26$), with a strikingly constant attenuation factor, which is
itself a usable diagnostic. Because the ratio enters
Theorem~\ref{thm:secondbest}(ii) through a comparison with $C/\Omega$, the
bias is not innocuous: the theorem's prescription is recovered in 10 of 12
cells at the structural ratio but only 8 of 12 at the estimated one, and
every one of the extra failures is a case where the estimate falls below a
threshold the truth clears. This qualifies the section's own claim: what the
testbed validates is that the paper's estimators are consistent \emph{at an
interior operating point}; it also shows that they are biased toward
understating the subject's refresh productivity as fidelity rises, which
would bias real institutional advice toward too little coverage. Any
empirical estimate of $(k_{S},k_{P})$ must be reported together with the
fidelity level at which it was taken, or a saturated system will look like
one with no refresh margin at all.

\subsection{Equilibrium results}\label{sec:eq-results}

The depletion condition \eqref{eq:depletion} is the paper's headline and the
test most able to fail, since demand here is not the uniform benchmark's.
Theorem~\ref{thm:headline} requires that scope be swept at fixed cash-flow
rights $(\alpha,\tau)$, and Proposition~\ref{prop:collapse}(ii) makes this
essential, since fixed rights put coverage on a declining path. Sweeping
scope that way reproduces the mechanism directly
(Figure~\ref{fig:depletion}(a)): the steady-state stock rises with reach,
turns over at an interior scope, and falls, with the turning point moving
across rights bundles. Over 36 calibrations (four bundles $\times$ three
dispersions $\times$ three seeds) the predicted and realized signs of
$d\bar Q/dn$ agree on $86.4\%$ of the interior points where the predicted
derivative is non-negligible, and on $83.9\%$ unconditionally, with a worst
cell of $51\%$ (Figure~\ref{fig:depletion}(b)). The gate matters and we state
it: ``non-negligible'' means a predicted derivative at least $5\%$ of the
cell's largest, which retains $94\%$ of the live points. An earlier version
gated on the upper half of $|\text{pred}|$ within each cell, which is a rank
rule (it discards half of every cell whatever the magnitudes are), and
returned $98.9\%$. That figure was an artifact of the gate, not a property of
the mechanism, and the honest number is the low-to-mid eighties either way.
The condition survives a demand curve it was not derived for, and survives it
at every seed, but it is a mechanism that predicts the sign of a derivative
correctly five times in six, not nineteen times in twenty.

Theorem~\ref{thm:secondbest}(ii)'s coverage floor is correct in 10 of 12
cells of a $\lambda\times(k_{S}/k_{P})$ sweep when the productivity ratio is
the structural one, and in 8 of 12 when it is the ratio an analyst would
estimate; the difference is the attenuation discussed above. Both failures in
the structural case occur where $C/\Omega$ is near or above $1$, i.e.\ where
exposure consumes most of the social stake and the comparison is delicate,
which is a boundary the theorem should state.

Theorem~\ref{thm:secondbest}(iv)'s scope haircut requires the most
substantial amendment. Under derived demand the haircut is real but its
magnitude is not the uniform band: $n^{*}/n_{W}=0.83$ on average over 36
calibrations, ranging $0.69$--$0.98$. Its \emph{existence}, however, is
sharply conditional in a way the closed form does not say.
Figure~\ref{fig:haircut}(a) shows the haircut appearing exactly when the
subject's refresh margin is live, and vanishing (mean $0.99$, and exactly
$1.000$ in nine of fourteen such cells) whenever the optimum leaves
$w_{S}\le0$. Because the amendment turns on a binary classification, it is
worth saying that the classification is not delicate: the split is structural
($w_{S}>0$ at the optimum) rather than a threshold on a discretized effort
grid, and every cell classified as live sits at least $29$ grid steps (an
effort of $0.22$) away from the corner. Nothing here is knife-edge in the
argmax. This is the correct statement of the result and it is stronger
than the band: \emph{the second best withholds reach only to protect a
refresh margin that exists}, so the haircut is not a universal constant but
the shadow price of a live subject stake. The perturbation also says what the
magnitude depends on (Figure~\ref{fig:haircut}(b)): the haircut deepens
monotonically in buyer-intensity dispersion ($0.93$, $0.82$, $0.71$ across
the three dispersion levels), and the uniform-demand band is approached
only at the dispersed end, which is the sense in which
$[0.63,0.67]$ is a special case rather than a constant. Where the
platform is the overwhelmingly dominant refresher, the static and dynamic
scope rules coincide, and the case for restricting reach on depletion grounds
disappears with the margin it was protecting. Theorem~\ref{thm:secondbest}(iv)
should be restated directionally, with $[0.63,0.67]$ presented as the
uniform-demand special case it is.

Finally, Theorem~\ref{thm:pollution}(i) and
Proposition~\ref{prop:selection} are illustrated rather than tested, and we
mark them so (Figure~\ref{fig:pollution}). The distortion $d^{*}$ is
evaluated from the theorem's own closed form rather than solved for as an
equilibrium of the subject's distorted problem, so ``distortion is zero
exactly when coverage reaches the floor, and rises linearly in deployment
scale otherwise'' is a property of that expression and holds at every
distortion cost by construction. What the environment adds is the
consequence: feeding $d^{*}$ back through the synthetic economy, each unit of
distortion costs $0.16$ units of realized quality, so the pollution channel
is economically material and not merely formal. Solving the distorted
equilibrium properly is the obvious next iteration of the testbed. The
selection direction
flips as predicted with the royalty structure and stays flipped across the
whole range of authorization rates: under a flat royalty the correlation
between fidelity and authorization runs from $-0.45$ to $-0.24$, while under
quality-indexed royalties it never leaves $\pm0.03$.
Proposition~\ref{prop:selection}(iii) is thus not a knife-edge: indexation
shuts selection down cleanly, whatever the royalty level.

\subsection{Layer 3: designs for platforms}\label{sec:prereg}

Layer 2 validates estimators; it cannot establish external magnitudes. Three
designs transfer directly to a platform holding real interaction data, and
each is stated so that a null result is interpretable. (i) \emph{Refresh
decomposition}: randomize prompts for subject feedback (corrections,
confirmations, preference edits) against engineering interventions
(consolidation cadence, retrieval budget) in a factorial design; regress
held-out task performance on both arms to recover $(\hat\eta_{S},
\hat\eta_{P},\hat\gamma)$ and hence $k_{S}/k_{P}$, reporting the fidelity
level at which the estimate is taken. (ii) \emph{Scope randomization}:
randomize license reach $n$ across matched memory cohorts at fixed
$(\alpha,\tau)$ and track subsequent refresh flow, testing
\eqref{eq:depletion} directly; the design must hold cash-flow rights fixed
rather than coverage, or the mechanism is assumed away. (iii)
\emph{Royalty-structure randomization}: assign flat versus quality-indexed
royalties across subject cohorts and compare the fidelity composition of the
authorized pools, testing Proposition~\ref{prop:selection}. For
\eqref{eq:wedges} the three wedges are separately measurable: relational
density from within-class predictive leakage onto non-listing peers,
appropriability from the coverage share, and thickness from the price
response to market size. Theorem~\ref{thm:formation}'s direction is thus an
empirical comparison, not a presumption. Each design should pre-register the
estimand and the sign that would falsify the corresponding result.

\subsection{Threats, scope, and replication}\label{sec:threats}

The Gaussian-linear environment buys exactness at the cost of realism: real
memory is not a belief over a linear-Gaussian state, real extraction is not a
precision-scaling operator, and real adversaries are not linear functionals.
What Layer 2 establishes is therefore internal: that the paper's estimators
recover what they claim to recover when ground truth exists, and that its
theorems are not artifacts of the uniform-demand parametrization. The
perturbation sweeps let us say which is which. Four conclusions are stable
across every parameter we varied and are properties of the class of
environments rather than of this one: the extraction ceiling (a
data-processing inequality), the $\rho^{2k}$ staleness law (any AR(1)
state), the sign of the depletion condition, and the selection reversal.
Three are directional but not numerical: the haircut always exists when the
subject margin is live and always vanishes when it is not, but its depth
moves with demand dispersion; the audit overstatement is always large but
runs from $65\%$ to $91\%$ with the dimension of identity; and $\delta$'s
spread across use classes is always wide but scales with drift. Point
magnitudes should be read as demonstrations that these quantities are large
enough to matter, not as estimates of their real-world values.

One conclusion is explicitly \emph{not} stable, and we flag it because the
instability is the finding: the Gaussian entanglement frontier holds only as
the memory approaches the information ceiling, with an error first order in
the fidelity gap \eqref{eq:conj1-limit}. Any statement of that result that
does not carry a fidelity qualifier is false in thin-memory regimes. The
episode also carries a methodological point that generalizes past this paper.
Two of the errors we found in our own first pass had the same shape: a
``regime'' sweep that varied a parameter which does not in fact move the
quantity being regimed, and a sign test gated by a rank rule rather than an
absolute threshold. Both made a result look stronger than it was, and neither
was visible from the output alone, but only from asking what the experiment
could have returned had the theory been false. A synthetic testbed makes
that question answerable, and Section~\ref{sec:what-can-fail} exists so that
a reader can check it row by row rather than taking our word for it. A
companion exercise replicating the identification tests on LLM agents with
genuine memory pipelines is the natural external-validity check; its design
is fixed by the interfaces used here.

All experiments, the environment, and the figure code are in the replication
package; every number in Table~\ref{tab:results} and every panel of
Figures~\ref{fig:regimes}--\ref{fig:pollution} is regenerated by three
scripts at a fixed seed, with all reported figures read from the committed
result files rather than transcribed. The full sweep runs in about five
minutes on one core. The calibration is disclosed rather than implicit: the
equilibrium and robustness experiments use a hand-selected operating point
($m=3$, $\sigma_{\nu0}=2.4$, $c_{S}=c_{P}=0.02$) chosen so that both efforts
are interior, since at unit costs every equilibrium effort corners at zero
and every equilibrium test is vacuous. One caveat for replicators: the
welfare optimum is an argmax over segments that are locally flat, so
platform-level differences of order $10^{-5}$ in $\bar Q$ can move summary
figures by one to three percentage points, so compare with a tolerance
rather than bitwise.

\section{Discussion and Limitations}\label{sec:limits}

The identification strategy that gives this paper its force is also the
source of most of its limits. By granting perfect provenance
(Assumption~\ref{as:provenance}) and complete information
(Assumption~\ref{as:completeinfo}) we bought the right to say that every
failure here survives the success of the infrastructure now being built. What
we gave up is any claim about the failures that arise \emph{because} that
infrastructure is imperfect: misattributed lineage, forged effort
attestations, buyers who cannot evaluate what they are buying. Those are real
and, on current evidence, quantitatively large. They are simply not what this
paper is about, and a reader should not treat our silence on them as a
judgment that they are second order. The same holds for the direction of our
conclusions: because the benchmark is generous to the market, our failure
results are, if anything, conservative, while our efficiency results are not.

The most consequential simplification is the scalar $\lambda$. Compressing
exposure into one number that enters utility linearly makes the rights
problem tractable and makes the ought--is comparison of
Proposition~\ref{prop:corners} reduce to two measurable quantities. It also
assumes what a great deal of the surrounding normative literature denies:
that the harm of being represented, deployed, and inferred from is the kind
of thing that trades against money at some rate. Dignitary harm, chilling
effects on what a person is willing to tell a system that remembers, and the
distinct injury of being \emph{accurately} known by a party one did not
choose are not obviously commensurable with a royalty. We have handled this
inside the model in the only way a transferable-utility benchmark permits:
inalienable uses enter as hard constraints rather than as priced quantities,
and Remark~\ref{rem:normative} flags the one place where pure efficiency
logic recommends tolerating subject distortion as a result to be scrutinized
rather than adopted. That is a containment strategy, not a solution. A model
in which some uses are lexically prior to compensation would change the
second-best bundle of Theorem~\ref{thm:secondbest}, and we do not know in
which direction.

Three structural restrictions bound the environment. First, there is one
platform: the portability results (Theorem~\ref{thm:rights}(ii) and
Corollary~\ref{cor:selfperp}) say that a monopolist will not supply $\phi$
and $\zeta$ voluntarily, which is a statement about a monopolist. Whether
competition between platforms substitutes for a portability mandate is
genuinely open, and it is the extension we would run first: the na\"ive
argument that competition forces portability runs into the observation that
low $\phi$ raises every incumbent's stake symmetrically, so the incentive to
withhold it is common rather than strategic. Second, the ``platform'' is a
single investor, whereas in practice the engineering margin is supplied by a
model provider, an application developer, and an infrastructure operator
whose contracts with each other are themselves incomplete;
Theorem~\ref{thm:rights} is a two-investor result and the budget-balance
obstruction only sharpens with more parties, but the \emph{allocation}
question does not have an obvious extension. Third, memory has one subject:
Definition~\ref{def:artifact}'s person-linkage condition is stated for a
single individual, and Section~\ref{sec:model} excludes multi-subject
artifacts by construction. This is the exclusion we are least comfortable
with, because most real interaction history describes more than one person:
a conversation, a household, a clinical encounter. Theorem~\ref{thm:formation}
supplies the machinery for the externality (relational leakage onto
non-listing peers is precisely the third wedge) but not the rights
allocation, and no single subject's consent can authorize an artifact whose
subject is plural.

Two modelling choices are assumptions we would like to see tested rather than
limitations we can bound. We assume exposure rises with fidelity: an accurate
memory is a more dangerous one. Remark~\ref{rem:misrep} shows that the
opposite case, harm from \emph{mis}representation, adds a term that
pushes the subject's refresh incentive up rather than down, which would
weaken the depletion mechanism without reversing it, but the two harms
plainly coexist and their relative magnitude is an empirical matter that our
sign restriction settles by assumption. The second-best bundle of
Section~\ref{sec:optimal} is also computed under commitment
(Assumption~\ref{as:commitment}). Without it the coverage floor is
renegotiation-proof only in the region where Theorem~\ref{thm:secondbest}(ii)
makes it self-enforcing, which is exactly the region where the subject is a
productive refresher; the commitment assumption is therefore doing real work
outside that region, and the delegation results of
Section~\ref{sec:delegation} are a partial rather than a complete substitute
for it.

Finally, three propositions remain conjectural in different senses. The
Gaussian entanglement frontier is now understood, not assumed: the testbed
established that $\kappa^{\star}=\chi$ is an information-saturation limit
with error first order in the fidelity gap \eqref{eq:conj1-limit}. What
remains open is whether that rate survives outside the Gaussian-linear
environment, where ``one direction of the personal block'' has no clean
meaning; we would expect the qualitative conclusion (targeted garbling
dominates identity replacement, increasingly so in the dimension of identity)
to be robust, and the constant not to be. Sophistication selection, the
conjecture that the subjects most able to evaluate a rights bundle are also
those with the most valuable memory, would if true make
Proposition~\ref{prop:selection}'s composition effect worse than we report
and is straightforwardly testable on any live marketplace. And the no-fusion
rule of Section~\ref{sec:nofusion} establishes that provenance, performance,
and permission must not be collapsed into one score; it does not establish
that three credentials suffice. Lemma~\ref{lem:no-score} rules out one
dimension, not $k$, and we do not have a lower bound on credential
dimensionality. Section~\ref{sec:threats} states the separate question of
what the synthetic testbed can and cannot license; the LLM-agent replication
described there is the natural next external-validity check.

It is worth naming what would overturn the paper rather than merely qualify
it. The entire ought--is apparatus rests on the subject's refresh margin
being live: on $k_{S}$ being non-trivial in real systems. If the empirical
answer is that engineering effort dominates and users contribute almost
nothing to the continued validity of their own memory, then
Theorem~\ref{thm:headline}'s depletion channel is vacuous, the scope haircut
vanishes with the margin it protects (which is exactly what our own
dead-margin cells show), and platform ownership becomes not merely the de
facto arrangement but the efficient one. We do not believe that is the world
we are in, and Section~\ref{sec:prereg}'s first design is built to find out.
But it is the finding that would falsify us, and it is measurable now.

\section{Conclusion}\label{sec:conclusion}

Provenance is necessary infrastructure, but it is not a complete market
institution. Under \emph{perfect} provenance and complete information (the
world the builders of memory marketplaces are trying to reach, granted here
in full), the market for personalized AI memory still fails, and it fails in
structured, identifiable, correctable ways. Verified production effort does
not imply decision value. No budget-balanced assignment of cash-flow rights
makes both co-producers residual claimants. Widening reach can raise access
today and shrink the stock tomorrow. Audits that replace a person's identity
overstate how entangled the artifact really is, by more the richer that
identity becomes. None of these is a verification failure, and none is
repaired by better attestation.

Behind all of them is one mechanism: use can be non-rival while refresh is
relationally supplied. Copying a memory costs nothing; keeping it true
requires the continued participation of the person it describes, who bears
exposure from every deployment. An asset with a nonrival use and a rival
maintenance technology does not behave like data, and the institutions that
work for data do not transfer to it.

That mechanism also resolves the public argument this paper took as its
organizing question. Memory \emph{ought} to belong to its subject; memory
\emph{is} held by platforms. Both positions are right about what they see and
both are answers to the wrong question, because they are twin corners of one
rights space and each extinguishes a different side of the asset's
production: whoever does not own supplies no refresh. The efficient
arrangement is interior, the de facto corner is self-perpetuating through
precisely the parameters (portability and salience) that determine whether
the normative corner is even reachable, and reform that renames the owner
while leaving those parameters in place changes neither scope nor refresh.
The regulatory question is not who owns the memory but what claim structure
the subject holds, and whether the conditions for exercising it exist.

What follows from that is buildable now: credentials that keep provenance,
performance, and permission separate rather than fusing them into a single
trust score; compensation indexed to measured fidelity rather than flat,
which is what shuts down adverse selection into authorization; licenses
scoped by deployment class rather than sold outright; renewable rather than
perpetual claims, because an asset that must be maintained cannot be
alienated once; staleness reported per use class, since the same artifact is
fresh for one buyer and stale for another; entanglement reported against a
named adversary rather than as an unqualified number; and portability and
salience treated as mandates rather than as features an incumbent might
choose to supply. Each of these carries a theorem pointer in
Section~\ref{sec:governance}, and each is cheaper to require before the asset
class matures than after.

That timing is the practical point: in the 23andMe bankruptcy, deletion
rights over genetic data derived from millions of people were negotiated by
state attorneys general only after the asset had entered the estate.
Personalized memory is on the same trajectory and is harder in two ways: it
is continuously re-created rather than collected once, and the person it
describes keeps producing it. The institutions are cheap to specify today and
expensive to retrofit; the window is open now.

\appendices

\section{Proof of Theorem~\ref{thm:rights}}\label{app:rights}

\emph{Part (i).} Work in the additive benchmark of
Lemma~\ref{lem:refresh}; the argument for $\gamma\neq0$ is identical because
the planner's problem replaces both stakes by $\Omega$ in the same system.
Given stakes $(w_{S},w_{P})$, the effort subgame has first-order conditions
$c_{S}e=w_{S}\eta_{S}$ and $c_{P}x=w_{P}\eta_{P}$, so
$e^{E}=\eta_{S}w_{S}/c_{S}$ and $x^{E}=\eta_{P}w_{P}/c_{P}$. The planner
maximizes $\Omega Y-c_{S}e^{2}/2-c_{P}x^{2}/2$, whose first-order conditions
are $c_{S}e=\Omega\eta_{S}$ and $c_{P}x=\Omega\eta_{P}$. Since
$\eta_{S},\eta_{P}>0$ and both problems are strictly concave in own effort,
$e^{E}=e^{FB}$ if and only if $w_{S}=\Omega$, and $x^{E}=x^{FB}$ if and only
if $w_{P}=\Omega$. First-best refresh on both margins therefore requires
$w_{S}=w_{P}=\Omega$, hence $w_{S}+w_{P}=2\Omega$.

Ex post budget balance with no external subsidy restricts the bundle to
allocate the realized private cash flow and the monetizable exposure, so by
\eqref{eq:wS}--\eqref{eq:Omega} the stakes satisfy
$w_{S}+w_{P}=T=\Omega-C$. Because buyer surplus is nonnegative, $C\ge0$ and
therefore $T\le\Omega<2\Omega$ whenever $\Omega>0$, which the assumption of a
strictly positive first best guarantees. The two requirements are
incompatible, and the conclusion does not depend on how the transfer $F$ is
set, since $F$ moves rents and not marginal claims. Perfect price
discrimination is the tightest case: it sets $C=0$ and still leaves
$w_{S}+w_{P}=\Omega<2\Omega$. Any scalar owner is the special case
$w_{S}\in\{0,T\}$, so ownership assignment cannot restore first best either.
The result is Holmstr\"om-type team moral hazard with one memory-specific
amplification: the planner's weight $\Omega$ contains buyer surplus that no
budget-balanced private bundle appropriates, so the wedge is at least $C$
even before any incentive conflict between the parties.

\emph{Part (ii).} Fix $(\alpha,\tau)$ and write
$\widehat Q(n)=(1-\delta)Q_{t}+Y^{E}(n)$ with
$Y^{E}=k_{S}w_{S}(n)+k_{P}w_{P}(n)$. Suppose the platform names scope at
stage 2, anticipating stage-3 play. Its objective is
$\pi_{P}(n)=w_{P}(n)\widehat Q(n)-c_{P}(x^{E})^{2}/2$ with
$x^{E}=\eta_{P}w_{P}/c_{P}$. Differentiating,
\begin{equation*}
\begin{split}
\pi_{P}'&=w_{P}'\widehat Q
+w_{P}\bigl(k_{S}w_{S}'+k_{P}w_{P}'\bigr)
-c_{P}x^{E}\frac{dx^{E}}{dn}\\
&=w_{P}'\widehat Q+k_{S}w_{P}w_{S}',
\end{split}
\end{equation*}
because $c_{P}x^{E}(dx^{E}/dn)=k_{P}w_{P}w_{P}'$ cancels the own-effort term
in $\widehat Q'$. This is an envelope argument: the platform's own effort is
already optimal, so only the \emph{cross} effect of scope on the subject's
effort survives. The symmetric computation for a subject-controlled scope
gives $\pi_{S}'=w_{S}'\widehat Q+k_{P}w_{S}w_{P}'$. An interior scope
therefore solves
\begin{equation}\label{eq:controlfocs}
w_{P}'\widehat Q+k_{S}w_{P}w_{S}'=0
\quad\text{or}\quad
w_{S}'\widehat Q+k_{P}w_{S}w_{P}'=0
\end{equation}
according to who controls it. The two conditions coincide at a common $n$
only if
$\widehat Q\,(w_{P}'-w_{S}')=k_{P}w_{S}w_{P}'-k_{S}w_{P}w_{S}'$, a single
equation in the primitives, so they differ on a set of full measure in
$(\alpha,\tau,\lambda,k_{S},k_{P})$. The corner cases $\alpha\in\{0,1\}$ are
handled by evaluating \eqref{eq:controlfocs} with the corresponding stake set
to its bound, which removes one term without equating the two conditions.

For the sharper statement, specialize to the fixed-stock benchmark of
Proposition~\ref{prop:scope}, where $\widehat Q$ is a constant and the two
conditions reduce to $w_{P}'=0$ and $w_{S}'=0$. With $\tau=0$ and
$0<\alpha<1$ these give $n_{P}=1/2$ and
$n_{S}=[1-\lambda/\alpha]_{[0,1]}/2$. Varying $\alpha$ moves $n_{S}$ but
leaves $n_{P}$ at $1/2$, while changing the identity of the controller moves
scope at fixed $\alpha$. Cash-flow rights and scope control therefore act on
different margins and neither substitutes for the other.

Two scope conditions of the theorem's proof are worth recording as limits of
its reach. It excludes only budget-balanced implementations that operate
through marginal claims on realized refresh. Verifiable effort, an external
subsidy financed outside the pair, or a performance contract written on a
signal other than $Y$ all enlarge the domain and can restore first best;
none of them is available under Assumption~\ref{as:provenance} alone, which
verifies lineage rather than effort.

\section{Proof of Theorem~\ref{thm:formation}(ii)--(iv)}\label{app:formation}

Throughout, $G$ denotes the participation mass at the prevailing threshold,
$g$ its density, and $P(\hat\lambda)$ the marginal relational harm defined in
the statement, normalized so that
$\tfrac{d}{d\hat\lambda}\bigl[\Lambda(\varrho)G(\hat\lambda)
\bar\lambda^{\mathrm{nl}}(\hat\lambda)\tilde qD\bigr]
=g(\hat\lambda)\,\Lambda(\varrho)P(\hat\lambda)$.

\emph{Part (ii).} Differentiating the welfare functional in part (iv) of the
statement,
\begin{equation}\label{eq:Wprime}
W'(\hat\lambda)=g(\hat\lambda)\Bigl[v\Theta(G)-\hat\lambda
+v\Theta'(G)G-\Lambda(\varrho)P\Bigr],
\end{equation}
where the third term collects the effect of the marginal entrant on every
inframarginal participant's thickness and the fourth collects the relational
harm she imposes on non-listing peers. Setting the bracket to zero gives
\eqref{eq:planner-threshold}. The private cutoff instead equates the
lister's own compensated benefit to her own exposure,
$\hat\lambda=sv\Theta(G)$, which is \eqref{eq:market-threshold}.

To compare the two fixed points rather than the two conditions, define the
excess functions
\begin{equation*}
\Phi^{M}(\hat\lambda):=sv\Theta(G(\hat\lambda))-\hat\lambda,\qquad
\Phi^{W}(\hat\lambda):=W'(\hat\lambda)/g(\hat\lambda),
\end{equation*}
whose zeros are the market and planner thresholds. Subtracting,
\begin{equation}\label{eq:excessgap}
\Phi^{M}(\hat\lambda)-\Phi^{W}(\hat\lambda)
=\Lambda(\varrho)P-(1-s)v\Theta-v\Theta'(G)G
\end{equation}
at every $\hat\lambda$, so the right side is positive exactly when
\eqref{eq:wedges} holds. Assume the regularity condition
\begin{equation}\label{eq:singlecross}
sv\,\Theta'(G(\hat\lambda))\,g(\hat\lambda)<1
\quad\text{for all }\hat\lambda\text{ in the support of }G,
\end{equation}
which is the negation of the multiplicity condition in part (i) and makes
$\Phi^{M}$ strictly decreasing, and assume the analogous bound for
$\Phi^{W}$. Each excess function then crosses zero at most once and from
above, so the pointwise ordering in \eqref{eq:excessgap} transfers to the
fixed points: $\hat\lambda^{*}>\hat\lambda^{W}$ if and only if
\eqref{eq:wedges} holds. Without \eqref{eq:singlecross} the market has
multiple thresholds, and the comparison should be read equilibrium by
equilibrium or after the global-games selection of part (i).

\emph{Part (iii).} By Definition~\ref{def:stock}, thickness enters through
$\Theta=\Psi(M)$ with
$\Psi'(M)=\bar\psi/(M\bar\psi\sigma_{0}^{2}+1)^{2}$, so $\Psi'(M)\to0$ as
$M\to\infty$ and $\Theta$ converges to a finite ceiling
$\Theta_{\infty}$. Hence $v\Theta'(G)G\to0$ along any sequence of thickening
markets, while $(1-s)v\Theta\to(1-s)v\Theta_{\infty}$, which vanishes as
coverage approaches full appropriation. The left side of \eqref{eq:wedges}
does not attenuate: for $\varrho>0$ and non-degenerate
$\bar\lambda^{\mathrm{nl}}$ there is $\underline{P}>0$ with
$\Lambda(\varrho)P\ge\Lambda(\varrho)\underline{P}>0$ uniformly, because the
correlation that generates leakage is a property of the class rather than of
its size. Therefore for every $\varrho>0$ and every $s$ sufficiently close to
one there is a finite thickness beyond which \eqref{eq:wedges} holds, and
every further lister is socially excessive. The naive-participation statement
follows by noting that a subject with salience $\zeta<1$ applies the cutoff
$\zeta\lambda_{i}<\lambda_{i}$, which shifts $\hat\lambda^{*}$ to the right,
while $W$ continues to be evaluated at the true $\lambda_{i}$, so the wedge
in \eqref{eq:excessgap} widens at unchanged welfare weights.

\emph{Part (iv).} Non-monotonicity is immediate from \eqref{eq:Wprime}: near
$\hat\lambda=0$ the bracket is $v\Theta(0)-\Lambda(\varrho)P(0)$, which is
positive whenever thickness value dominates leakage at the bottom of the
exposure distribution, and the bracket is strictly decreasing under the
regularity condition, so $W$ rises and then falls once \eqref{eq:wedges}
binds. For the formation statement, integrate \eqref{eq:Wprime} from the
no-market point, at which $W(0)=0$:
\begin{equation*}
W(\hat\lambda^{*})=\int_{0}^{\hat\lambda^{*}}
\Bigl[v\Theta(G)-\ell+v\Theta'(G)G-\Lambda(\varrho)P\Bigr]g(\ell)\,d\ell .
\end{equation*}
Voluntary formation is welfare-negative precisely when this integral is
negative, which occurs when the bracket is negative over a $g$-weighted
majority of the participating range. Three sufficient configurations follow
directly. First, $\Lambda(\varrho)$ large enough that
$\Lambda(\varrho)P(0)>v\Theta(0)$ makes the bracket negative from the start.
Second, mass of $G$ concentrated at high $\ell$ makes $-\ell$ dominate over
most of the range. Third, a large naive share raises $\hat\lambda^{*}$ into
the region where the bracket has already turned negative. In each case
marginal growth subsidies reduce welfare. The reason the conclusion can
contradict participants' revealed preference is structural rather than
behavioral: relational externalities fall on parties who never chose, and
naive participation is not informed revelation.

\section{Proof of Theorem~\ref{thm:secondbest}(iii)--(iv)}\label{app:secondbest}

Work in the uniform benchmark $R(n)=n(1-n)$, $C(n)=n^{2}/2$,
$B(n)=n-n^{2}/2$ with $u_{S}=u_{P}=0$, and write $m:=1-\lambda$ for the
static welfare scope, so that
\begin{equation}\label{eq:TOm}
T(n)=n(m-n),\qquad \Omega(n)=\tfrac{n(2m-n)}{2},\qquad C=\Omega-T .
\end{equation}
Both are positive on $n\in(0,m)$, which is the active region. By
Theorem~\ref{thm:secondbest}(i) optimal coverage implements
Corollary~\ref{cor:stake}'s profile, so with $k_{S}=a$, $k_{P}=b$ the
concentrated objective is $W(n)=\max_{w\in[0,T]}V(n,w)$ with
\begin{equation*}
V(n,w)=\Omega\bigl[aw+b(T-w)\bigr]-\tfrac{a w^{2}}{2}
-\tfrac{b(T-w)^{2}}{2},
\end{equation*}
which is strictly concave in $w$ with unconstrained maximizer
$w^{\mathrm{unc}}=[(a-b)\Omega+bT]/(a+b)$, so
$w_{S}^{\star}=\operatorname{Proj}_{[0,T]}w^{\mathrm{unc}}$ as claimed. Write
$\kappa:=a/b$. Since $w^{\mathrm{unc}}\ge0$ is equivalent to
$\kappa\ge C/\Omega$ and $w^{\mathrm{unc}}\le T$ to $\kappa\le\Omega/C$, the
stake is interior exactly on
\begin{equation}\label{eq:interiorband}
\frac{n}{2m-n}\;\le\;\kappa\;\le\;\frac{2m-n}{n},
\end{equation}
using $C/\Omega=n/(2m-n)$ from \eqref{eq:TOm}. This is the same inequality as
the self-enforcing floor \eqref{eq:floor}, which is why parts (ii) and (iv)
are two readings of one condition.

\emph{Interior branch.} Substituting $w^{\mathrm{unc}}$ into $V$ and
simplifying,
\begin{equation}\label{eq:Wint}
\begin{split}
W_{\mathrm{int}}(n)=\frac{n^{2}}{8(a+b)}\Bigl[&(a^{2}+b^{2})(2m-n)^{2}\\
&+2ab\bigl(2m^{2}-4mn+n^{2}\bigr)\Bigr],
\end{split}
\end{equation}
a quartic in $n$ vanishing at $n=0$. Differentiating and dividing by the
positive factor $n(a+b)/2$,
\begin{equation}\label{eq:foc-int}
\begin{split}
&W_{\mathrm{int}}'(n)\;\propto\;n^{2}-3mn+2m^{2}\theta(\kappa),\\
&\theta(\kappa):=\frac{\kappa^{2}+\kappa+1}{(\kappa+1)^{2}} .
\end{split}
\end{equation}
The quadratic in \eqref{eq:foc-int} opens upward with roots
$n^{\pm}=\tfrac{m}{2}\bigl(3\pm\sqrt{9-8\theta}\bigr)$. Because
$\theta(\kappa)\in[3/4,1)$ for all $\kappa>0$, with the minimum at
$\kappa=1$, we have $\sqrt{9-8\theta}\in(1,\sqrt3\,]$ and therefore
$n^{+}\ge2m>m$, which is infeasible, while $n^{-}\in[\,(3-\sqrt3)m/2,\,m)$.
Hence $W_{\mathrm{int}}$ is strictly increasing on $(0,n^{-})$ and strictly
decreasing on $(n^{-},m)$, so it is unimodal with the unique interior
maximizer $n^{-}$. Setting $\kappa=1$ gives $\theta=3/4$,
\begin{equation*}
W_{\mathrm{int}}(n)=\tfrac{k}{4}\,n^{2}(m-n)(3m-n),\qquad
n^{-}=\tfrac{3-\sqrt3}{2}\,m,
\end{equation*}
which is part (iii). Optimal coverage follows from \eqref{eq:sstar}: at
$\kappa=1$ the stake is $w_{S}^{\star}=T/2$, so
$s^{*}=\lambda+T/(2n)=\lambda+(m-n)/2$, and evaluating at $n^{-}$ gives
$s^{*}=\lambda+\tfrac{\sqrt3-1}{4}m$. Finally
$m/2<(3-\sqrt3)m/2<m$ places $n^{*}$ strictly between the subject-corner
scope and the static welfare scope.

\emph{Cornered branch.} If the projection binds, one stake is zero and the
other is $T$; write $k_{j}$ for the productive party's coefficient. Then
$Y^{E}=k_{j}T$ and the effort cost is $k_{j}T^{2}/2$, so
\begin{equation*}
W_{\mathrm{cor}}(n)=k_{j}\Bigl[\Omega T-\tfrac{T^{2}}{2}\Bigr]
=\tfrac{k_{j}m}{2}\,n^{2}(m-n),
\end{equation*}
using \eqref{eq:TOm}. Its derivative is proportional to $n(2m-3n)$, so
$W_{\mathrm{cor}}$ is unimodal on $(0,m)$ with maximizer $n=\tfrac{2}{3}m$,
which is part (iv)'s polar case.

\emph{The band is exact and attained.} The two branches are not alternatives
to be compared globally, since \eqref{eq:interiorband} determines which one
is operative at each $n$, and the switch points depend on $\kappa$. Evaluate
\eqref{eq:interiorband} at $n=\tfrac{2}{3}m$: the bounds are $1/2$ and $2$.
Hence for $\kappa\in(1/2,2)$ the stake is interior at the cornered branch's
own maximizer, so the operative solution is the interior one; and for
$\kappa\le1/2$ or $\kappa\ge2$ the projection binds there and the operative
solution is the cornered one. The two descriptions agree at the boundary: at
$\kappa=1/2$ (and by symmetry at $\kappa=2$) we get $\theta=7/9$,
$\sqrt{9-8\theta}=5/3$ and $n^{-}=\tfrac{2}{3}m$, so the concentrated program
and its maximizer are continuous in $\kappa$. Collecting,
\begin{equation}\label{eq:nstar-general}
\frac{n^{*}}{1-\lambda}=
\begin{cases}
\tfrac{1}{2}\bigl(3-\sqrt{9-8\theta(\kappa)}\bigr), & \tfrac12\le\kappa\le2,\\[4pt]
\tfrac{2}{3}, & \text{otherwise},
\end{cases}
\end{equation}
and since $\theta$ ranges over $[3/4,7/9]$ on $[1/2,2]$, the ratio ranges
over $\bigl[\tfrac{3-\sqrt3}{2},\tfrac{2}{3}\bigr]$, attaining its minimum at
$\kappa=1$ and its maximum at every $\kappa\notin(1/2,2)$. This proves
\eqref{eq:band} as an exact and attained band rather than a numerical
finding, and it supersedes the regularity condition under which the theorem
was originally stated: the concentrated program is unimodal on $(0,1-\lambda)$
for every productivity ratio, so no quasiconcavity assumption is needed. The
monotonicity in $\lambda$ is immediate, because $\lambda$ enters
\eqref{eq:nstar-general} only through the factor $m=1-\lambda$.

The economic content of \eqref{eq:nstar-general} is that the haircut is
almost insensitive to \emph{who} refreshes and sensitive only to \emph{that}
someone does. The ratio moves by less than four percentage points across the
entire range of productivity ratios, which is why
Remark~\ref{rem:two-inputs}'s claim about estimation requirements holds: an
analyst who knows only $\lambda$ can set scope nearly optimally, whereas
coverage requires the split.

\section{Proof of Proposition~\ref{prop:regimes}}\label{app:regimes}

On the polluted region $s<\lambda$ the aware subject distorts by
$d^{*}=(\lambda-s)nQ/c_{d}$ from Theorem~\ref{thm:pollution}(i), and receipts
are scaled by effective quality $\tilde q(s)=1-d^{*}(s)$, so the objective
\eqref{eq:program} becomes
\begin{equation*}
W^{-}(n,s)=\tilde q(s)\,\Omega(n)\,Y^{E}(s)
-\tfrac{k_{S}w_{S}^{2}}{2}-\tfrac{k_{P}w_{P}^{2}}{2},
\end{equation*}
with $\tilde q'(s)=nQ/c_{d}>0$, while for $s\ge\lambda$ we have
$\tilde q\equiv1$ and $W^{+}=W^{E}$. The objective is continuous at
$s=\lambda$ and kinked there. Using
$\partial w_{S}/\partial s=n$, $\partial w_{P}/\partial s=-n$ and
$\partial Y^{E}/\partial s=n(k_{S}-k_{P})$, and evaluating at $s=\lambda$
where $w_{S}=0$, $w_{P}=T$ and $Y^{E}=k_{P}T$,
\begin{align}
\left.\frac{\partial W}{\partial s}\right|_{s=\lambda^{+}}
&=n\bigl[(k_{S}-k_{P})\Omega+k_{P}T\bigr]
=n(k_{S}+k_{P})\,w^{\mathrm{unc}},\label{eq:above}\\
\left.\frac{\partial W}{\partial s}\right|_{s=\lambda^{-}}
&=n\Bigl[(k_{S}-k_{P})\Omega+k_{P}T
+\tfrac{Q}{c_{d}}\,\Omega\,k_{P}T\Bigr].\label{eq:below}
\end{align}
The extra term in \eqref{eq:below} is the pollution-relief value of raising
coverage: a marginal increase in $s$ lowers the distortion and so raises
$\tilde q$, which is worth $\tilde q'\Omega Y^{E}=(nQ/c_{d})\Omega k_{P}T$.

Regime (a) follows from \eqref{eq:above}: the derivative from above is
positive, so the optimum lies strictly inside the no-pollution region,
exactly when $w^{\mathrm{unc}}>0$, which by \eqref{eq:interiorband} is
$k_{S}/k_{P}>n/a_{2}$ with $a_{2}=2-2\lambda-n=2\Omega/n$. Regime (b) is the
boundary case $w^{\mathrm{unc}}=0$, at which the subject is exactly covered
and holds no net claim.

Regime (c) requires the derivative from below to be negative at the kink, so
that welfare rises as coverage falls below the floor. By \eqref{eq:below}
this is
\begin{equation*}
(k_{P}-k_{S})\Omega-k_{P}T>\tfrac{Q}{c_{d}}\,\Omega\,k_{P}T .
\end{equation*}
Substituting $\Omega=na_{2}/2$ and $T=na_{1}$ with $a_{1}=1-\lambda-n$ and
multiplying by $2c_{d}/n$ gives exactly \eqref{eq:tolerate}. Equivalently,
\begin{equation*}
(k_{S}+k_{P})\,w^{\mathrm{unc}}\;<\;-\tfrac{Q}{c_{d}}\,\Omega\,k_{P}T,
\end{equation*}
so tolerated pollution requires the unconstrained stake to be not merely
negative but negative by more than the pollution-relief value. This is the
formal content of the reading given in the statement: the platform must be
the strongly dominant refresher, so that $w^{\mathrm{unc}}$ is far below
zero, \emph{and} the distortion technology must be stiff, so that $c_{d}$ is
large and the relief term is small. Either condition alone is insufficient,
which is why regime (c) is a corner of the parameter space rather than the
generic case.

\section{General private flows}\label{app:flows}

Section~\ref{sec:optimal} sets $u_{S}=u_{P}=0$. Restoring them changes the
accounting but not the structure. From \eqref{eq:wS}--\eqref{eq:Omega},
\begin{equation*}
T(n)=u_{S}+u_{P}+R(n)-\lambda n,\qquad \Omega(n)=T(n)+C(n),
\end{equation*}
and the instrument collapse of Proposition~\ref{prop:collapse}(i) becomes
$w_{S}=u_{S}+(s-\lambda)n$ and $w_{P}=u_{P}+R(n)-sn$, so coverage remains the
single static instrument. Corollary~\ref{cor:stake}'s stake profile is
unchanged as a statement about $w_{S}^{\star}$, because it is derived at
fixed $T$ and does not reference how $T$ is composed. The only substantive
change is the map from the optimal stake to the optimal coverage: instead of
\eqref{eq:sstar} we obtain
\begin{equation}\label{eq:sstar-general}
s^{*}(n)=\lambda+\frac{w_{S}^{\star}(n)-u_{S}}{n}.
\end{equation}
A subject who already receives a private flow $u_{S}$ from the relationship
needs proportionally less explicit coverage to reach the same stake, and the
authenticity floor $s^{*}\ge\lambda$ now holds if and only if
$w_{S}^{\star}\ge u_{S}$ rather than $w_{S}^{\star}\ge0$. The floor is
therefore harder to clear when the subject's outside private flow is large,
which is the formal version of a familiar objection: a platform can point to
the free service a user already enjoys as a reason to pay less, and the model
agrees arithmetically while noting that the substitution leaves the stake,
and hence refresh, exactly where it was. The comparative statics of
Theorem~\ref{thm:secondbest} carry over with $T$ and $\Omega$ redefined; the
closed forms of parts (iii) and (iv) are specific to $u_{S}=u_{P}=0$ and to
uniform demand, and \eqref{eq:nstar-general} should be recomputed for other
primitives.

\section{Information extensions}\label{app:info}

Assumption~\ref{as:completeinfo} is maintained throughout the body. This
appendix records what relaxing it adds, and why we regard the additions as
the least memory-specific part of the problem.

\emph{Hidden fidelity.} Suppose the seller observes $\tilde q$ and the buyer
observes only its class distribution. Since willingness to pay is
$A_{b}q_{u}$ and use-indexed fidelity is increasing in $\tilde q$, a uniform
price prices the conditional mean, high-fidelity artifacts withdraw, and the
pool unravels in the manner of \cite{akerlof1970}. Two features of our object
change the repair rather than the diagnosis. First, the certifiable object is
not authenticity but \emph{performance on a named use class}, because
Lemma~\ref{lem:no-score} rules out a single scalar quality index; the
certificate must therefore be indexed by buyer, task, and host model, which
is a richer and more expensive credential than a provenance attestation.
Second, a monopoly certifier has the Lizzeri incentive to coarsen what it
reveals \cite{lizzeri1999}, and coarsening is especially damaging here
because it collapses precisely the use-indexing that makes the certificate
informative. This is a second and independent argument for the no-fusion rule
of Section~\ref{sec:nofusion}.

\emph{Evaluation leaks the asset.} A buyer who must assess fit before paying
learns some of what he would be paying for, which is the disclosure problem
of \cite{antonyao1994,antonyao2002}. The severity is indexed by the
entanglement coefficient of Definition~\ref{def:kappa}: for low
$\kappa^{\star}$ classes a garbled sample is nearly as informative as the
artifact and leaks little, whereas for high $\kappa^{\star}$ classes any
sample informative about fit is also informative about the person. The
implied market design is therefore class-dependent, moving from open samples
at low $\kappa^{\star}$ to query sandboxes and audited replay at high
$\kappa^{\star}$, where the buyer evaluates through an interface that returns
task performance without releasing the state.

\emph{What does not change.} The results of
Sections~\ref{sec:results}--\ref{sec:optimal} are statements about
distortions that survive symmetric information, so none of them is repaired
by these additions and none is driven by them. The certifier's economically
correct role was already pinned down under complete information in
Remark~\ref{rem:why-tau}: make $\tilde q$ contractible so that coverage can
be quality-indexed, which is what shuts down the selection effect of
Proposition~\ref{prop:selection}. Certification serves authorization and does
not substitute for it, and that ordering is unaffected by who knows what.

\bibliographystyle{IEEEtran}
\bibliography{references}

\end{document}